\documentclass[final,12pt]{clear2025} 

\title[Estimating Heterogeneous Causal Effects with Tree-Based Methods under Imperfect Compliance]{Estimating Heterogeneous Causal Effects with Tree-Based Methods under Imperfect Compliance}
\usepackage{times}
\usepackage{algorithm} 
\usepackage{multirow}
\usepackage{dsfont} 
\usepackage{mathtools} 
\usepackage{booktabs} 
\usepackage{bm} 
\usepackage{physics} 
\input{ee.sty} 
\usepackage{float}
\usepackage{comment}
\usepackage{thmtools}
\usepackage{enumitem}
\usepackage{array}
\usepackage{rotating}
\usepackage[section]{placeins}
\usepackage{subcaption}
\usepackage{dsfont}
\usepackage{longtable} 
\usepackage{pdflscape}
\newenvironment{notes}
  {\par\medskip
   \begin{minipage}{\linewidth}
   \footnotesize\emph{Notes: }}
{\end{minipage}}
\renewcommand{\proofname}{Proof}
\renewenvironment{proof}[1][\proofname]
{%
  \par\noindent
  {\bfseries\upshape #1.}\ %
}
{  \jmlrQED}

\clearauthor{
 \Name{Karolina Gliszczy\'nska-Schroeder} \Email{karolina.gliszczynska(at)vwl.uni-due.de}\\
 \addr Chair of Econometrics, Faculty of Business Administration and Economics, University of Duisburg-Essen, Universitätsstraße 12, 45117 Essen, Germany%
}

\begin{document}


\newtheorem{assum}{Assumption}


\maketitle


\begin{abstract}%
We study the impact of conditional complier average causal effect (CCACE) estimation methods on the performance of subgroup discovery and heterogeneous causal effect estimation under imperfect compliance. Building on the Bayesian Causal Forest with Instrumental Variable (BCF-IV; \citet{bargaglibcf}) method, we introduce a two-step, model-agnostic approach that allows any suitable machine learning method to be used for the CCACE estimation in the first step. Specifically, we implement two non-Bayesian tree-based methods, both using forest-based learners: DRRF-IV, a debiased transformed-outcome regression-forest approach, and a GRF-based IV adaptation of the generalized random forest framework. Through a simulation study, we assess the precision, bias, and the ability to correctly identify the underlying subgroup structure of the proposed methods relative to BCF-IV.  The results show that the non-Bayesian methods perform competitively across the considered simulation settings, with performance improving for larger sample sizes and moderately large treatment effects, while reducing computational runtime. We apply our new methods by revisiting an empirical study that examines the effect of prompt admission to intensive care units (ICU) on 28-day mortality across 48 UK National Health Service hospitals. While previous work finds no significant overall treatment effect, we investigate whether subgroups of patients may benefit more from prompt ICU admission.

\end{abstract}

\begin{keywords}%
    Heterogeneous treatment effects,  instrumental variables, subgroup
discovery, causal forest
\end{keywords}

\begin{jelclass}
    C14, C21, C26, C52
\end{jelclass}

\section{Introduction}
\setcounter{footnote}{0}
Randomized controlled trials (RCTs) have been considered the gold standard for causal inference \citep{RCT_gold_2021}. However, RCTs are often infeasible in practice, due to ethical or financial constraints \citep{ RCT_goldstandard, CI_and_Obsdata}. As a consequence, researchers frequently rely on observational data, which are often affected by treatment endogeneity, that is, by correlation between the treatment variable and unobserved factors influencing the outcome. Instrumental variable (IV) methods \citep{Wooldridge} offer a strategy for addressing endogeneity  by using instruments that affect the treatment but have no direct effect on the outcome. Traditionally, instrumental variable methods have been used primarily to identify local causal effects for complier subpopulations, most notably the Local Average Treatment Effect (LATE) \citep{Angrist_Imbens_1994, ACR_IV, angrist_imbens_iv_96}. Although some recent contributions study identification of the population-wide Average Treatment Effect (ATE) under IV assumptions, these approaches typically require stronger conditions \citep{wang_tchetgen_2018}. While these methods provide internally valid causal estimates, they summarize effects at an aggregate level and may conceal heterogeneity in treatment effects across subpopulations. Specifically, this means that the effect of treatment for compliers may differ across patient or covariate groups, even when the overall LATE or CACE is close to zero.

This paper contributes to the literature on causal forest-based methods for heterogeneous treatment effect estimation \citep{athey_recursive_2016, wager_estimation_2018, athey_generalized_2019,oprescu2019orthogonal,hahn2020bayesian} in the context of instrumental variable settings. In particular, we focus on treatment endogeneity and imperfect compliance, meaning that treatment assignment does not necessarily coincide with the actual received treatment. In this setting, the target parameter is the conditional complier average causal effect (CCACE), rather than the standard CATE. Specifically, we study subgroup discovery in tree-based IV models, aiming to identify interpretable subsamples for which complier treatment effects might differ.

Recent contributions have combined tree-based methods with treatment endogeneity using instrumental variables. For example, the Generalized Random Forest (GRF) framework of \citet{athey_generalized_2019} provides a general approach for estimation problems defined by local moment conditions, including an IV framework as a special case. While GRF-IV can estimate heterogeneous treatment effects, it does not directly produce a single interpretable subgroup partition, since predictions are aggregated across many trees \citep{bargaglibcf}.

Closely related, both \citet{Wang_2021_CT_IV} and \citet{stoffi_gnecco_CTIV} develop causal tree approaches for IV settings (CT-IV), which estimate treatment effects at the leaf level. In \citet{Wang_2021_CT_IV}, the single-tree CT-IV serves as the base learner for the broader instrumental variable forest (IVF). While the tree component yields an interpretable partition, the full forest estimator aggregates across many trees. 
In parallel, \citet{stoffi_gnecco_CTIV} proposed an alternative CT-IV framework, which address the problem of treatment endogeneity by incorporating a propensity-score-like correction, replacing the treatment variable with an instrument. Although these CT-IV approaches differ in implementation, they primarily use tree-based partitions as a way to estimate heterogeneous treatment effects. The resulting leaves can be interpreted as subgroups, but subgroup discovery is not separated from the estimation task and the primary focus is on improving the accuracy of heterogeneous treatment effect estimation. The Bayesian Causal Forest with IV (BCF-IV) \citep{bargaglibcf} extends this line of work by introducing a two-step approach that directly supports subgroup detection, but relies on Bayesian modeling assumptions. Furthermore, in its original formulation and evaluation, subgroup discovery is studied primarily in settings with binary covariates.

Our work builds directly on the BCF-IV framework but replaces its Bayesian discovery step with a model-agnostic discovery step in the spirit of \citet{Two_step_pragmatic_subgroup_discovery}.  We keep the overall two-step structure largely unchanged, because it usefully separates the task of identifying subgroups from the task of estimating subgroup-specific causal effects. Following the honesty idea of \citet{athey_recursive_2016}, the sample is split into two disjoint subsamples. The first subsample is used to estimate heterogeneous treatment effects, which are then treated as outcomes in a CART algorithm \citep{cart84} to get a tree structure capturing treatment effect heterogeneity. In the second step, the tree structure is held fixed and used to estimate subgroup-specific causal effects via IV regressions within the terminal nodes. 

Thus, the main modification in our paper is therefore not the two-step logic itself, but the estimator used in the discovery step. Recent work has already explored replacing the original BCF-IV estimator with alternative Bayesian tree learners. For example, \citet{maßmann2026shrinkagebayesiancausalforest} introduce Shrinkage Bayesian Causal Forest with Instrumental Variables (SBCF-IV), which replaces the original BCF-IV discovery step by a sparsity-inducing SoftBART estimator. Our contribution takes this idea one step further by generalizing the discovery step itself. In particular, we allow any suitable machine learning method to estimate treatment effects, thereby removing the restriction to Bayesian models in the first step. Furthermore, we introduce two tree-based implementations within this framework. The first is the Debiased Robust Random Forest with IV (DRRF-IV), which is inspired by the double/debiased machine learning approach of \citet{chernozhukov_doubledebiased_2018}, but uses estimated nuisance parameters like  propensity scores and conditional outcome expectations to construct a transformed outcome. The second is a GRF-IV adaptation for subgroup identification. In addition, we refine existing evaluation metrics to accommodate continuous covariates.  We are interested in how effectively these proposed estimators capture heterogeneous treatment effects under varying subgroup structures and treatment effect sizes, and how the choice of method has a systematic impact in terms of accuracy and interpretability. 

The paper is structured as follows. Section~\ref{ch:Tree IV} reviews the potential outcome framework and outlines the required modifications for the IV setting. It further introduces the general two-step framework and presents the DRRF-IV and GRF-IV implementations. Section~\ref{ch:sim_study} reports results from a simulation study that compares the performance of the proposed estimators across multiple heterogeneity scenarios and treatment effect sizes.
Section~\ref{ch:emp_app} provides an empirical application using ICU Spotlight data \citep{Keele_IV}, where prompt admission to an intensive care unit (ICU) may affect patient mortality, and bed availability serves as an instrument. Building on their previous work, which finds no statistically significant average effects, we investigate heterogeneity in treatment effects across patient groups and demonstrate how the methods identify subgroups. 
Section \ref{ch:Discussion} concludes and discusses potential extensions.

\section{Tree-based methods with instrumental variables}
\label{ch:Tree IV}
We use the potential outcome framework and notation of Rubin’s causal model \citep{rubin_estimating_1974, rubin_bayesian_1978, imbens_rubin_2010}. Consider a set of $N$ units, indexed by $i=1,\dots N$.  For each unit $i$ let $D_i \in \{0,1\}$ be the binary treatment indicator and let $X_i \in \mathbb{R}^K$ be a $K$-dimensional covariate or feature vector. The potential outcome for unit $i$ under treatment $d \in \{0,1\}$ is denoted by $Y_i(d)$, where $Y_i(0)$ represents the outcome under control, and $Y_i(1)$ the outcome under treatment. The individual treatment effect (ITE) for a given unit is then defined as
\begin{equation}\label{eq:ITE}
\tau_i \coloneqq Y_i(1) - Y_i(0).
\end{equation}
The Causal inference framework relies on a set of foundational assumptions, which we provide, before relaxing unconfoundedness in the IV framework introduced below.  The first assumption is the Stable Unit Treatment Value Assumption (SUTVA), which rules out two potential complications. First, it excludes interference between units, so that the potential outcome of unit $i$ is affected only by its own treatment status and not by the treatment assignment of other units. Second, it requires that the treatment is well defined, meaning that there are no different or hidden versions of the same treatment level.

\begin{restatable}[Stable unit treatment value assumption (SUTVA)]{assum}{sutvaAssumption}
\label{assump:SUTVA}
\begin{align*}
\label{SUTVA}
\text{If } D_i = d \text{, then } Y_i(d) = Y_i^{obs} \text{ , } \forall d \in \{0, 1 \} \text{ , } \forall ~ i= 1, ..., N . 
\end{align*}
\end{restatable}
Under SUTVA, each unit has two potential outcomes, but only one of them can be observed for a given unit. Hence, the observed outcome can be written as
$$Y_i^{obs} = Y_i(1)D_i + Y_i(0) (1-D_i)$$.
The second assumption is unconfoundedness, which requires that, conditional on $X_i$, the treatment is independent of the potential outcomes. 

\begin{restatable}[Unconfoundedness]{assum}{unconfoundednessAssumption}
\label{assump:unconfoundedness}
\begin{align*}
D_i \perp\!\!\!\perp \left(Y_i(1), Y_i(0)\right) \mid X_i.
\end{align*}
\end{restatable}

The third assumption is positivity (also referred to as overlap). Let the propensity score denote the conditional probability of receiving treatment given covariates \begin{equation}
\label{eq:propensity}
p(x) \coloneqq \operatorname{Pr}(D_i=1|X_i=x).
\end{equation} 
Then the positivity or overlap assumption requires that for all $x$ in the support of $X_i$, the probabilities are bounded away from 0 and 1. 
\begin{restatable}[Overlap]{assum}{overlapAssumption}
\label{assump:overlap} 
\begin{align*}
0 < p(X_i=x) < 1 \quad \forall x \text{ in support of } X_i.
\end{align*}
\end{restatable}
Since the ITE in Equation~\eqref{eq:ITE} is not observable and therefore not directly identifiable, the literature commonly focuses on aggregated treatment effects such as the average treatment effect (ATE):
\begin{equation} \label{eq:ATE}
    \tau := \mathbb{E}[Y_i(1)-Y_i(0)],
\end{equation}
or the conditional average treatment effect (CATE)
\begin{equation} \label{eq:CATE}
    \tau(x) := \mathbb{E}[Y_i(1)-Y_i(0)\mid X_i = x].
\end{equation}
Note, that under the Assumptions~\ref{assump:SUTVA}–\ref{assump:overlap}, CATE can be identified from observable outcomes $Y_i^{\text{obs}}$
\begin{align}
\mathbb{E}[Y_i^{\text{obs}} \mid D_i = 1, X_i = x]
-
\mathbb{E}[Y_i^{\text{obs}} \mid D_i = 0, X_i = x].
\end{align}
In observational studies, the treatment variable $D_i$ is often endogenous \citep{ZHONG2021108967}. Even when accounting for observed covariates, treatment decisions may still be influenced by unobserved factors that are also related to $Y_i$. Endogeneity violates the unconfoundedness assumption and can lead to biased estimates \citep{Dominici2021From}. To address this problem, instrumental variable (IV) methods are a popular alternative for identifying causal effects \citep{Keele_IV}. Let $Z_i$ denote a binary instrument representing treatment assignment, for unit $i$. Following the notation of \citet{bargaglibcf}, we distinguish between the assigned treatment $Z_i$ and the actual received treatment $D_i$, allowing for imperfect compliance. Let $D_i(z)$ denote the treatment that unit $i$ would receive under assignment $Z_i=z$. The realized treatment is given by $D_i = D_i(Z_i)$.
The received treatment is a function of the treatment assignment $D_i(z)$. Then, potential outcomes can be written as $Y_i(z) :=Y_i(z, D_i(z))$. Depending on how units respond to their treatment assignments, we can categorize them into four subgroups: 

\begin{enumerate}
    \item \textbf{Compliers}: Units who take the treatment if and only if assigned to it $D_i(Z_i = 0) = 0$ and $D_i(Z_i = 1) = 1$.
    \item \textbf{Defiers}: Units who do the opposite of their assignment; $ D_i(Z_i = 0) = 1$, $ D_i(Z_i = 1) = 0$.
    \item \textbf{Always-takers}: Units who always receive treatment regardless of assignment; $D_i(Z_i = 0) = 1$, $ D_i(Z_i = 1) = 1$.
    \item \textbf{Never-takers}: Units who never receive treatment, regardless of assignment; $ D_i(Z_i = 0) = 0$, $ D_i(Z_i = 1) = 0$.
\end{enumerate}

To enable causal interpretation in the presence of endogeneity, we replace Assumption~\ref{assump:unconfoundedness} with a set of IV assumptions, following \citet{stoffi_gnecco_CTIV}. The first assumption is monotonicity, which rules out the existence of defiers.

\begin{restatable}[Monotonicity]{assum}{monotonicity}
\label{assump:monotonicity}
\begin{align*}
D_i(1) \geq D_i(0) \quad \text{for all } i = 1, \ldots, N.
\end{align*}
\end{restatable}

 In the case of one-sided noncompliance, such as when only those assigned to treatment can access it, this assumption is automatically satisfied. In two-sided noncompliance settings, the monotonicity assumption is generally plausible, though it cannot typically be verified empirically by the researcher (see \citet{stoffi_gnecco_CTIV} for a detailed discussion). The next assumption is the existence of compliers. This assumption requires that the complier subpopulation exists with positive probability.
 
\begin{restatable}[Existence of Compliers]{assum}{existenceCompliers}
\label{assump:existence_compliers}
The complier subpopulation exists with positive probability
\begin{align*}
\operatorname{Pr}\left(D_i(1) > D_i(0)\right) > 0.
\end{align*}
\end{restatable}
Next, we assume unconfoundedness of $Z_i$. Analogous to Assumption~\ref{assump:unconfoundedness}, this assumption requires that the instrument is as good as randomly assigned, conditional on the covariates. It ensures that the instrument is independent of both the potential outcomes and the potential treatment statuses, given $X_i$.
\begin{restatable}[Unconfoundedness of the IV]{assum}{UnconfoundednessIV}
\label{assump:Unconfoundedness_IV}
\begin{align*}
Z_i \perp\!\!\!\perp \Big( \{Y_i(z,d): z,d\in\{0,1\}\}, D_i(0), D_i(1) \Big)\mid X_i.
\end{align*}
\end{restatable}

Further we have the exclusion restriction, which asserts that the instrument affects the outcome only through its influence on the treatment. 
\begin{restatable}[Exclusion Restriction]{assum}{ExclusionRestriction}
\label{assump:Exclusion_Restriction}
\begin{align*}
Y_i(z,d) = Y_i(d) \quad \forall z,d \in \{0,1\}.
\end{align*}
\end{restatable}
That is, conditional on the treatment received, the instrument has no direct effect on the outcome. Lastly, we define the instrument propensity score as

\begin{equation} \label{eq:instrument_propscore}
    p_Z(x) := \operatorname{Pr}(Z_i = 1 \mid X_i=x),
\end{equation}
and assume overlap for the instrument. 
\begin{restatable}[Instrument Positivity]{assum}{InstrumentPositivity}
\label{assump:overlap_iv}
\begin{align*} 
0 < p_Z(x) < 1 \quad \forall ~x ~\text{ in the support of } X_i.
\end{align*}
\end{restatable}
Under Assumptions\footnote{Technically, one doesn't need Assumption \ref{assump:overlap_iv} here to identify CACE. However, we need it to construct the DRRF-IV estimator in Section \ref{sec:RF with IV}.} \ref{assump:monotonicity}-\ref{assump:overlap_iv}, the causal effect of the treatment can be identified for the subpopulation of compliers. This effect is known as the Complier Average Causal Effect (CACE) or LATE \citep{Angrist_Prischke_2008}, and captures the average effect of the treatment on the outcome $Y_i$ for units who comply with their assignment \citep{imbens_causal_2015}
\begin{align} \label{eq::CACE}
\tau_{\text{CACE}} := \frac{\mathbb{E}[Y_i \mid Z_i = 1] - \mathbb{E}[Y_i \mid Z_i = 0]}{\mathbb{E}[D_i \mid Z_i = 1] - \mathbb{E}[D_i \mid Z_i = 0]} = \frac{\text{ITT}}{\pi_C}.
\end{align}

We focus on the Conditional Complier Average Causal Effect (CCACE), which is conceptually analogous to the CATE in Equation~\eqref{eq:CATE}, but extends the CACE by accounting for heterogeneity across observed covariates. The CCACE is defined as follows
\begin{align} \label{eq::CCACE}
    \tau_{\text{CCACE}}(x) := \frac{\mathbb{E}[Y_i \mid Z_i = 1, X_i = x] - \mathbb{E}[Y_i \mid Z_i = 0, X_i = x]}{\mathbb{E}[D_i \mid Z_i = 1, X_i = x] - \mathbb{E}[D_i \mid Z_i = 0, X_i = x]} = \frac{\text{ITT}(x)}{\pi_C(x)},
\end{align}
 where the numerator $ \text{ITT}(x) $ denotes the conditional intention-to-treat effect and the denominator $ \pi_C(x) $ represents the conditional proportion of compliers \citep{angrist_imbens_iv_96}. 

\subsection{General Two-Step Approach}
\label{ch:2SP}


The pragmatic two-step framework \citep{Two_step_pragmatic_subgroup_discovery} is a model-agnostic approach for identifying interpretable subgroups.  In the first step, heterogeneous treatment effect estimates $\hat{\tau}(x)$ are obtained using a suitable machine learning method, such as causal forests, Bayesian causal forests (BCF) \citep{hahn2020bayesian}, or meta-learners \citep{kunzel_metalearners_2019}, based on the full covariate vector $X_i$. In the second step, a CART model regresses the estimated effects $\widehat{\tau}(X_i)$ on a pre-specified subset of interpretable covariates $X_i^{\text{interp}}\subseteq X_i$ to construct subgroups. Note, that treatment effects are not re-estimated within the discovered leaves, but it is possible to estimate the ATE of each subgroup by separate regression models, see Algorithm \ref{algo:komura_two_step} in Appendix \ref{app:2step_explaination}. 

\citet{bargaglibcf} propose the BCF-IV algorithm, which uses an honest sample splitting approach together with Bayesian Causal Forests in the discovery step to detect heterogeneity, followed by an inference step that estimates the treatment effect $\tau_{\text{CCACE}}(x)$ from Equation~\eqref{eq::CCACE}, within the discovered subgroups.  Our contribution builds on the BCF-IV framework, while relaxing its reliance on Bayesian first-stage estimation. Inspired by the model-agnostic idea of \citet{Two_step_pragmatic_subgroup_discovery}, we generalize the discovery step and allow any suitable machine learning method to estimate $\widehat{\text{ITT}}(x)$ and
$\hat{\pi}_C(x)$, which are used solely to construct an interpretable partition
of the covariate space via CART. While the framework is agnostic with respect to the choice of
first-stage estimator, we focus on tree-based methods, which are described in Sections~\ref{sec:RF with IV}–\ref{sec:BCF_IV}, for estimating
$\widehat{\text{ITT}}(x)$ and $\hat{\pi}_C(x)$.
Holding the resulting partition fixed, the CCACE is then
re-estimated within each leaf using IV regression on an
independent inference sample, yielding the final
$\hat{\tau}_{\text{CCACE}}(x)$.
Algorithm~\ref{algo:Two-Step} summarizes the procedure.

\FloatBarrier 
\vspace{0.5cm}
\begin{algorithm}[h] 
\caption{Two-Step Procedure with Instrumental Variables}
\label{algo:Two-Step}
\textbf{Inputs:} $N$ units $i = 1, \dots, N$ with data $(X_i, Z_i, D_i, Y_i)$ \\
\textbf{Outputs:} 
\begin{itemize}
    \item Tree structure identifying heterogeneity in causal effects
    \item  Estimates $\hat\tau_{\text{CCACE}}(x)$
\end{itemize}

\textbf{Step 0: Honest Sample Splitting}
\begin{itemize}
    \item Randomly split the data into a discovery set $\mathcal{I}_{\text{dis}}$ and an inference set $\mathcal{I}_{\text{inf}}$.
\end{itemize}
\textbf{Step 1: Discovery (on $\mathcal{I}_{\text{dis}}$)}
\begin{enumerate}
    \item Estimate $\widehat{\text{ITT}}(x)$ using an estimator described in Sections~\ref{sec:RF with IV}–\ref{sec:BCF_IV}
    \item Estimate compliance probabilities $\hat{\pi}_C(x)$ using an estimator described in Sections~\ref{sec:RF with IV}–\ref{sec:BCF_IV}
    \item Compute CCACE for $\mathcal{I}_{\text{dis}}$ \[
\hat\tau^{\,dis}_{\text{CCACE}}(x)
=
\widehat{\text{ITT}}(x)/\hat\pi_C(x)
\]
\item Fit a decision tree to $(\hat\tau^{\,dis}_{\text{CCACE}}(x), X_i)$
 to identify subgroups
\end{enumerate}

\textbf{Step 2: Inference (on $\mathcal{I}_{\text{inf}}$)}
\begin{enumerate}
    \item For each terminal node $\mathcal{L}_j$, $j=1,\ldots,J$, estimate the subgroup-specific effect $\hat{\tau}_{\text{CCACE},\mathcal{L}_j}$.
    \item Run weak-instrument tests and discard leaves with weak instrument strength from the set of reported subgroups.
\end{enumerate}
\end{algorithm}
\FloatBarrier 
\vspace{0.5cm}

Let $\mathcal{L}_1,\ldots,\mathcal{L}_J$ denote the $J$ terminal nodes or leaves, of the discovered tree, where each leaf corresponds to one discovered subgroup.
In particular, following \citet{bargaglibcf}, and noting that 
$Z_i, D_i \in \{0,1\}$, for each terminal node 
$\mathcal{L}_j$ obtained in the discovery step, the subgroup-specific CCACE is defined as
\begin{align}
{\tau}_{\text{CCACE},\mathcal{L}_j}
=
\frac{
{\mathbb{E}}[Y_i \mid Z_i=1,\, X_i\in\mathcal{L}_j]
-
{\mathbb{E}}[Y_i \mid Z_i=0,\, X_i\in\mathcal{L}_j]
}{
{\mathbb{E}}[D_i \mid Z_i=1,\, X_i\in\mathcal{L}_j]
-
{\mathbb{E}}[D_i \mid Z_i=0,\, X_i\in\mathcal{L}_j]
}.
\end{align}

Under Assumptions \ref{assump:monotonicity}-\ref{assump:Exclusion_Restriction} and with a binary $Z_i$ and $D_i$,
the corresponding IV estimator $\widehat{\tau}_{\text{CCACE},\mathcal{L}_j}$ coincides with the two-stage least squares (2SLS)
estimator for the CCACE within each subgroup (see \citet{bargaglibcf} for details).
Finally, for $X_i \in \mathcal{L}_j$, we define
\[
\widehat{\tau}_{\text{CCACE}}(X_i) =  \widehat{\tau}_{\text{CCACE},\mathcal{L}_j}.
\]

Subgroups with weak instrument strength are discarded from the set of reported valid subgroups in order to mitigate potential bias arising from weak identification or a small proportion of compliers within a leaf. To control for false discoveries due to multiple subgroup-level inference, we adjust $p$-values using Holm’s method \citep{Holm1979}, as implemented in \citet{bargaglibcf}. Alternative familywise error rate or false discovery rate controlling procedures are beyond the scope of this paper.

 The next sections illustrate how different estimators can be incorporated into the discovery step of Algorithm \ref{algo:Two-Step}. Section~\ref{sec:RF with IV} introduces a debiased robust transformed-outcome regression forest for estimating $\text{ITT}(x)$. Section~\ref{sec:CF with IV} present GRF-IV by \citet{athey_generalized_2019}, which directly targets $\tau_{\text{CCACE}}(x)$. Finally, Section~\ref{sec:BCF_IV} considers the original BCF-IV procedure of \citet{bargaglibcf}. We aim to investigate how alternative estimation strategies influence subgroup discovery within the general two-step framework.

\subsection{Transformed Outcome Regression with Instrumental Variables}
\label{sec:RF with IV}

In this section, we derive an estimator for $\text{ITT}(x)$ from Equation \eqref{eq::CCACE}, which is based on the  double/debiased machine learning (DML; \citet{chernozhukov_doubledebiased_2018}) and transformed-outcome (uplift) approaches \citep{gutierrez_causal_2017}. Before we extend this idea to the IV setting by constructing an orthogonal IV score in transformed–outcome form, we briefly discuss both methods. The DML framework builds on two central ideas from semiparametric inference: sample splitting and the use of Neyman–orthogonal score functions, which are robust to first-order nuisance estimation errors. In our setting, for the parameter of interest $\theta(x)=\text{ITT}(x)$, orthogonality is defined as follows.

\begin{definition} \label{def:neymann_orthogonal}
Let $\psi_i(\theta(x),\eta)$ denote a score function for unit $i$, identifying
$\theta(x)$ through the conditional moment condition
\[
\mathbb E[\psi_i(\theta(x),\eta)\mid X_i=x]=0,
\]
where $\eta$ is a nuisance function.
The score is called Neyman-orthogonal with respect to $\eta$
if
\[
\left.
\frac{\partial}{\partial\lambda}
\mathbb E[
\psi_i(\theta(x),\eta+\lambda h)
\mid X_i=x]
\right|_{\lambda=0}
=0
\]
for any perturbation $h(\cdot)$.
\end{definition}

To illustrate orthogonal score construction, consider the partially linear model
\begin{align}
    Y_i&=\tau D_i+g(X_i)+\varepsilon_i,\qquad &&\mathbb E[\varepsilon_i\mid X_i,D_i]=0, \notag \\
    D_i&=p(X_i)+v_i,\qquad &&\mathbb E[v_i\mid X_i]=0,
\end{align}
with $\tau$ from Equation~\eqref{eq:ATE}.
Let $p(x):=\mathbb{E}[D_i\mid X_i=x]$ and $m(x):=\mathbb{E}[Y_i\mid X_i=x]$ be the nuisance functions. Throughout this section, $m(x)$ and $p(x)$ denote the nuisance functions evaluated at a generic covariate value $x$, while $m(X_i)$ and $p(X_i)$ denote the same functions evaluated at the observed covariates of unit $i$.   Under the partially linear model, we have
\begin{align}
\tilde Y_i &:= Y_i- m(X_i), \\
\tilde D_i &:= D_i- p(X_i).
\intertext{and therefore}
Y_i-m(X_i)& =\tau(D_i-p(X_i))+\varepsilon_i.
\end{align}
Hence, estimation of $\tau$ reduces to a final regression of $\tilde Y_i$ on $\tilde D_i$, which corresponds to the orthogonal score function
\[
\psi_i(\tau,m,p)
=
\big(Y_i-m(X_i)-\tau(D_i-p(X_i))\big)\,(D_i-p(X_i))
\]
and satisfies the moment condition
$\mathbb E[\psi_i(\tau,m,p)\mid X_i=x]=0$. Moreover, this score is also orthogonal with respect to the nuisance functions $(m,p)$ in the sense of Definition \ref{def:neymann_orthogonal}.

Transformed-outcome methods, on the other hand, construct an outcome transformation $Y_i^\star$ whose conditional expectation equals the causal effect of interest. This allows standard machine learning methods to be applied directly to $Y_i^\star$, since
\begin{align*}
\mathbb{E}[Y_i^\star\mid X_i=x] = \tau(x).
\end{align*}
For example, \citet{wager_estimation_2018} note that the CATE can be estimated by applying standard machine learning methods to a transformed outcome
\begin{align}
Y_i^\star :=
\frac{D_iY_i}{\hat{p}(X_i)}
-
\frac{(1-D_i)Y_i}{1-\hat{p}(X_i)},\end{align}
where $\hat p(x)$ denotes a consistent estimator of the propensity score. 
Under Assumptions ~\ref{assump:SUTVA}-\ref{assump:overlap}, this transformation satisfies
$$\mathbb{E}[Y_i^\star\mid X_i=x] =\tau(x).$$
Furthermore,  \citet{stoffi_gnecco_CTIV}  transform $Y_i$ using the instrument $Z_i$ and the corresponding propensity score $p_Z(X_i)$ from Equation \eqref{eq:instrument_propscore} and define
\begin{align} \label{eq:IV_tot}
Y_i^{\text{IV}} := \frac{(Z_i - p_Z(X_i)) Y_i}{p_Z(X_i)(1 - p_Z(X_i))},
\end{align}
which satisfies $\mathbb E\!\left[Y_i^{\text{IV}}\mid X_i=x\right] =
\text{ITT}(x)$. 

Building on Equation \eqref{eq:IV_tot}, we consider a residualized version of the outcome based on $m(x)$
and defined the transformed outcome as
\begin{align} \label{eq:ystariv}
Y_i^{\star\text{IV}} := \frac{(Z_i - p_Z(X_i))(Y_i - m(X_i))}{p_Z(X_i)(1 - p_Z(X_i))}.
\end{align}
Note that residualizing the outcome does not affect the identification of the target parameter, since
\begin{align*} 
\mathbb E[(Z_i-p_Z(x))m(x)\mid X_i=x]=0.
\end{align*}
Following the logic of prior work on transformed outcome trees for treatment effect estimation \citep{gutierrez_causal_2017, athey_recursive_2016}, we can formalize the result as follows.
\begin{theorem}
\label{thm:dr_itt}
Let $ m(x)=\mathbb{E}[Y_i\mid X_i=x]$ and $p_Z(x)=\operatorname{Pr}(Z_i=1\mid X_i=x)$ and suppose Assumptions~\ref{assump:monotonicity}–\ref{assump:Exclusion_Restriction} hold. 
Then $Y_i^{\star\operatorname{IV}}$ satisfies
\[
\mathbb{E}[Y_i^{\star\operatorname{IV}} \mid X_i=x] = \mathbb{E}[Y_i \mid Z_i = 1, X_i=x] - \mathbb{E}[Y_i \mid Z_i = 0, X_i=x] = \operatorname{ITT}(x).
\]
\end{theorem}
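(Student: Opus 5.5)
The proof is a direct conditional-expectation computation: I treat $p_Z(x)$ and $m(x)$ as fixed numbers given $X_i=x$ and condition further on $Z_i$. Write $p:=p_Z(x)$, which lies strictly in $(0,1)$ by Assumption~\ref{assump:overlap_iv}. This assumption is not listed in the theorem, but it is needed for the transformed outcome to be well defined at all, so I will invoke it explicitly. Also write $\mu_z(x):=\mathbb{E}[Y_i\mid Z_i=z,X_i=x]$. By the law of iterated expectations over $Z_i\in\{0,1\}$,
\[
\mathbb{E}[Y_i^{\star\mathrm{IV}}\mid X_i=x]
= p\cdot\frac{(1-p)(\mu_1(x)-m(x))}{p(1-p)} + (1-p)\cdot\frac{(-p)(\mu_0(x)-m(x))}{p(1-p)},
\]
since $Z_i-p$ equals $1-p$ on $\{Z_i=1\}$ and $-p$ on $\{Z_i=0\}$.

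Simplifying gives $(\mu_1(x)-m(x))-(\mu_0(x)-m(x))=\mu_1(x)-\mu_0(x)$. The $m(x)$ terms cancel, which is the precise form of the remark before the theorem that $\mathbb{E}[(Z_i-p_Z(x))m(x)\mid X_i=x]=0$. Equivalently, one can first split $Y_i^{\star\mathrm{IV}}=Y_i^{\mathrm{IV}}-\frac{(Z_i-p)m(x)}{p(1-p)}$, then use $\mathbb{E}[Z_i-p\mid X_i=x]=0$ and the stated property of \eqref{eq:IV_tot}. I will present the direct computation, since it also proves that property.

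The final step is to identify $\mu_1(x)-\mu_0(x)$ as $\mathrm{ITT}(x)$. As written in the theorem, this equality is the definition of the numerator in \eqref{eq::CCACE}, so nothing more is needed. If one wants $\mathrm{ITT}(x)$ interpreted causally, I would add a short remark. Write $Y_i=Y_i(Z_i,D_i(Z_i))$ and apply the exclusion restriction (Assumption~\ref{assump:Exclusion_Restriction}) to get $Y_i=Y_i(D_i(Z_i))$. Then Assumption~\ref{assump:Unconfoundedness_IV} gives $\mu_z(x)=\mathbb{E}[Y_i(D_i(z))\mid X_i=x]$. Hence the difference is the conditional causal effect of assignment. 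Under monotonicity it equals $\pi_C(x)$ times the complier effect, although that last fact is not required here.

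There is no real obstacle. The only point needing care is bookkeeping: $m(x)$ is the marginal mean $\mathbb{E}[Y_i\mid X_i=x]$ and not a $Z$-specific mean, and the cancellation holds for any function $m$ whatsoever. I would note this robustness explicitly, since it foreshadows why an estimated $\hat m$ does not bias the transformed outcome.
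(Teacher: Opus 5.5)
Your argument is correct, but it takes a slightly different and more informative route than the paper. The paper factors out $1/(p_Z(x)(1-p_Z(x)))$ and expands $(Z_i-p_Z(x))(Y_i-m(x))$. It then drops $p_Z(x)\,\mathbb{E}[Y_i-m(x)\mid X_i=x]$ because $m$ is the true conditional mean, reaching the intermediate form $(\mu_1(x)-m(x))/(1-p_Z(x))$. To finish it needs the mixture identity $m(x)=p_Z(x)\mu_1(x)+(1-p_Z(x))\mu_0(x)$, so it uses the definition of $m$ twice.

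You condition on $Z_i$ first, so the $m(x)$ terms cancel identically. This proves the identity for every fixed function $m$ of $x$, which is more than the theorem requires. It also makes Lemma~\ref{lem:orthogonality} immediate: $\Psi(\theta(x),m+\lambda h,p_Z)$ is constant in $\lambda$, not merely stationary at $\lambda=0$. The paper's expansion stays closer to the product form of the score that is differentiated in that lemma. However, it is longer and hides the fact that the choice of $m$ is irrelevant at the population level.

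Two of your side remarks apply equally to the paper's proof. Instrument positivity (Assumption~\ref{assump:overlap_iv}) is what the computation actually uses. Assumptions~\ref{assump:monotonicity}--\ref{assump:Exclusion_Restriction} matter only for the causal reading of $\mathrm{ITT}(x)$.

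One caution on your closing sentence. The invariance in $m$ holds only at the true $p_Z$: with a misspecified propensity $\tilde p_Z$, the coefficient on $m(x)$ is proportional to $\tilde p_Z(x)-p_Z(x)$. It also holds only for an $m$ that is fixed given $X_i$. An $\widehat m$ fitted in-sample on $\mathcal{I}_{\text{dis}}$, as in the paper's implementation, depends on $(Y_i,Z_i)$. So concluding that an estimated $\widehat m$ introduces no bias requires cross-fitting or an independent estimation sample.
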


\begin{proof}
See Appendix \ref{sec:proof}.
\end{proof}

Equation \eqref{eq:ystariv} is related to the orthogonal score
construction used in the DML framework. While DML orthogonalizes both $D_i$ and $Y_i$ via residual-on-residual regression and solves moment conditions explicitly, our approach use the orthogonal IV score as a pseudo–outcome. 
\begin{lemma}
\label{lem:orthogonality}
Let $\theta(x):=\text{ITT}(x)$ and define the score function
\[
\psi_i(\theta(x),m,p_Z)
:=
\frac{(Z_i-p_Z(X_i))(Y_i-m(X_i))}
     {p_Z(X_i)(1-p_Z(X_i))}
-\theta(x).
\]
The corresponding conditional moment functional is
\[
\Psi(\theta(x),m,p_Z)
:=
\mathbb{E}\!\left[
\psi_i(\theta(x),m,p_Z)
\mid X_i=x
\right].
\]
Then $\Psi(\theta(x),m,p_Z)$ is Neyman--orthogonal with respect to $m(\cdot)$. That is, for any perturbation $h(\cdot)$,
\[
\left.
\frac{\partial}{\partial \lambda}
\Psi(\theta(x),m+\lambda h,p_Z)
\right|_{\lambda=0}
=0.
\]
\end{lemma}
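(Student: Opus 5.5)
The plan is a direct Gateaux-derivative computation. The score is affine in $m$, so the derivative exists trivially and can be read off. Conditioning on $X_i=x$ fixes $X_i$, so $p_Z(X_i)=p_Z(x)$ and $h(X_i)=h(x)$ are constants inside the conditional expectation.

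First, write the perturbed score:
\[
\psi_i(\theta(x),m+\lambda h,p_Z)=\frac{(Z_i-p_Z(X_i))\bigl(Y_i-m(X_i)-\lambda h(X_i)\bigr)}{p_Z(X_i)(1-p_Z(X_i))}-\theta(x).
\]
This is linear in $\lambda$. Differentiating under the conditional expectation (justified because the map is linear in $\lambda$, given integrability of $Y_i$ and $h(X_i)$) gives
\[
\left.\frac{\partial}{\partial\lambda}\Psi(\theta(x),m+\lambda h,p_Z)\right|_{\lambda=0}
=-\frac{h(x)}{p_Z(x)(1-p_Z(x))}\,\mathbb{E}\!\left[Z_i-p_Z(X_i)\mid X_i=x\right].
\]
Assumption~\ref{assump:overlap_iv} guarantees that the denominator is nonzero. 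The derivative is the same for every $\lambda$, so evaluating at $\lambda=0$ involves no subtlety.

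Second, use the definition $p_Z(x)=\Pr(Z_i=1\mid X_i=x)=\mathbb{E}[Z_i\mid X_i=x]$. This gives $\mathbb{E}[Z_i-p_Z(X_i)\mid X_i=x]=p_Z(x)-p_Z(x)=0$, so the derivative vanishes for every perturbation $h$. This is the same identity noted just before Theorem~\ref{thm:dr_itt} to show that residualizing by $m$ does not change the identified target.

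The only real obstacle is bookkeeping about which quantities are held fixed. Two points matter. First, $p_Z$ is the true instrument propensity, so the centering identity holds exactly. Second, the perturbation enters only through a function of $X_i$, so it factors out of the conditional expectation.

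I would also remark that the argument uses no property of $m$ or of $Y_i$. It therefore does not rely on Assumptions~\ref{assump:monotonicity}–\ref{assump:Exclusion_Restriction}; only the correct specification of $p_Z$ and instrument positivity are needed. This clarifies that the orthogonality claimed is with respect to $m$ only, and not with respect to $p_Z$.
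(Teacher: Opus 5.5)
Your proposal is correct and matches the paper's proof essentially step for step. Both arguments perturb $m$ to $m+\lambda h$, use that conditioning on $X_i=x$ turns $p_Z(X_i)$ and $h(X_i)$ into constants, note that the derivative does not depend on $\lambda$, and conclude from $\mathbb{E}[Z_i-p_Z(x)\mid X_i=x]=0$. Your closing remark is an accurate clarification that the paper leaves implicit: the argument uses only the correct specification of $p_Z$ and instrument positivity, not the IV identification assumptions, and orthogonality holds with respect to $m$ only.
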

\begin{proof}
See Appendix \ref{sec:proof}.
\end{proof}
Taken together, Theorem~\ref{thm:dr_itt} and Lemma~\ref{lem:orthogonality} show that $Y_i^{\star\operatorname{IV}}$ can be used as a pseudo-outcome for $\operatorname{ITT}(x)$. Since \[
\mathbb{E}[Y_i^{\star\operatorname{IV}}\mid X_i=x]=\operatorname{ITT}(x),
\]
we can estimate $\operatorname{ITT}(x)$ by applying a regression forest with $Y_i^{\star\operatorname{IV}}$ as the response. The orthogonality result shows that this pseudo-outcome is locally insensitive to first-order errors in $m(\cdot)$. This estimator is then used in Step~1.1 of Algorithm~\ref{algo:Two-Step}. The resulting procedure is summarized in Algorithm~\ref{algo:TOT_IV}, with implementation details provided in Appendix~\ref{app:implementation_TSIV}.

To estimate the conditional compliance rate $\pi_C(x)$, we fit a separate regression model for $D_i$ as a function of $(X_i,Z_i)$. Let
\[
\mu_D(x,z):=\mathbb E[D_i\mid X_i=x,Z_i=z].
\]
We then estimate
\begin{align*}
 \pi_C(x)&=\mu_D(x,1)-\mu_D(x,0)
 \intertext{by}
\widehat\pi_C(x)&=\hat\mu_D(x,1)-\hat\mu_D(x,0).
\end{align*}
Combined with the transformed-outcome estimator $\widehat{\mathrm{ITT}}(x)$, this yields
\[
\hat\tau_{\mathrm{CCACE}}(x)
=
\frac{\widehat{\mathrm{ITT}}(x)}{\widehat\pi_C(x)}, 
\]
which we use to compute CCACE for $\mathcal{I}_{\text{dis}}$.

\begin{remark}
We use the term “robust” in the sense of orthogonal or debiased score construction. An theoretical extension to doubly robust estimators, which combine outcome modeling and propensity score weighting to obtain consistent estimates if at least one of the two nuisance components is correctly specified, can be found in Appendix \ref{sec:Theory}.
\end{remark}
To complete the discovery-stage estimator $\widehat\tau^{\,dis}_{\text{CCACE}}(x)$, we also specify the splitting criterion used for tree construction. Since estimation is now based on the transformed outcome $Y_i^{\star \text{IV}}$ rather than the raw outcome $Y_i$, the tree construction is also performed using $Y_i^{\star \mathrm{IV}}$ as the response variable. So the usual regression-tree splitting criterion can be written in terms of between-child-node variation in the transformed outcome. In the present setting, this gives the transformed-outcome tree criterion  
\[
\text{TOT}_{\text{IV}} := \frac{N_L (\bar{Y}_L^\star - \bar{Y}^\star)^2 + N_R (\bar{Y}_R^\star - \bar{Y}^\star)^2}{N_L + N_R},
\]
where $L$ and $R$ denote the left and right child nodes with sample sizes $N_L$ and $N_R$, respectively, and $\bar{Y}_L^\star$, $\bar{Y}_R^\star$, and $\bar{Y}^\star$ are the corresponding means of $Y_i^{\star \text{IV}}$. This criterion favors splits that target heterogeneity in $\text{ITT}(x)$.

Finally, our preceding results provide the framework for a discovery-stage estimator $\widehat\tau^{\,dis}_{\text{CCACE}}(x)$. We refer to this implementation as the Debiased Robust Random Forest with IV (DRRF-IV). Relative to the general two-step procedure in Algorithm~\ref{algo:Two-Step}, we change how Step~1 is carried out. We estimate $\operatorname{ITT}(x)$ using the debiased transformed outcome $Y_i^{\star\operatorname{IV}}$ and the conditional compliance rate $\pi_C(x)$,  to obtain $\widehat{\tau}^{\,\operatorname{dis}}_{\operatorname{CCACE}}(x)$ for the discovery sample. These estimated effects are then passed to fit a decision tree in Step~1.4 of Algorithm~\ref{algo:Two-Step}. Algorithm~\ref{algo:TOT_IV} summarizes the DRRF-IV algorithm.

\FloatBarrier 
\begin{algorithm}[h] 
\small
\caption{Debiased Transformed Outcome Tree (DRRF-IV) for Estimating $\widehat\tau^{\,dis}_{\text{CCACE}}$}
\label{algo:TOT_IV}
\KwIn{$\mathcal{I}_{dis}$, number of trees $B$}
\KwOut{Estimated CCACE for $\mathcal{I}_{\text{dis}}$}
\textbf{Step 1:} 
\vspace{0.3cm}
    \begin{itemize}
\item[1.] Estimate nuisance function $\hat p_Z(x)$ via logistic regression, $\widehat m(x)$  via regression forest and $\hat \mu_D(x,z)$ via causal forest (see Appendix \ref{app:estimation_nuisance}). 
\item[2.] Compute the debiased transformed outcome for all units:
\[
\widehat{Y}_i^{\star \text{IV}} = \frac{(Z_i - \hat{p}_Z(X_i))(Y_i - \widehat{m}(X_i))}{\hat{p}_Z(X_i)(1 - \hat{p}_Z(X_i))}.
\]
\end{itemize}
\textbf{Step 2:}
\vspace{0.3cm}
\For{$b = 1$ \KwTo $B$}{
 Split $\mathcal{I}_{dis}$ into two disjoint sets:
    \begin{itemize}
        \item A training set $\mathcal{I}_{b}^{train}$ for tree construction,
        \item An estimation set $\mathcal{I}_{b}^{est}$ for leaf-weight computation.
    \end{itemize}
    
    \textbf{Tree Growing (on $\mathcal{I}_{b}^{train}$):}
    \begin{itemize}
        \item Train a regression tree using the transformed outcome $\widehat{Y}_i^{\star \text{IV}}$ and TOT-IV splitting criterion:
        \[
        \text{TOT}_{\text{IV}} = \frac{N_L (\bar{Y}_L^\star - \bar{Y}^\star)^2 + N_R (\bar{Y}_R^\star - \bar{Y}^\star)^2}{N_L + N_R}
        \]
    \end{itemize}

    \textbf{Weighting (on $\mathcal{I}_{b}^{est}$):}
    \begin{itemize}
         \item For each $x = X_i$, $i \in \mathcal I_{dis}$, identify the corresponding leaf $L_b(x)$
        \item Assign weights:
        \[
        \alpha_{bi}(x) = \frac{\mathbf{1}(X_i \in L_b(x))}{|L_b(x)|}, \quad i \in \mathcal{I}_{b}^{est}
        \]
    \end{itemize}
}
\textbf{Final Estimation:}
\begin{itemize}
    \item Aggregate weights: $\alpha_i(x) = \frac{1}{B} \sum_{b=1}^B \alpha_{bi}(x)$
  \item Estimate the conditional intention-to-treat effect and the conditional compliance rate:
    \[
    \widehat{\operatorname{ITT}}(x)
    =
    \sum_{i \in \mathcal{I}_{\operatorname{dis}}}
    \alpha_i(x)\widehat{Y}_i^{\star\operatorname{IV}},
    \qquad
    \widehat{\pi}_C(x)
    =
    \widehat{\mu}_D(x,1)-\widehat{\mu}_D(x,0).
    \]
    \item Compute the discovery-sample CCACE estimate:
    \[
   \widehat\tau^{\,dis}_{\text{CCACE}}(x)=\frac{\widehat{\mathrm{ITT}}(x)}{\widehat{\pi}_C(x)}. \]
\end{itemize}
\end{algorithm}
\FloatBarrier

\subsection{Causal Forests with instrumental variables}
\label{sec:CF with IV}

To estimate heterogeneous treatment effects under endogeneity or irregular treatment assignment, \citet{athey_generalized_2019} extend the GRF framework to an IV setting. The GRF-IV estimator targets $\tau_{\text{CCACE}}(x)$,\footnote{In \citet{athey_generalized_2019}, this estimand is referred to as the Conditional Local Average Treatment Effect (CLATE). We maintain the notation $\tau_{\text{CCACE}}(x)$ for consistency.} under the structural model
\[Y_i = \eta(x) + \tau_{\text{CCACE}}(x) \cdot D_i + \varepsilon_i,\]
where $\varepsilon_i$ is an error term and $\eta(x)$ is a nuisance function. Under Assumptions ~\ref{assump:monotonicity}-\ref{assump:Exclusion_Restriction}, the treatment effect $\tau_{\text{CCACE}}(x)$ is identified by

\[\tau_{\text{CCACE}}(x) = \frac{\text{Cov}[Y_i, Z_i \mid X_i = x]}{\text{Cov}[D_i, Z_i \mid X_i = x]}.\]

The local estimates are obtained using gradient-based pseudo-outcomes derived from orthogonal moment conditions. Each tree solves a local estimating equation, and predictions are aggregated using forest weights obtained under an honest sample-splitting scheme. These pseudo-outcomes isolate variation in the treatment effect rather than in the outcome itself (see Algorithm~\ref{algo:Algorithm_GRF_IV} in Appendix \ref{app:GRF-IV} or \citet{athey_generalized_2019}). 

In our setting, GRF-IV is used in the discovery step of Algorithm~\ref{algo:Two-Step} to obtain unit-level estimates $\widehat{\tau}_{\text{CCACE}}(x)$. These estimates are then used in the inference step of Algorithm~\ref{algo:Two-Step} to identify interpretable subgroups. While GRF-IV itself is an established method for estimating heterogeneous treatment effects with instruments, our contribution lies in embedding it within a general two-step framework for subgroup identification. 

Both GRF-IV and DRRF-IV aim to estimate heterogeneous causal effects in IV settings, but they rely on different estimation strategies. GRF-IV is a moment-based approach that estimates $\tau_{\text{CCACE}}(x)$ by solving local orthogonal estimating equations and aggregating the resulting estimates using forest weights. In contrast, DRRF-IV, introduced in Section~\ref{sec:RF with IV}, constructs a debiased/robust transformed outcome whose conditional expectation equals the causal parameter of interest. This transformation allows heterogeneous effects to be estimated directly via standard regression forests applied to the transformed outcome.

\subsection{Bayesian instrumental variable causal forest} 
\label{sec:BCF_IV}
We now briefly describe the Bayesian Causal Forest with Instrumental Variables (BCF-IV) algorithm by \citet{bargaglibcf}, which extends the Bayesian Additive Regression Trees  (BART; \citet{Chipman_2010};  \citet{Hill}) framework to instrumental-variable settings with imperfect compliance. Following their notation, for the estimation of the conditional intention-to-treat, the outcome is modeled by including $Z_i$ into a BART ensemble
\[
Y_i = f(Z_i, X_i) + \varepsilon_i 
\;\approx\; \mathcal{T}_1(Z_i, X_i) + \cdots + \mathcal{T}_q(Z_i, X_i) + \varepsilon_i,\qquad 
\varepsilon_i \sim \mathcal{N}(0,\sigma^2), 
\]
where each $\mathcal{T}_j$ $j \in \{1 \dots q\}$ denotes a binary regression tree together with its associated 
structure and leaf parameters. 
The conditional expected value of $Y_i$ can then be written as
\[\mathbb{E}[Y_i\mid Z_i=z,X_i=x]=g(z,x),
\]
from which the conditional intention-to-treat effect is given by
\[
\operatorname{ITT}(x)=g(1,x)-g(0,x).
\]
Note, that the functions $f$ and $g$ are related but notationally distinct, $f$ denotes the BART regression function in the outcome equation, whereas $g$ denotes the corresponding conditional outcome mean. 

Adapting the model to an irregular assignment mechanism, BCF-IV decomposes the conditional mean of the outcome into a baseline component and a heterogeneous intention-to-treat component
\begin{equation}
\label{eq:BCFIV_conditional_mean}
\mathbb{E}[Y_i \mid Z_i = z, X_i = x]
=
\mu(p_Z(x), x) + \text{ITT}(x) z,   
\end{equation}
where the baseline function  $\mu(p_Z(x), x) $ captures variation in $Y_i$ unrelated to
$Z_i$. Both $\mu(\cdot)$ and $\text{ITT}$  are modeled through separate BART priors, and $\text{ITT}$ is constrained to have shallower trees.
To estimate the conditional proportion of compliers, BCF-IV fits a second BART model for treatment receipt,
\[
D_i = \ell(Z_i,X_i) + \psi_i
\;\approx\;
\tilde{\mathcal{T}}_1(Z_i,X_i) + \cdots + \tilde{\mathcal{T}}_q(Z_i,X_i) + \psi_i,
\qquad
\psi_i \sim \mathcal{N}(0,\theta^2),
\]
where $\tilde{\mathcal{T}}_j$ denotes a separate BART ensemble. As in \citet{bargaglibcf}, this notation is used as a compact representation of the treatment model. Since $D_i$ is binary, the conditional mean $\delta(z,x)=\mathbb E[D_i\mid Z_i=z,X_i=x]$ corresponds to the conditional probability $\operatorname{Pr}(D_i=1\mid Z_i=z,X_i=x)$.  The conditional proportion of compliers is then obtained as
\[
\pi_C(x) = \delta(1,x) - \delta(0,x).
\]
Combining $\text{ITT}(x)$ and the compliance rate yields 
\[
\tau_{\mathrm{CCACE}}(x)
=
\frac{\text{ITT}(x)}{\pi_C(x)}.
\]

In the original BCF-IV procedure proposed by \citet{bargaglibcf}, the estimates $\widehat{\text{ITT}}(x)$ and $\widehat{\pi}_C(x) $ obtained in the discovery sample are combined to form $\hat\tau^{\,dis}_{\text{CCACE}}(x)$. These fitted effects are then used to train a CART tree for identifying interpretable subgroups; see Algorithm~\ref{algo:BCF_IV} for details. Algorithm~\ref{algo:Two-Step} extends this idea by allowing alternative estimators of $\widehat{\text{ITT}}(x)$ and $\widehat{\pi}_C(x)$, including BCF-IV, GRF-IV, and DRRF-IV,  to be incorporated within the same subgroup discovery framework.
\vspace{0.5cm}
\FloatBarrier
\begin{algorithm}[H]
\small
\caption{BCF-IV}
\label{algo:BCF_IV}
\textbf{Inputs:} $N$ units $i = 1, \dots, N$ with data $(X_i, Z_i, D_i, Y_i)$ \\
\textbf{Outputs:}
\begin{itemize}
    \item Tree structure identifying heterogeneity in causal effects
    \item Estimates $\hat{\tau}_{\text{CCACE},\mathcal{L}_j}$ for the discovered subgroups
\end{itemize}

\textbf{Step 0: Honest Sample Splitting}
\begin{itemize}
    \item Randomly split the data into a discovery set $\mathcal{I}_{\text{dis}}$ and an inference set $\mathcal{I}_{\text{inf}}$.
\end{itemize}

\textbf{Step 1: Discovery (on $\mathcal{I}_{\text{dis}}$)}
\begin{enumerate}
    \item Estimate the conditional intention-to-treat effect $\widehat{\text{ITT}}(x)$ using BCF.
    \item Estimate the conditional compliance rate
    \[
    \hat{\pi}_C(x) = \hat{\delta}(1,x) - \hat{\delta}(0,x),
    \]
    where $\hat{\delta}(z,x)$ is obtained from a BART model for treatment receipt.
    \item Compute
    \[
    \hat{\tau}^{\,dis}_{\text{CCACE}}(x)
    =
    \frac{\widehat{\text{ITT}}(x)}{\hat{\pi}_C(x)}.
    \]
    \item Fit a decision tree to $\bigl(\hat{\tau}^{\,dis}_{\text{CCACE}}(x), X_i\bigr)$ to discover subgroups.
\end{enumerate}

\textbf{Step 2: Inference (on $\mathcal{I}_{\text{inf}}$)}
\begin{enumerate}
    \item For each terminal node $\mathcal{L}_j$, $j=1,\ldots,J$, estimate the subgroup-specific effect $\hat{\tau}_{\text{CCACE},\mathcal{L}_j}$.
    \item Run weak-instrument tests and discard leaves with weak instrument strength from the set of reported subgroups.
\end{enumerate}
\end{algorithm}
\FloatBarrier
\vspace{0.5cm}


\section{Simulation Study}
\label{ch:sim_study}
To investigate the performance of the proposed methods in the presence of imperfect compliance, we conduct a simulation study with two heterogeneity settings based on the data generating process by \citet{bargaglibcf}. The aim is to evaluate how well the proposed estimators capture heterogeneous treatment effects across varying subgroup structures and treatment effect sizes. As a benchmark, we compare performance against the established BCF-IV algorithm.

\subsection{Setup}
\label{ch:sim_study_setup}

For each design, we simulate $MC=500$ independent samples. In the baseline specification, the sample size is $N=2000$, and we additionally consider larger samples $N=10000$. For each observation $i$, the covariates $X_{i1},\dots,X_{i10}$ are generated independently from a standard normal distribution,
\[
X_{ij} \sim \mathcal{N}(0, 1), \quad \text{for } j = 1, \dots, 10.
\]
The instrument $Z_i$ is drawn independently from a binomial distribution, $Z_i$ $\sim\text{Binom}(0.5)$. We consider one-sided noncompliance. Units assigned to control never receive treatment, whereas among units assigned to treatment, compliance occurs with probability $0.75$. Formally,
\[
D_i(0) = 0, \quad D_i(1) \sim \text{Bernoulli}(0.75).
\]
Potential outcomes under control are generated as
\[
Y_i(0)\sim\mathcal{N}(0,1),
\]
and potential outcomes under treatment are given by
\[
Y_i(1)=Y_i(0)+D_i(1)\tau_{\text{CCACE}}(X_i).
\]
where $\tau_{\text{CCACE}}(X_i)$ denotes the individual-level complier average causal effect, which varies across covariate profiles. We consider two heterogeneity scenarios.

\begin{align*}
\text{\textbf{1. Strong heterogeneity scenario:}} \\\quad 
\tau_{\text{CCACE}}(X_i) &= 
\begin{cases}
  k   & \text{if } X_{i1} < -0.5 \text{ and } X_{i2} < -0.5, \\
 -k   & \text{if } X_{i1} > 0.5 \text{ and } X_{i2} > 0.5, \\
 0    & \text{otherwise}.
\end{cases} \\
\text{\textbf{2. Slight heterogeneity scenario:}} \\\quad 
\tau_{\text{CCACE}}(X_i) &= 
\begin{cases}
  k     & \text{if } X_{i1} < 0 \text{ and } X_{i2} < 0, \\
 -k     & \text{if } X_{i1} > 0 \text{ and }  X_{i2} > 0, \\
 0.5k   & \text{if } X_{i1} > 0 \text{ and }  X_{i2} < 0, \\
 -0.5k  & \text{otherwise}.
\end{cases}
\end{align*}
\noindent
Here, $k \in \{0.5, 1, 1.5, 2.5, 3, 3.5, 4\}$ denotes the treatment effect size.
 The two heterogeneity scenarios differ in the complexity of the induced subgroup structure. In the strong heterogeneity design, the covariate space contains two clearly separated active subgroups with nonzero effects and a larger null-effect region otherwise. The slight design has a more diffused heterogeneity: instead of two clearly isolated regions, the covariate space is partitioned into four subgroups with different effect magnitudes and signs, which introduces finer bands between the extreme subgroups and therefore makes subgroup identification more difficult. Overall, these treatment effect scenarios induce the following true subgroup leaves:
 
\begin{description}
    \item[Strong heterogeneity:] Only two active subgroups with nonzero effects:
    \[
    \mathcal{L}_1 = \{X_i: X_{i1} < -0.5 \text{ and } \ X_{i2} < -0.5\}, \quad
    \mathcal{L}_2 = \{X_i: X_{i1} > 0.5 \text{ and }\ X_{i2} > 0.5\}
    \]
    All other units are assigned to the null-effect leaf:
  \[
  \mathcal{L}_0 = \left\{ X_i \,\middle|\, X_i \notin \mathcal{L}_1 \cup \mathcal{L}_2 \right\}
  \]
    \item[Slight heterogeneity:] Four subgroups defined by combinations of $X_{i1}$ and $X_{i2}$:
    \[
    \begin{aligned}
    \mathcal{L}_1 &= \{X_i: X_{i1} < 0 \text{ and }\ X_{i2} < 0\}, \\
    \mathcal{L}_2 &= \{X_i: X_{i1} > 0 \text{ and }\ X_{i2} > 0\}, \\
    \mathcal{L}_3 &= \{X_i: X_{i1} > 0 \text{ and } \ X_{i2} < 0\}, \\
    \mathcal{L}_4 &= \{X_i: X_{i1} < 0\text{ and } \ X_{i2} > 0\}.
    \end{aligned}
    \]
 \end{description}   

We evaluate performance along two dimensions. First, we evaluate subgroup discovery by measuring how accurately units are assigned to their true subgroup. Since exact split points are not directly comparable across methods in designs with continuous covariates, we adopt a unit-level evaluation based on the predicted and true leaves. Details on the matching procedure are provided in Appendix~\ref{app:evaluation_criteria}. The aligned true and predicted subgroup labels can be summarized by the confusion matrix in Table~\ref{tab:confusion_metrics_multiclass_main}.
 
\begin{table}[h]
\centering
\begin{tabular}{lcc}
\toprule
 & \textbf{True $\mathcal{L}_j$} & \textbf{True not $\mathcal{L}_j$} \\
\midrule
\textbf{Predicted $\mathcal{L}_j$}     & True Positive (TP$_{\mathcal{L}_j}$) & False Positive (FP$_{\mathcal{L}_j}$) \\
\textbf{Predicted not $\mathcal{L}_j$} & False Negative (FN$_{\mathcal{L}_j}$) & True Negative (TN$_{\mathcal{L}_j}$) \\
\bottomrule
\end{tabular}
\caption{Classification scheme for subgroup detection in the multi-class setting.}
\label{tab:confusion_metrics_multiclass_main}
\end{table}

Based on these aligned subgroup labels, we compute the true detection rate (TDR), the false detection rate (FDR), and the overall detection rate (ODR). In this section, FDR denotes the false detection rate and should not be confused with the false discovery rate used in multiple-testing procedures. We refer to these quantities as detection metrics because they evaluate whether units are assigned to their correct subgroups.
\begin{align*}
\text{TDR}_{\mathcal{L}_j}
&=
\frac{\text{TP}_{\mathcal{L}_j}}
{\text{TP}_{\mathcal{L}_j}+\text{FN}_{\mathcal{L}_j}},
\qquad
\text{FDR}_{\mathcal{L}_j}
=
\frac{\text{FP}_{\mathcal{L}_j}}
{\text{FP}_{\mathcal{L}_j}+\text{TN}_{\mathcal{L}_j}}, \qquad
\text{ODR}=
\frac{1}{N}\sum_{j=0}^{J}\mathrm{TP}_{\mathcal L_j}.
\end{align*}

Second, we consider also significance-based subgroup discovery metrics. These are stricter, because a subgroup assignment is counted as successful only if the observation is assigned to the correct leaf and the estimated treatment effect is statistically significant in that leaf. In particular, we follow the original implementation of \citet{bargaglibcf}, which reports leaf-level IV estimates, their \(p\)-values, and multiple-testing-adjusted \(p\)-values. Specifically, for each discovered leaf \(\mathcal L_j\), the null hypothesis
\[
H_{0,j}: \tau_{\operatorname{CCACE},\mathcal L_j}=0
\]
is tested using the leaf-level IV estimate computed on the inference sample. The resulting \(p\)-values are adjusted across leaves using Holm's method, and a leaf is treated as significant if its adjusted \(p\)-value satisfies $p^{\operatorname{adj}}_{\mathcal L_j}<0.05.$

We use this reported significance classification when constructing the significance-based discovery metrics. To distinguish these quantities from the purely detection-based counts above, we denote the corresponding confusion-matrix entries by $*$. Thus, $\text{TP}_{\mathcal{L}_j}^*$ counts observations that are correctly assigned to $\mathcal{L}_j$ and for which the estimated treatment effect is significant, with $\text{FP}_{\mathcal{L}_j}^*$, $\text{FN}_{\mathcal{L}_j}^*$, and $\text{TN}_{\mathcal{L}_j}^*$ defined analogously. The resulting significance-based metrics are
\begin{align*}
\text{TPR}_{\mathcal{L}_j}
&=
\frac{\text{TP}_{\mathcal{L}_j}^*}
{\text{TP}_{\mathcal{L}_j}^*+\text{FN}_{\mathcal{L}_j}^*},
\qquad
\text{FPR}_{\mathcal{L}_j}
=
\frac{\text{FP}_{\mathcal{L}_j}^*}
{\text{FP}_{\mathcal{L}_j}^*+\text{TN}_{\mathcal{L}_j}^*}, \\
\text{SigRate}_{\mathcal{L}_j}
&=
\operatorname{Pr}(\text{detected significant}\mid \text{true label}=\mathcal{L}_j).
\end{align*}
Accordingly, TPR and FPR should be interpreted as significance-adjusted analogues of TDR and FDR. When the significance rate is close to one, the detection-based and significance-based measures become similar. Additionally, we evaluate estimation accuracy by using the MSE, MAE and bias. For each observation $i \in \mathcal{I}_{\mathrm{inf}} $, we compute
\begin{align}
\text{MSE}
&=
\frac{1}{|\mathcal{I}_{\mathrm{inf}}|}
\sum_{i \in \mathcal{I}_{\mathrm{inf}}}
\left(
\widehat{\tau}_{\text{CCACE},i}
-
\tau^{\star}_{\text{CCACE},i}
\right)^2, \\
\text{MAE}
&=
\frac{1}{|\mathcal{I}_{\mathrm{inf}}|}
\sum_{i \in \mathcal{I}_{\mathrm{inf}}}
\left|\widehat{\tau}_{\text{CCACE},i}
-
\tau^{\star}_{\text{CCACE},i}
\right |, \\
\text{Bias}
&=
\frac{1}{|\mathcal{I}_{\mathrm{inf}}|}
\sum_{i \in \mathcal{I}_{\mathrm{inf}}}
\left(
\widehat{\tau}_{\text{CCACE},i}
-
\tau^{\star}_{\text{CCACE},i}
\right),
\end{align}
where \(\widehat{\tau}_{\text{CCACE},i}\) denotes the estimated subgroup effect for observation \(i\), and \(\tau^{\star}_{\text{CCACE},i}\) is the corresponding CCACE based on the subgroup to which the observation was assigned by the method. Note that, $\tau^{\star}_{\text{CCACE},i}$ may differ from the true individual subgroup effect \(\tau_{\operatorname{CCACE}}(X_i)\) when the method assigns observation $i$ to an incorrect subgroup. This choice is intentional, since subgroup identification is already evaluated by the detection-based and significance-based metrics, whereas MSE, MAE, and bias are used to assess the accuracy of the effect estimate associated with the subgroup selected by the method.

\subsection{Results}
\label{ch:sim_study_results}
Figures~\ref{fig:combined_TDR_10000}–\ref{fig:combined_ODR_10000} report the average detection and significance metrics for subgroup discovery for $N=10000$, while Table~\ref{tab:mse_bias_10000_all} reports estimation accuracy for the CCACE. The corresponding tables and leafwise results are given in Appendix~\ref{app:Additional_MC}. The results for $N=2000$ are also reported in Appendix  \ref{app:Additional_MC} and lead to a similar conclusion, but subgroup discovery and estimation accuracy are generally weaker, particularly for DRRF-IV and GRF-IV. Overall, these results suggest that when the sample size is rather small, a larger effect size is needed to perform on par with BCF-IV.

We first consider the strong heterogeneity setting.  At the smallest effect size, $k=0.5$, BCF-IV achieves the strongest subgroup discovery performance, with $\operatorname{TDR}=0.786$ and $\operatorname{TPR}=0.594$, compared to $\operatorname{TDR}=0.632$ and 0.643 for DRRF-IV and GRF-IV, and $ \operatorname{TPR}=0.518$ and 0.499, respectively. As the effect size increases, all methods improve quickly. By $k=1$, $\operatorname{TPR}$ rises to 0.882 for BCF-IV, 0.848 for DRRF-IV, and 0.833 for GRF-IV, and by $k=4$ all methods attain $\operatorname{TDR}$ above 0.93 and $\operatorname{TPR}$ above 0.92. So overall, from $k=0.5$ to $k=4$, $\operatorname{TPR}$ increases by about 56\% for BCF-IV, 80\% for DRRF-IV, and 87\% for GRF-IV. False discovery rates remain higher than in the slight heterogeneity case, with $\operatorname{FDR}=0.150$ for BCF-IV and about 0.20 for the forest-based methods, but decrease substantially, while significance rates increase from 0.519, 0.427, and 0.415 at $k=0.5$ to at least 0.984, 0.986, and 0.988, respectively.  Estimation accuracy is also very similar across methods. BCF-IV has a slight advantage at small effect sizes, but the gap is small: at $k=0.5$, MSE is 0.007 for BCF-IV versus 0.009 and 0.008 for DRRF-IV and GRF-IV. For larger $k$, MSE and bias remain close across all three methods. MAE is small throughout, typically between 0.065 and 0.08, indicating that all methods estimate subgroup effects quite accurately once subgroup assignment is sufficiently reliable.

The slight heterogeneity setting follows a similar pattern. While subgroup discovery is already strong at moderate effect sizes, the gap between the methods is more visible when $k$ is small. At $k=0.5$, BCF-IV again performs best, with $\operatorname{TDR}=0.875$ and $\operatorname{TPR}=0.825$, while $\operatorname{FDR}$ and $\operatorname{FPR}$ remain close to zero.  By contrast, DRRF-IV and GRF-IV still struggle to correctly identify the true subgroups at this effect size, with $\operatorname{TDR}=0.381$ and 0.262, respectively, but improve rapidly as $k$ increases. By $k=1.5$, all three methods achieve $\operatorname{TDR}$ and $\operatorname{TPR}$ above 0.98, and from $k=2$ onward the differences are negligible. Significance rates are already high at $k=0.5$, ranging from 0.945 for BCF-IV to 0.958 and 0.956 for DRRF-IV and GRF-IV, and equal 1.0 for all methods at larger effect sizes. BCF-IV has the lowest MSE at small effect sizes, with 0.007 at $k=0.5$, compared with 0.011 for DRRF-IV and 0.018 for GRF-IV. Thus, relative to BCF-IV, the initial MSE is about 57\% higher for DRRF-IV and more than 150\% higher for GRF-IV. From about $k=1.5$ onward, however, all methods achieve essentially identical MSE levels around 0.006.

Overall, BCF-IV shows the strongest subgroup discovery and the lowest estimation error at small effect sizes. This finding is consistent with the BCF literature, which emphasizes the value of shrinkage and regularization when the signal is weak \citep{hahn2020bayesian}. BCF-IV models $\mu(p_Z(x), x)$ and $\text{ITT}(x)$ from Equation \eqref{eq:BCFIV_conditional_mean} separately and allows different levels of regularization for each component. When the true heterogeneity structure is simple or heterogeneity is modest, this leads to more stable estimates and, in our setting, lowers the risk of spurious subgroup detection. By contrast, DRRF-IV and GRF-IV are more sensitive to small effect sizes. As shown in \citet{hou2025honesty}, GRF can struggle to discover heterogeneity under honest estimation, since honesty uses less data for tree construction. When the effect size is small and, as a result, also CCACE, the forest becomes less able to detect the true subgroup structure, even though honesty helps protect against overfitting. Consequently, the GRF-based methods are less effective at identifying the covariates that truly drive heterogeneity when the signal is weak. As $k$ increases to a moderate level, however, the heterogeneity signal becomes clearer, and the performance differences across methods largely disappear.

\FloatBarrier
\begin{table}[!htbp]
\caption{MSE, bias, and MAE under strong and slight heterogeneity.}
\label{tab:mse_bias_10000_all}
\centering
\setlength{\tabcolsep}{10pt}
\resizebox{.95\linewidth}{!}{
\begin{tabular}{l r rrr @{\hskip 1.8cm} rrr}
\\[-1.8ex]\hline 
\hline \\[-1.8ex] 
\multicolumn{2}{c}{ } & \multicolumn{6}{c}{heterogeneity scenario} \\
\cmidrule(l{3pt}r{3pt}){3-8}
\multicolumn{2}{c}{ } & \multicolumn{3}{c}{strong} & \multicolumn{3}{c}{slight} \\
\cmidrule(l{3pt}r{3pt}){3-5} \cmidrule(l{3pt}r{3pt}){6-8}
Method & $k$ & MSE & Bias & MAE & MSE & Bias & MAE\\
\midrule
\multirow{8}{*}{BCF-IV}
 & 0.5 & 0.007 & -0.002 & 0.065 & 0.007 & -0.004 & 0.063\\
 & 1.0 & 0.008 & -0.003 & 0.070 & 0.006 & -0.001 & 0.061\\
 & 1.5 & 0.008 & -0.004 & 0.070 & 0.006 & -0.002 & 0.060\\
 & 2.0 & 0.008 & -0.004 & 0.070 & 0.006 & -0.003 & 0.060\\
 & 2.5 & 0.009 & -0.004 & 0.072 & 0.006 & -0.001 & 0.059\\
 & 3.0 & 0.010 & -0.004 & 0.077 & 0.006 & -0.002 & 0.061\\
 & 3.5 & 0.010 & -0.003 & 0.075 & 0.006 & -0.002 & 0.061\\
 & 4.0 & 0.011 & -0.005 & 0.079 & 0.006 &  0.000 & 0.060\\
\addlinespace
\multirow{8}{*}{DRRF-IV}
 & 0.5 & 0.009 & -0.001 & 0.070 & 0.011 &  0.000 & 0.076\\
 & 1.0 & 0.008 & -0.004 & 0.067 & 0.008 & -0.001 & 0.065\\
 & 1.5 & 0.008 & -0.002 & 0.068 & 0.006 & -0.001 & 0.062\\
 & 2.0 & 0.008 & -0.000 & 0.070 & 0.006 & -0.004 & 0.061\\
 & 2.5 & 0.008 & -0.004 & 0.071 & 0.006 & -0.002 & 0.061\\
 & 3.0 & 0.010 & -0.000 & 0.074 & 0.006 & -0.001 & 0.064\\
 & 3.5 & 0.009 & -0.001 & 0.075 & 0.006 & -0.003 & 0.060\\
 & 4.0 & 0.010 & -0.004 & 0.078 & 0.006 & -0.001 & 0.061\\
\addlinespace
\multirow{8}{*}{GRF-IV}
 & 0.5 & 0.008 & -0.005 & 0.066 & 0.018 & -0.002 & 0.089\\
 & 1.0 & 0.008 & -0.004 & 0.067 & 0.014 & -0.006 & 0.072\\
 & 1.5 & 0.008 & -0.003 & 0.069 & 0.006 & -0.001 & 0.059\\
 & 2.0 & 0.008 & -0.001 & 0.070 & 0.006 & -0.002 & 0.060\\
 & 2.5 & 0.009 & -0.001 & 0.072 & 0.006 & -0.001 & 0.061\\
 & 3.0 & 0.009 & -0.001 & 0.072 & 0.006 & -0.003 & 0.060\\
 & 3.5 & 0.010 & -0.002 & 0.077 & 0.006 & -0.001 & 0.060\\
 & 4.0 & 0.011 & -0.003 & 0.080 & 0.006 & -0.001 & 0.061\\
\bottomrule
\end{tabular}}
\caption*{\footnotesize\textit{Notes:} Entries report average estimation accuracy measures across simulation replications for $N=10000$. Lower values indicate better performance.}
\end{table}
\FloatBarrier
\FloatBarrier
\begin{figure}[!htbp]
\centering
\begin{minipage}{0.95\textwidth}
  \centering
  \caption{TDR and TPR under strong and slight heterogeneity}
  \label{fig:combined_TDR_10000}
  \includegraphics[width=0.82\textwidth]{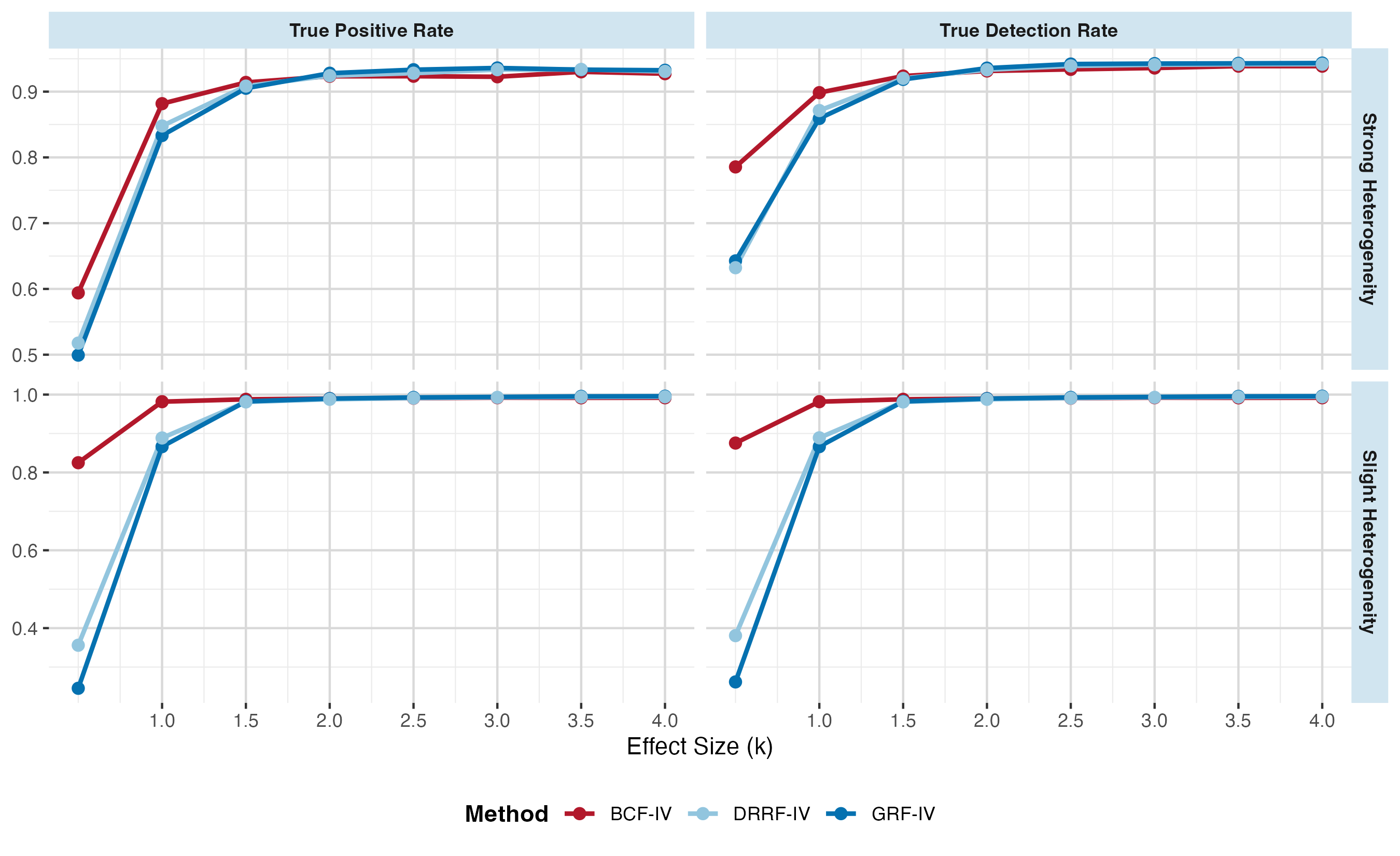}
  \caption*{\scriptsize\textit{Notes:} Results are averaged over 500 Monte Carlo replications with \(N=10{,}000\). The horizontal axis shows the treatment effect size \(k\). Higher values indicate better subgroup detection. From \(k=2\) onward, the curves nearly coincide for both heterogeneity scenarios.}
\end{minipage}
\vspace{0.6cm}
\begin{minipage}{0.95\textwidth}
  \centering
  \caption{FDR and FPR under strong and slight heterogeneity}
  \label{fig:combined_FDR_10000}
  \includegraphics[width=0.82\textwidth]{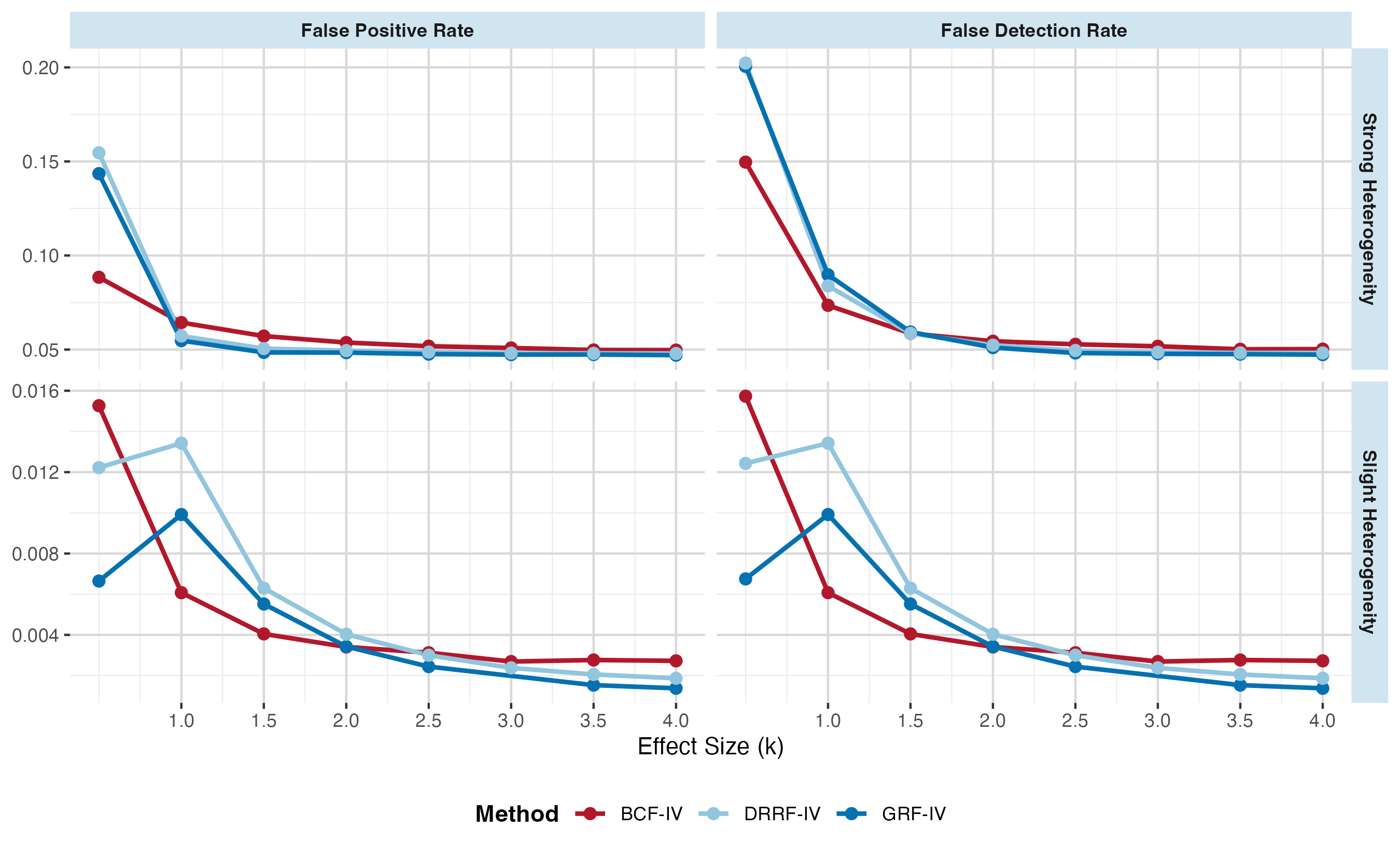}
  \caption*{\scriptsize\textit{Notes:} Results are averaged over 500 Monte Carlo replications with \(N=10{,}000\). Lower values indicate better performance. Starting from a lower level than BCF-IV, the non-Bayesian methods exhibit a spike at \(k=1\) in the slight heterogeneity case. From \(k=2\) onward, the curves nearly coincide for both heterogeneity scenarios.}
\end{minipage}
\end{figure}
\FloatBarrier

\FloatBarrier
\begin{figure}[!htbp]
  \centering
  \begin{minipage}{0.95\textwidth}
  \caption{ODR and SigRate under strong and slight heterogeneity}  
    \label{fig:combined_ODR_10000}
 \includegraphics[width=0.82\textwidth]{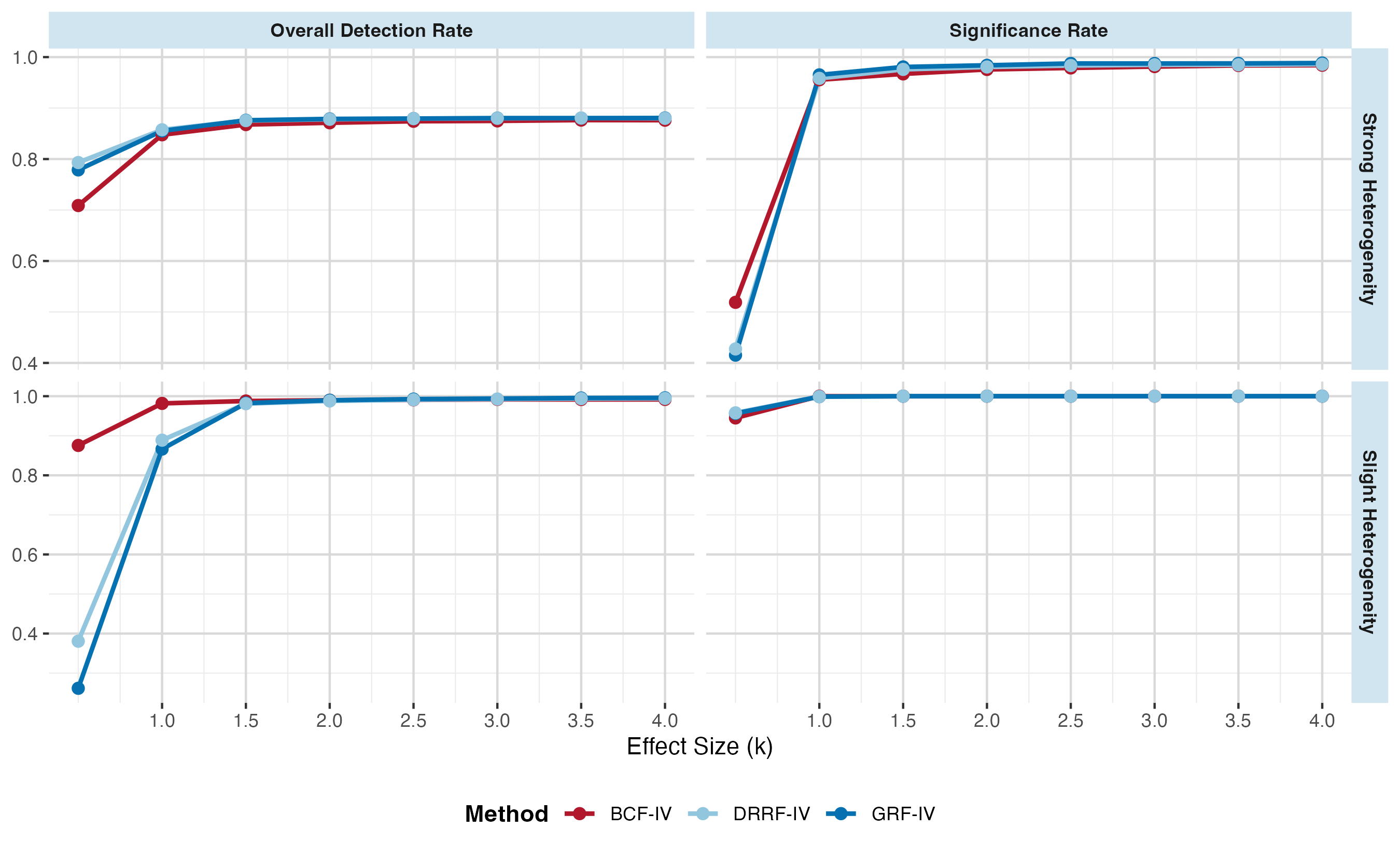} 
\caption*{\scriptsize\textit{Notes:} Results are averaged over 500 Monte Carlo replications with $N=10000$. Higher values indicate better performance. SigRate measures the frequency with which observations in a true subgroup are assigned to a leaf whose estimated effect is statistically significant. } 
\end{minipage}
\end{figure} 
\FloatBarrier
\begin{samepage}
In addition to statistical performance, we also compare the computational runtime of the three methods. Detailed runtime results are reported in Appendix~\ref{app:runtime}. Overall, GRF-IV is by far the fastest method across all considered designs. For example, when \(N=10{,}000\), average runtimes are approximately \(2.6\)--\(2.8\) seconds for GRF-IV, compared to approximately \(14.3\)--\(15.2\) seconds for DRRF-IV and \(23.5\)--\(23.8\) seconds for BCF-IV. Thus, although the modified BCF-IV implementation used in this work is faster and more stable than the original implementation by \citet{bargaglibcf}, it remains the most computationally demanding approach. The higher runtime of BCF-IV is mainly driven by the Bayesian estimation of \(\operatorname{ITT}(x)\), whereas the additional computational cost of DRRF-IV is primarily due to the transformation and rewritten splitting rule. 
\end{samepage}
\section{Empirical application}
\label{ch:emp_app}
There are indications that prompt admission to the intensive care units (ICUs) may affect patient mortality. \citet{ICU_impact} demonstrated that delays in ICU admission were associated with higher mortality, even for patients who were eventually admitted. These findings emphasize that critically ill patients benefit from early access to intensive care. However, as \citet{Keele_IV} highlight, observational studies of ICU admission face methodological challenges, as patients who are admitted promptly to the ICU often present with greater severity of illness, according to their medical characteristics. Consequently, standard regression-based methods may yield biased estimates, which has led researchers to adopt IV approaches to study the effects of critical care \citet{ICU_Zubizarreta}. 

We use data from the SPOTlight study, a prospective cohort of 13,011 adult patients from 48 UK National Health Service (NHS) hospitals, collected between November 2010 and December 2011.  Because the original data could not be publicly released, \citet{Keele_IV} provide a pseudo-dataset generated by resampling the original data with replacement until obtaining a resample that was close to the distribution of the original data. Furthermore, they applied IV methods to estimate the causal effect of ICU admission on mortality. The approach relies on ICU bed availability at the time of clinical assessment, which serves as an instrument to address unmeasured confounding in the assignment of ICU care. While their findings suggest that prompt ICU admission may reduce short-term mortality, the estimated effects were not statistically significant. Building on that work, we investigate heterogeneity in treatment effects across patient groups using our two-step algorithms. Specifically, we aim to identify patient subgroups that benefit more or less from prompt ICU admission, even when the overall average effect is close to zero. Identifying such variation may influence clinical decision-making and more efficient allocation of critical care resources. We therefore first conduct a primary single-run analysis using each method once. Second, to examine the robustness of subgroup findings, we repeat the entire procedure 100 times with different inference and discovery splits. Note that this analysis is exploratory in nature, as subgroup structures may vary across repetitions. The goal is not to make general claims but to examine patterns and highlight subgroups that appear consistently across runs.

\subsection{Data preprocessing}

Building on the pseudo-data provided by \citet{Keele_IV}, we focus on the short- to medium-term effects of ICU admission and use 28-day mortality (\texttt{dead28}) as the outcome variable. Alternative outcomes such as 7-day or 90-day mortality are not considered in this analysis. We also exclude patients older than 93 years, resulting in a sample of 12,903 patients. From the original pseudo-data, which contains 40 variables, we select the 18 covariates, which were also used in the \texttt{ivmodel} \texttt{R} package \citep{ivmodel}. The treatment is defined as ICU admission within the hospital stay (\texttt{icu\_bed}) and the instrument is determined by ICU bed availability at the time of assessment (\texttt{open\_bin}), with a value of one if fewer than four ICU beds were available and zero otherwise. Thus, following the original coding, \(\texttt{open\_bin}=1\) corresponds to lower ICU bed availability. This coding is the reverse of a convention in which \(Z_i=1\) denotes greater treatment availability, but it is kept here to remain consistent with the original data documentation. The dataset also contains patient characteristics, including demographic variables (age, sex), comorbidities (sepsis diagnosis, peri-arrest), and clinical severity measuress: the ICNARC physiological score (\texttt{icnarc\_score}), the National Early Warning Score (\texttt{news\_score}), and the Sequential Organ Failure Assessment score (\texttt{sofa\_score}). Additionally, indicators are provided for the observed level of care at assessment and the recommended level of care after assessment.

Lastly, we exclude \texttt{site} from the main analysis, since our goal is to uncover heterogeneity in treatment effects across patient characteristics, whereas including hospital indicators would shift part of the heterogeneity analysis toward between-hospital differences. As a robustness check, we conducted an additional analysis with hospital "fixed effects", and report these results in Appendix \ref{app:additional_application}. In total, the final dataset consists of 12,903 patients and 18 variables, including the outcome, treatment, instrument and baseline covariates. Summary statistics for the main variables are reported in Tables \ref{tab:summary_cont_ICU} and \ref{tab:summary_cat_ICU} in Appendix \ref{sec:Figures}. 

\subsection{Results}

The overall estimate of the CACE is approximately $-0.07$ for all methods, which is comparable to the estimates reported by \citep{Keele_IV}. This suggests that prompt ICU admission may slightly reduce 28-day mortality among compliers. While this indicates a potential survival benefit, the effect is not statistically significant. Moreover, the resulting subgroup effects are also not statistically significantly. Nevertheless, examining the subgroup patterns remains informative as the patterns of discovered subgroups varied across approaches.

In total, GRF-IV produced 22 subgroups, DRRF-IV identified sixteen subgroups, whereas BCF-IV identified four subgroups. The latter method tended to form much larger subgroups, with maximum leaf sizes encompassing up to $90.3\%$ of the sample, compared to $16.3\% $ for DRRF-IV and $14.6\%$ for GRF-IV.  Note that the weak-IV diagnostics are reported for each leaf in Appendix~\ref{app:Additional_Application_Results}, but are not used to remove leaves from the displayed trees in Figures \ref{fig:tree_drrf_iv} and \ref{fig:tree_grf_bcf_iv}. Using a \(5\%\) threshold for the weak-IV diagnostic, 9 of the 16 DRRF-IV leaves and 14 of the 22 GRF-IV leaves are flagged as weakly identified. For BCF-IV, diagnostics were available for three of the four terminal leaves, of which two were flagged as weakly identified. These diagnostics indicate that several leaf-level estimates should be interpreted cautiously and not as reliable evidence of subgroup-specific causal effects.

The estimated CACEs also varied across methods, driven by different variables that contribute to heterogeneity. Since 28-day mortality is binary, the true subgroup-specific CCACE is a causal risk difference and is bounded by \([-1,1]\) \citep{chiba2007bounds}. However, the IV estimator is a ratio of the estimated intention-to-treat effect and the estimated compliance rate. Consequently, leaf-level estimates can fall outside this logical range when the estimated compliance rate is close to zero, when the leaf contains relatively few observations, or when the first-stage relationship between the instrument and treatment receipt is weak. The weak-IV diagnostics reported in Appendix~\ref{app:Additional_Application_Results} help identify such cases. Very large leaf-level estimates should therefore be interpreted as evidence of instability in the corresponding leaves rather than as clinically meaningful effect sizes.

For example, for DRRF-IV, the raw leaf-level estimates ranged from \(-57\) among patients aged at least 77 years with ICNARC scores below 21, to \(8.6\), among patients in the same branch who additionally had low NEWS scores and for whom normal ward care was recommended after assessment. Both extreme estimates occurred in leaves flagged as weakly identified by the weak-IV diagnostic. Among the DRRF-IV leaves not flagged by this diagnostic, the estimates ranged from \(-0.21\) to \(0.56\). 
 
 GRF-IV produced estimates varying from $-1.62$ (for patients with an ICNARC score of less than 14, no sepsis, a NEWS score of less than 6 and an age greater than 78) to $6.61$ (for patients with an ICNARC score between 14 and 24, a SOFA score of less than 3 and an age greater than 83). Again, both extreme estimates occurred in leaves flagged as weakly identified by the weak-IV diagnostic. Among the GRF-IV leaves, which are not flagged, the estimates ranged from \(-0.49\) to \(0.35\). By contrast, BCF-IV yielded the narrowest range of raw leaf-level estimates, from \(-0.28\) to \(0.90\). However, BCF-IV identified only four terminal subgroups, one of which contained approximately \(90\%\) of the sample. Just for three of the four terminal leaves diagnostics were available, and only one of these was not flagged by the weak-IV diagnostic. This non-flagged leaf corresponded to patients over 63 years with SOFA scores over 5 and a recommendation for level 3 care after assessment, with an estimated CCACE of \(0.90\). 

 \FloatBarrier
 \begin{figure}[!htbp]
    \centering
    \caption{Decision tree generated by DRRF-IV}
    \label{fig:tree_drrf_iv}
    \includegraphics[width=0.95\textwidth]{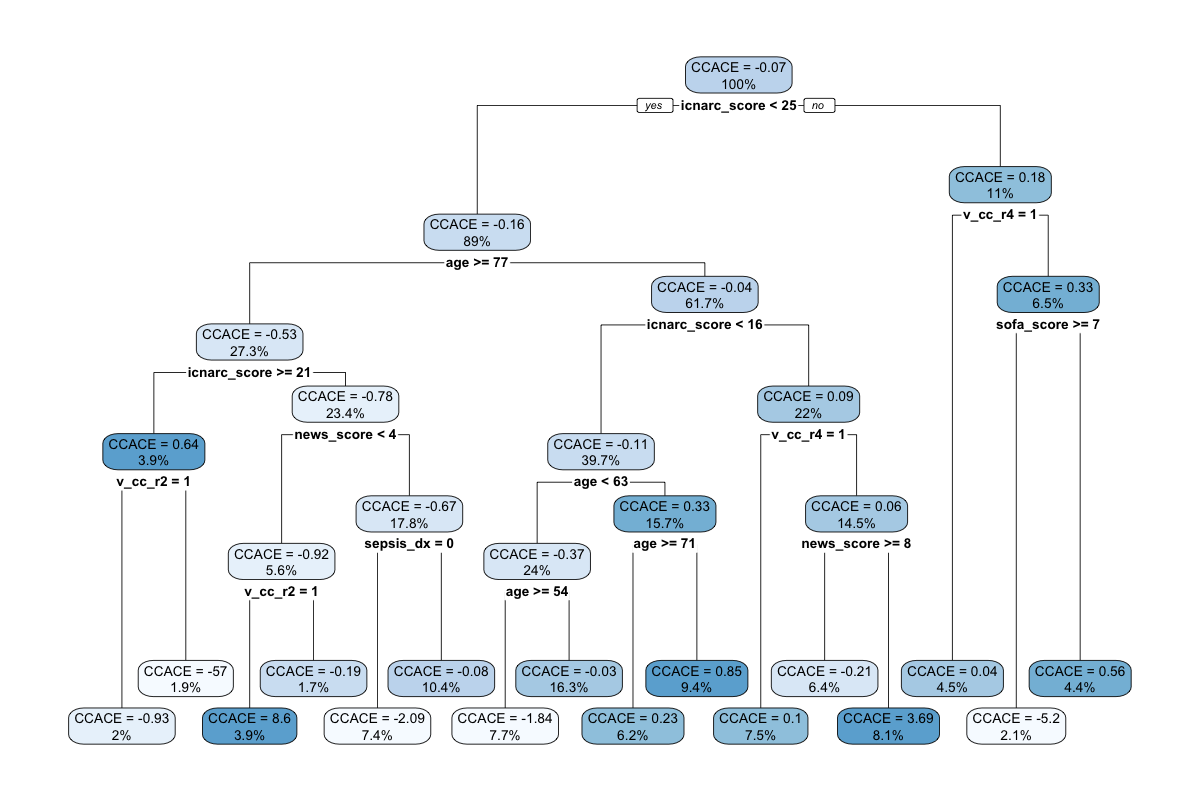}
    \caption*{\scriptsize\textit{Notes:} The figure shows the subgroup tree obtained from the DRRF-IV procedure. Terminal nodes correspond to discovered patient subgroups, with subgroup-specific CCACE estimates reported in the leaves. DRRF-IV identifies 16 terminal subgroups. Using a \(5\%\) threshold for the weak-IV diagnostic, 7 of the 16 leaves are not flagged as weakly identified. Lighter colors indicate smaller CCACE estimates, while darker colors indicate larger CCACE estimates.}
\end{figure}
\FloatBarrier

From a clinical perspective, some of these splitting patterns are plausible. Generally, patients showing signs of illness, such as sepsis, organ dysfunction (moderate SOFA scores) or intermediate ICNARC scores, or those with advanced age, tended to benefit from prompt ICU admission. However, very elderly patients and/or patients with only mild physiological derangements (low SOFA scores or intermediate ICNARC scores) exhibited an increased risk of mortality. A review of prognostic factors among very elderly ICU patients emphasizes that poor outcomes are not explained by age alone, but also by severity of illness \citep{ICU_Factors}.
Moreover, for relatively healthy or only moderately ill patients, the risks associated with ICU care (delirium, nosocomial infections, or other complications) may outweigh the potential benefits. Again, these patterns should be viewed as exploratory signals rather than confirmatory evidence.

\FloatBarrier

\begin{figure}[p]
    \centering
    \caption{Decision trees generated by GRF-IV and BCF-IV}
    \label{fig:tree_grf_bcf_iv}

    \begin{minipage}[t]{\textwidth}
        \centering
        \includegraphics[
            width=\textwidth,
            height=0.48\textheight,
            keepaspectratio
        ]{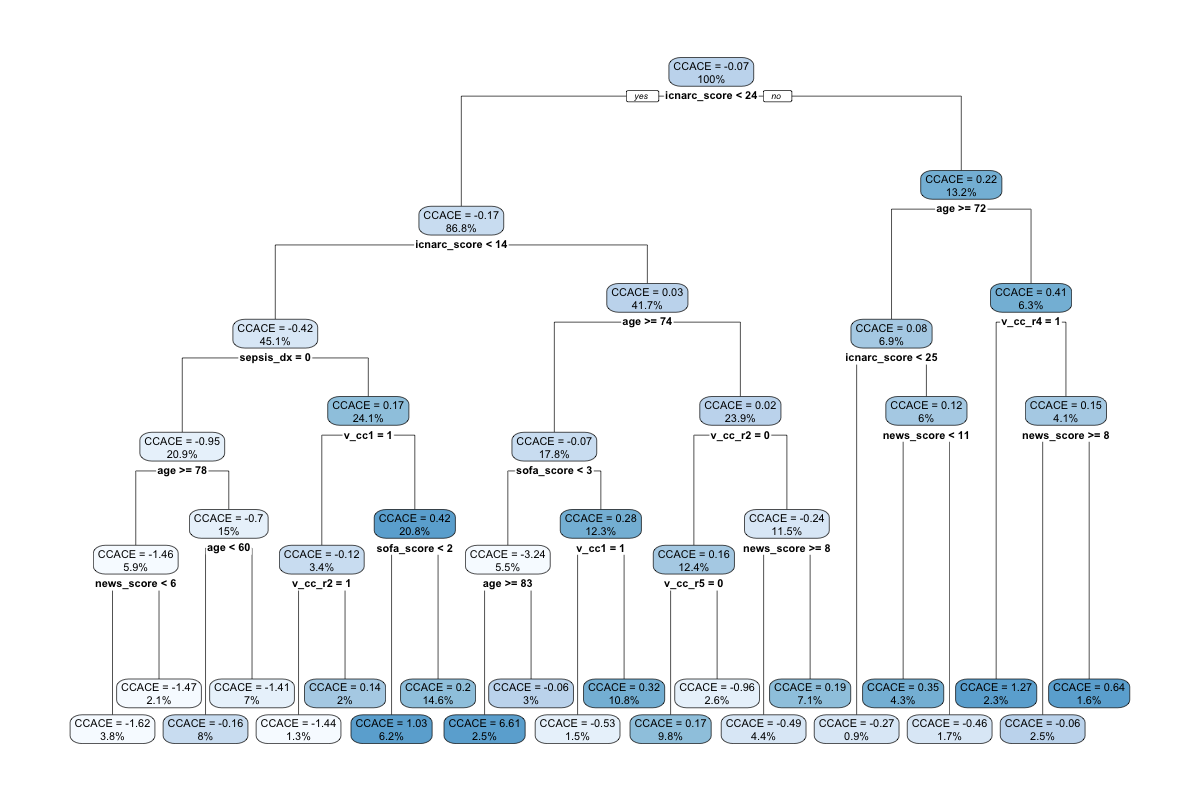}

        \smallskip
        (a) GRF-IV
    \end{minipage}

    \vspace{0.35cm}

    \begin{minipage}[t]{0.95\textwidth}
        \centering
        \includegraphics[
            width=0.65\textwidth,
            height=0.28\textheight,
            keepaspectratio
        ]{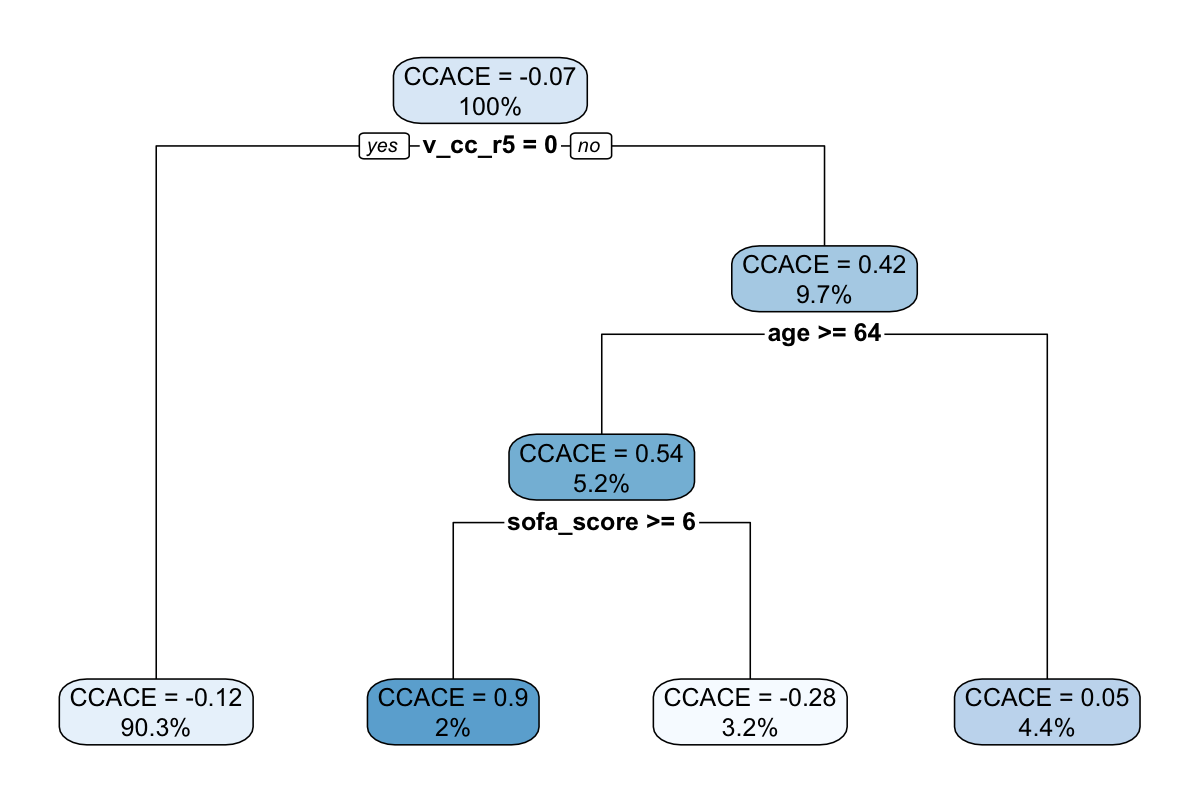}

        \smallskip
        (b) BCF-IV
    \end{minipage}

    \vspace{0.4cm}

    \begin{notes}
    Terminal nodes correspond to discovered patient subgroups, with subgroup-specific CCACE estimates reported in the leaves. GRF-IV identifies 22 terminal subgroups, of which 14 are flagged as weakly identified. BCF-IV identifies 4 terminal subgroups; diagnostics were available for three of these leaves, and only one was not flagged by the weak-IV diagnostic.
    \end{notes}
\end{figure}

\FloatBarrier

As a robustness check, we also examined which variables are most frequently used for constructing subgroups. For both BCF-IV and GRF-IV, the top three splitting variables were age and medical scores, which appears reasonable for defining subgroups. To assess this, we inspected the splitting variables across all leaves of the trees and counted their frequencies. Because BCF-IV partitions the sample into fewer and larger subgroups, it produces fewer splitting rules, so splitting variables appear less frequently by construction. The overall patterns of variable frequencies across the three methods are summarized in Figure~\ref{fig:vip_comparison} in Appendix \ref{sec:Figures}.


\section{Discussion} 
\label{ch:Discussion}

This paper examined heterogenous causal effects under imperfect compliance by developing a general two-step framework for CCACE estimation and subgroup discovery in IV settings. Building on the BCF-IV framework while relaxing its Bayesian constraints, we proposed two tree-based implementations, DRRF-IV and GRF-IV, that aim to provide more interpretable subgroup structures than existing forest-based IV estimators. In addition, we refined evaluation metrics to accommodate continuous covariates.

Our Monte Carlo comparison of DRRF-IV, GRF-IV, and BCF-IV shows that the non-Bayesian estimators perform on par with BCF-IV when treatment heterogeneity is sufficiently strong. By contrast, in low-signal settings, BCF-IV remains superior. Together, these findings suggest that the choice of estimator systematically interacts with the strength and structure of heterogeneous complier effects, which can be relevant in applied IV contexts where heterogeneity is weak or unevenly distributed across the covariate space.

The empirical application with pseudo-data from the SPOTlight study illustrates the limitations of all proposed methods. When the overall CACE is small and statistically insignificant, targeted subgroup discovery also struggled to identify subgroups with significant complier effects. Although the discovered subgroups varied across methods and showed substantial uncertainty, some patterns, such as the role of age and physiological severity, remain consistent with established clinical knowledge. These findings suggest that, in practice, hospitals may not be systematically misallocating ICU beds: prioritizing emergency admissions uniformly appears broadly consistent with the low and diffuse treatment effect signal found in the data. 

There are several directions for future research. A straightforward extension is to broaden the two-step framework to high-dimensional covariate settings or to integrate alternative machine-learning approaches, such as metalearners \citep{kunzel_metalearners_2019}, either for estimating $\operatorname{ITT}(x)$ or for directly targeting CCACE estimation in the subgroup discovery step. Another limitation of the current simulation design is that instrument is generated randomly. Future work should therefore consider settings with covariate-dependent instruments, weaker overlap, and propensity scores closer to one or zero. 
So far, we have just focused on modifying the discovery step. However, refining the inference step might also be promising. While our implementation relies on a standard IV regression, more advanced techniques, such as kernel IV regression \citep{kernelinstrumentalvariableregression}, may offer advantages in settings with non-binary instruments or treatments or general nonlinearities.  Finally, method-specific parameter tuning could also improve the performance of our methods. In the present study, all parameters were run using identical default settings to ensure comparability. A systematic investigation of tuning strategies may yield further gains in accuracy. 

\acks{I would like to thank Christoph Hanck for valuable feedback on earlier drafts of this manuscript and I am grateful to seminar participants at the RuhrMetrics Research Seminar, Statistische Woche 2025 and the EuroCIM 2025 for helpful comments and discussions. I also acknowledge partial financial support from TRR 391 Spatio-temporal Statistics for the Transition of Energy and Transport (520388526) by the Deutsche Forschungsgemeinschaft (DFG, German Research Foundation) and from the Rhine-Ruhr Center for Scientific Data Literacy (DKZ.2R) by the German Federal Ministry of Education and Research (BMBF).  During the preparation of this manuscript, I used OpenAI’s large language models (up to and including GPT-5.5) to assist with proofreading and language editing. I reviewed and verified all generated content and am solely responsible for the accuracy of the final manuscript and any remaining errors.
}

\bibliography{overall_citation}

\newpage 
\appendix
\section{Additional Monte Carlo Results}  
\label{app:Additional_MC}
\subsection{Evaluation Criteria for the Simulation Study}
\label{app:evaluation_criteria}

We evaluate subgroup discovery by comparing true and predicted subgroup labels for each unit $i$. In settings with continuous covariates, exact split points and node numbering vary across methods and simulation replications. We therefore proceed in two steps. Each $i \in \mathcal{I}_{\mathrm{inf}}$ is assigned to both a true subgroup label, which is determined by the data-generating process, and a predicted subgroup label, determined by the estimated terminal-node rules. In particular, the true subgroup structure is determined by the partition of the covariate space induced by \(X_1\) and \(X_2\), depending on whether the heterogeneity setting is slight or strong. This yields the true labels \(\mathcal{L}_0, \mathcal{L}_1, \ldots, \mathcal{L}_4\). 
 
Second, each observation is assigned a predicted terminal node label according to the estimated decision tree fitted during the discovery step. Since the numbering of estimated leaves is arbitrary, predicted terminal nodes are aligned with the true subgroup labels by comparing the sign pattern of the estimated split rules with that of the true subgroup definition. For example, a predicted node defined by conditions of the form \(X_1 < c_1\) and \(X_2 < c_2\) is matched to the true subgroup whose definition also corresponds to the lower-threshold region in both covariates. Analogously, a predicted node with \(X_1 > c_1\) and \(X_2 > c_2\) is matched to the corresponding upper-threshold subgroup. Thus, the alignment is based on the directional structure of the terminal-node rules rather than on the exact numerical split values. After the matching step, each unit has both a true subgroup label and a predicted subgroup label. Based on these labels, we evaluate subgroup discovery with the confusion matrix in Table~\ref{tab:confusion_metrics_multiclass_main}, treating each subgroup \(\mathcal{L}_j\) in a one-versus-rest classification setting. For completness we also define the significance rate for the null group in the strong heterogeneity  setting as
\begin{equation}
\text{SigRate}_{\mathcal{L}_0}
=
\mathbb{P}(\text{detected significant}\mid \text{true label}=\mathcal{L}_0).
\end{equation}
Reported values are averages over the \(MC=500\) simulation runs.

\subsection{Additional Results and Figures} 

\FloatBarrier
\begin{table}[!htbp]
\caption{Average subgroup discovery results under strong and slight heterogeneity.}
\label{tab:subgroup_results}
\centering
\setlength{\tabcolsep}{7pt}
\resizebox{.99\linewidth}{!}{
\begin{tabular}{l r rrrrrrr @{\hskip 1.2cm} rrrrrr}
\\[-1.8ex]\hline 
\hline \\[-1.8ex] 
\multicolumn{2}{c}{ } & \multicolumn{13}{c}{heterogeneity scenario} \\
\cmidrule(l{3pt}r{3pt}){3-15}
\multicolumn{2}{c}{ } & \multicolumn{7}{c}{strong} & \multicolumn{6}{c}{slight} \\
\cmidrule(l{3pt}r{3pt}){3-9} \cmidrule(l{3pt}r{3pt}){10-15}
Method & $k$ & TPR & TDR & FPR & FDR & ODR & SigRate & $\text{SigRate}_{0}$ & TPR & TDR & FPR & FDR & ODR & SigRate\\
\midrule
\multirow{8}{*}{BCF-IV}
 & 0.5 & 0.594 & 0.786 & 0.088 & 0.150 & 0.709 & 0.519 & 0.184 & 0.825 & 0.875 & 0.015 & 0.016 & 0.876 & 0.945\\
 & 1.0 & 0.882 & 0.898 & 0.064 & 0.073 & 0.848 & 0.956 & 0.215 & 0.982 & 0.982 & 0.006 & 0.006 & 0.982 & 1.000\\
 & 1.5 & 0.914 & 0.924 & 0.057 & 0.059 & 0.868 & 0.967 & 0.184 & 0.988 & 0.988 & 0.004 & 0.004 & 0.988 & 1.000\\
 & 2.0 & 0.923 & 0.931 & 0.054 & 0.054 & 0.871 & 0.976 & 0.177 & 0.990 & 0.990 & 0.003 & 0.003 & 0.990 & 1.000\\
 & 2.5 & 0.923 & 0.934 & 0.052 & 0.053 & 0.874 & 0.979 & 0.181 & 0.991 & 0.991 & 0.003 & 0.003 & 0.991 & 1.000\\
 & 3.0 & 0.923 & 0.936 & 0.051 & 0.052 & 0.875 & 0.981 & 0.190 & 0.992 & 0.992 & 0.003 & 0.003 & 0.992 & 1.000\\
 & 3.5 & 0.930 & 0.939 & 0.050 & 0.050 & 0.876 & 0.984 & 0.174 & 0.992 & 0.992 & 0.003 & 0.003 & 0.992 & 1.000\\
 & 4.0 & 0.927 & 0.939 & 0.050 & 0.050 & 0.876 & 0.984 & 0.183 & 0.992 & 0.992 & 0.003 & 0.003 & 0.992 & 1.000\\
\addlinespace
\multirow{8}{*}{DRRF-IV}
 & 0.5 & 0.518 & 0.632 & 0.155 & 0.202 & 0.793 & 0.427 & 0.141 & 0.356 & 0.381 & 0.012 & 0.012 & 0.381 & 0.958\\
 & 1.0 & 0.848 & 0.871 & 0.057 & 0.084 & 0.857 & 0.958 & 0.215 & 0.889 & 0.889 & 0.013 & 0.013 & 0.889 & 0.999\\
 & 1.5 & 0.908 & 0.920 & 0.051 & 0.059 & 0.875 & 0.976 & 0.178 & 0.981 & 0.981 & 0.006 & 0.006 & 0.981 & 1.000\\
 & 2.0 & 0.924 & 0.933 & 0.049 & 0.052 & 0.878 & 0.981 & 0.171 & 0.988 & 0.988 & 0.004 & 0.004 & 0.988 & 1.000\\
 & 2.5 & 0.928 & 0.939 & 0.049 & 0.049 & 0.879 & 0.983 & 0.178 & 0.991 & 0.991 & 0.003 & 0.003 & 0.991 & 1.000\\
 & 3.0 & 0.933 & 0.941 & 0.048 & 0.049 & 0.880 & 0.985 & 0.167 & 0.993 & 0.993 & 0.002 & 0.002 & 0.993 & 1.000\\
 & 3.5 & 0.934 & 0.942 & 0.048 & 0.048 & 0.880 & 0.985 & 0.167 & 0.994 & 0.994 & 0.002 & 0.002 & 0.994 & 1.000\\
 & 4.0 & 0.931 & 0.942 & 0.048 & 0.048 & 0.880 & 0.986 & 0.178 & 0.994 & 0.994 & 0.002 & 0.002 & 0.994 & 1.000\\
\addlinespace
\multirow{8}{*}{GRF-IV}
 & 0.5 & 0.499 & 0.643 & 0.144 & 0.201 & 0.779 & 0.415 & 0.149 & 0.245 & 0.262 & 0.007 & 0.007 & 0.262 & 0.956\\
 & 1.0 & 0.833 & 0.859 & 0.055 & 0.090 & 0.855 & 0.965 & 0.222 & 0.866 & 0.866 & 0.010 & 0.010 & 0.866 & 0.999\\
 & 1.5 & 0.905 & 0.919 & 0.048 & 0.059 & 0.876 & 0.981 & 0.181 & 0.982 & 0.982 & 0.006 & 0.006 & 0.982 & 1.000\\
 & 2.0 & 0.928 & 0.936 & 0.048 & 0.051 & 0.879 & 0.984 & 0.167 & 0.990 & 0.990 & 0.003 & 0.003 & 0.990 & 1.000\\
 & 2.5 & 0.933 & 0.942 & 0.048 & 0.048 & 0.879 & 0.988 & 0.172 & 0.993 & 0.993 & 0.002 & 0.002 & 0.993 & 1.000\\
 & 3.0 & 0.936 & 0.943 & 0.047 & 0.048 & 0.880 & 0.987 & 0.165 & 0.995 & 0.995 & 0.002 & 0.002 & 0.995 & 1.000\\
 & 3.5 & 0.933 & 0.943 & 0.047 & 0.048 & 0.880 & 0.988 & 0.174 & 0.995 & 0.995 & 0.002 & 0.002 & 0.995 & 1.000\\
 & 4.0 & 0.933 & 0.943 & 0.047 & 0.047 & 0.881 & 0.988 & 0.178 & 0.996 & 0.996 & 0.001 & 0.001 & 0.996 & 1.000\\
\bottomrule
\end{tabular}}
\caption*{\footnotesize\textit{Notes:} Entries report average subgroup discovery measures across simulation replications for \(N=10000\). Higher values of TPR, TDR, ODR, and SigRate indicate better performance, while lower values of FPR, FDR, and $\text{SigRate}_{0}$ are preferred.}
\end{table}
\FloatBarrier

\subsubsection*{Small Sample Size}
In addition to the large-sample design discussed in Section \ref{ch:sim_study}, we also consider a smaller sample size of \(N=2000\). Figures~\ref{fig:combined_TDR_2000}--\ref{fig:combined_FDR_2000} and Table~\ref{tab:subgroup_results_2000} report the corresponding subgroup discovery results.

In the slight heterogeneity scenario, BCF-IV dominates even at a low effect size ($k = 1$), achieving $\operatorname{TDR} = 0.790$ and $\operatorname{TPR} = 0.713$ with a $\operatorname{FDR}$ and $\operatorname{FPR}$ below 0.02. As $k$ increases to $4$, both $\operatorname{TDR}$ and TPR improve by approximately 25\% (to 0.982), while false rates decrease by 65\% (to 0.006). DRRF-IV and GRF-IV start weaker ($\operatorname{TDR}=0.233$ and 0.173 at $k=1$), but improve rapidly, with TDR and TPR increasing by over 300\% and converging to the BCF-IV benchmark at higher effect sizes. Significance rates are already high (above 0.90) and reach 1.0 for all methods, indicating that each approach correctly identifies the subgroups with significant treatment effects.

Under strong heterogeneity, performance gaps are more pronounced. At $k=1$, BCF-IV achieves the best performance, with $\operatorname{TDR}=0.737$ and $\operatorname{TPR}=0.501$, and $\operatorname{FDR}=0.169$, outperforming DRRF-IV and GRF-IV by roughly 25\%. With larger effect sizes ($k=4$), all methods perform better: $\operatorname{TPR}$ increases by about 78\% for BCF-IV (to 0.890) and by over 110\% for DRRF-IV and GRF-IV (to 0.853 and 0.873, respectively). False rates decrease by more than 60\%, while significance rates increase from  0.256 and 0.274 to above 0.93.
\FloatBarrier
\begin{figure}[!htbp]
  \centering
    \caption{TDR and TPR under strong and slight heterogeneity}
  \includegraphics[width=0.85\textwidth]{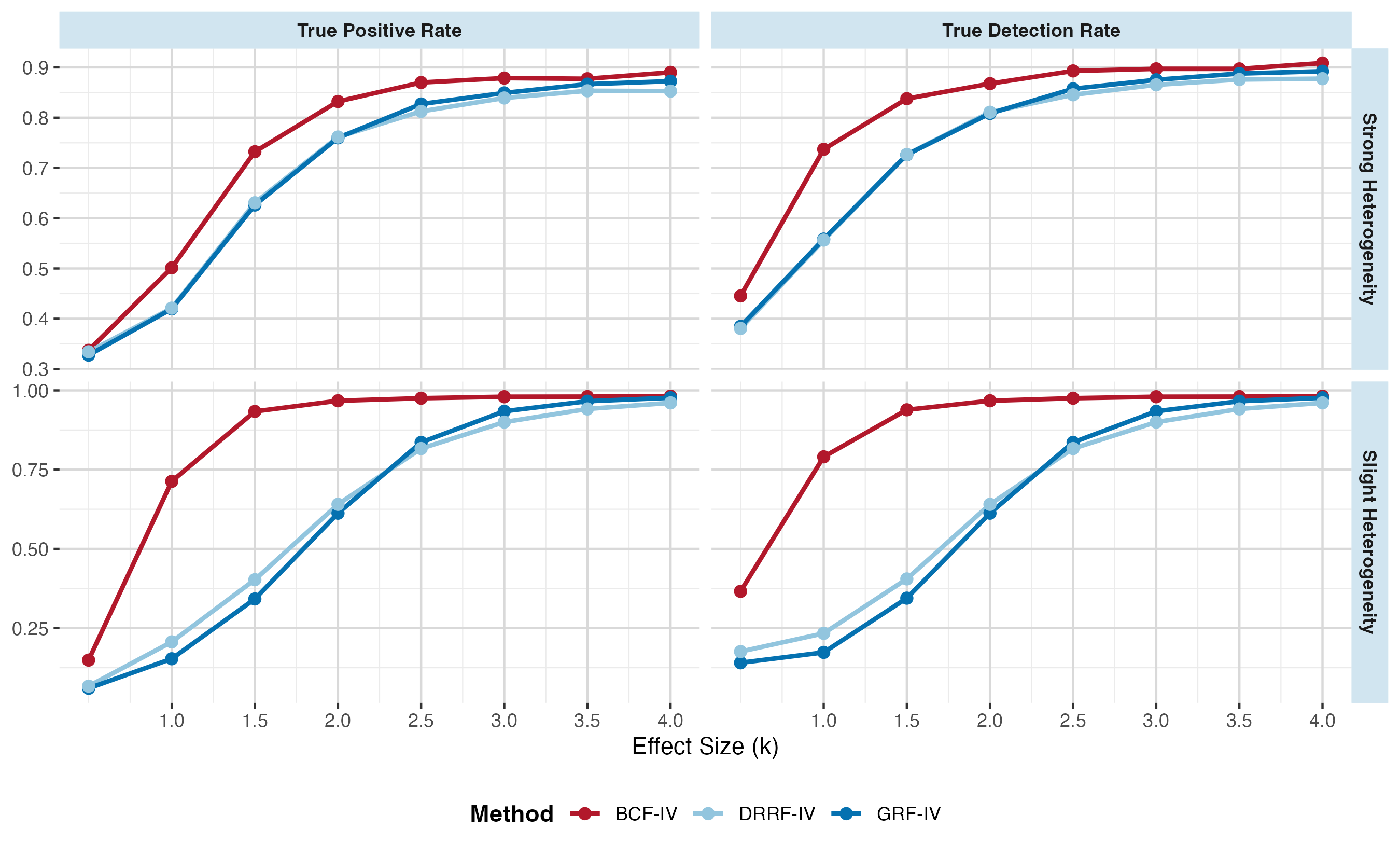} 
  \label{fig:combined_TDR_2000}
  \caption*{\scriptsize\textit{Notes:} Results are averaged over 500 Monte Carlo replications with \(N=2000\). Rows distinguish the strong and slight heterogeneity scenarios, and the horizontal axis shows the treatment effect size \(k\). Higher values indicate better performance.}
\end{figure}
\FloatBarrier

Table \ref{tab:mse_bias_mae_2000} reports the estimation performance of the local CCACE. BCF-IV achieves the lowest MSE at small effect sizes (for $k=0.5$, $0.041$ under slight heterogeneity vs.\ $0.043$ and $0.045$ for DRRF-IV and GRF-IV). However, as $k$ increases, DRRF-IV and GRF-IV close the gap and achieve similar accuracy. The bias remains small across methods (ranging from $-0.006$ and $0.002$), with BCF-IV showing a slightly larger bias at small $k$ under strong heterogeneity.
\FloatBarrier
\begin{figure}[!htbp]
  \caption{FDR and FPR under strong and slight heterogeneity}
  \label{fig:combined_FDR_2000}
  \centering
  \includegraphics[width=0.8\textwidth]{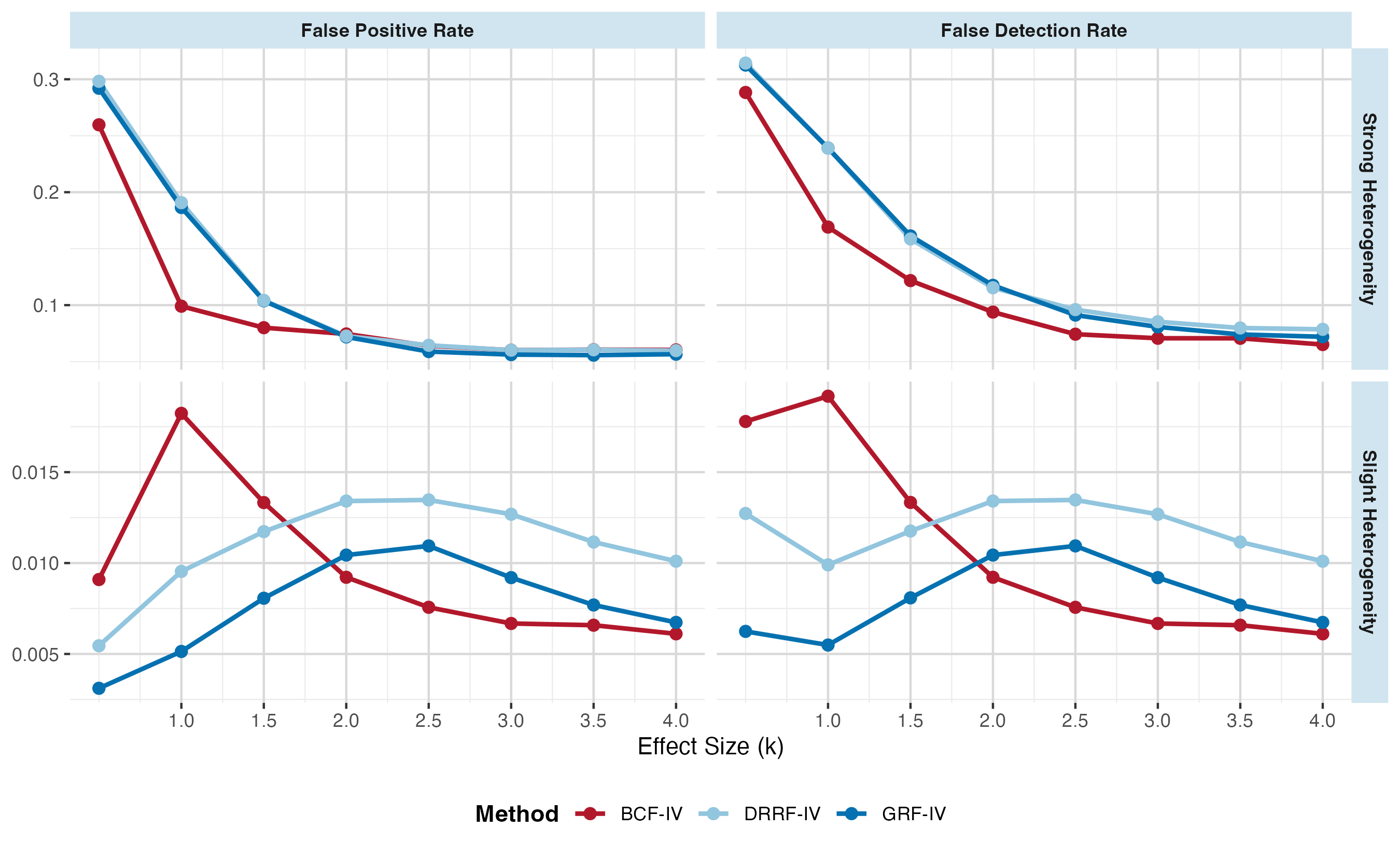} 
  \caption*{\scriptsize\textit{Notes:} Results are averaged over 500 Monte Carlo replications with \(N=2000\). Rows distinguish the strong and slight heterogeneity scenarios, and the horizontal axis shows the treatment effect size \(k\). Lower values indicate better performance.}
\end{figure}
\FloatBarrier
\FloatBarrier

\begin{figure}[!htbp]
  \centering
    \caption{ODR and significance rates for strong and slight heterogeneity}
    \label{fig:combined_ODR_2000}
  \includegraphics[width=0.8\textwidth]{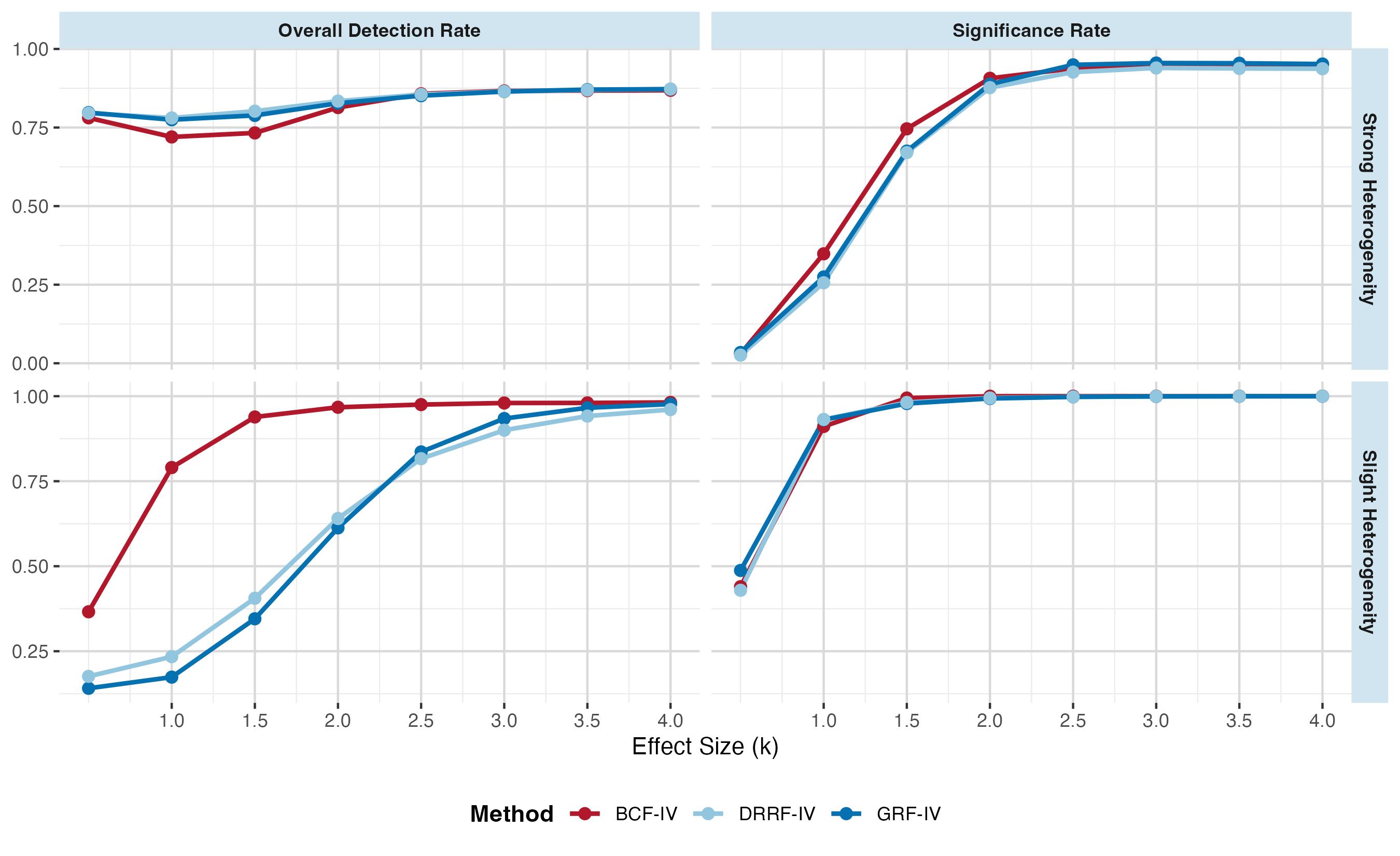} 
  \caption*{\scriptsize\textit{Notes:} Results are averaged over 500 Monte Carlo replications with \(N=2000\).  Rows distinguish the strong and slight heterogeneity scenarios, and the horizontal axis shows the treatment effect size \(k\). Higher values indicate better performance; values close to one indicate better detection of significant subgroup effects.}
\end{figure}

Overall, consistent with the large-sample results, DRRF-IV and GRF-IV perform worse than BCF-IV when the effect size is small. However, the effect size required for the methods to perform on par depends on the sample size. While for $N=10000$ a moderate effect size is sufficient, with the methods becoming comparable from about $k=1.5$ onward, for $N=2000$ this only occurs at substantially larger effect sizes, around $k=3.5$. One possible explanation is the use of honesty in GRF-IV and DRRF-IV. Due to the sample splitting, only about one quarter of the sample is available for tree construction. The data is first divided into discovery and inference samples, and then the discovery sample is split again into a training and an estimation sample. In smaller sets, this may not be enough information to detect reliable subgroups. We refer the reader to \citet{hou2025honesty} for a detailed comparison of adaptive and honest splitting.

\FloatBarrier
\begin{table}[!htbp]
\caption{Subgroup discovery results under strong and slight heterogeneity.}
\label{tab:subgroup_results_2000}
\centering
\setlength{\tabcolsep}{7pt}
\resizebox{.99\linewidth}{!}{
\begin{tabular}{l r rrrrrrr @{\hskip 1.2cm} rrrrrr}
\\[-1.8ex]\hline 
\hline \\[-1.8ex] 
\multicolumn{2}{c}{ } & \multicolumn{13}{c}{heterogeneity scenario} \\
\cmidrule(l{3pt}r{3pt}){3-15}
\multicolumn{2}{c}{ } & \multicolumn{7}{c}{strong} & \multicolumn{6}{c}{slight} \\
\cmidrule(l{3pt}r{3pt}){3-9} \cmidrule(l{3pt}r{3pt}){10-15}
Method & $k$ & TPR & TDR & FPR & FDR & ODR & SigRate & $\text{SigRate}_{0}$ & TPR & TDR & FPR & FDR & ODR & SigRate\\
\midrule
\multirow{8}{*}{BCF-IV}
 & 0.5 & 0.337 & 0.445 & 0.260 & 0.288 & 0.781 & 0.030 & 0.019 & 0.149 & 0.366 & 0.009 & 0.018 & 0.366 & 0.440\\
 & 1.0 & 0.501 & 0.737 & 0.099 & 0.169 & 0.720 & 0.348 & 0.117 & 0.713 & 0.790 & 0.018 & 0.019 & 0.790 & 0.911\\
 & 1.5 & 0.732 & 0.838 & 0.080 & 0.122 & 0.733 & 0.746 & 0.232 & 0.934 & 0.939 & 0.013 & 0.013 & 0.939 & 0.995\\
 & 2.0 & 0.832 & 0.868 & 0.074 & 0.094 & 0.814 & 0.907 & 0.236 & 0.967 & 0.967 & 0.009 & 0.009 & 0.967 & 1.000\\
 & 2.5 & 0.870 & 0.893 & 0.064 & 0.074 & 0.857 & 0.938 & 0.209 & 0.975 & 0.975 & 0.008 & 0.008 & 0.975 & 1.000\\
 & 3.0 & 0.879 & 0.897 & 0.060 & 0.071 & 0.867 & 0.948 & 0.198 & 0.980 & 0.980 & 0.007 & 0.007 & 0.980 & 1.000\\
 & 3.5 & 0.877 & 0.897 & 0.061 & 0.071 & 0.867 & 0.947 & 0.203 & 0.980 & 0.980 & 0.007 & 0.007 & 0.980 & 1.000\\
 & 4.0 & 0.890 & 0.909 & 0.060 & 0.065 & 0.868 & 0.951 & 0.203 & 0.982 & 0.982 & 0.006 & 0.006 & 0.982 & 1.000\\
\addlinespace
\multirow{8}{*}{DRRF-IV}
 & 0.5 & 0.334 & 0.381 & 0.298 & 0.314 & 0.796 & 0.026 & 0.016 & 0.067 & 0.176 & 0.005 & 0.013 & 0.176 & 0.429\\
 & 1.0 & 0.421 & 0.556 & 0.191 & 0.239 & 0.781 & 0.256 & 0.106 & 0.207 & 0.233 & 0.010 & 0.010 & 0.234 & 0.932\\
 & 1.5 & 0.631 & 0.727 & 0.104 & 0.159 & 0.802 & 0.670 & 0.216 & 0.403 & 0.405 & 0.012 & 0.012 & 0.406 & 0.981\\
 & 2.0 & 0.762 & 0.811 & 0.073 & 0.115 & 0.834 & 0.877 & 0.240 & 0.640 & 0.640 & 0.013 & 0.013 & 0.640 & 0.995\\
 & 2.5 & 0.813 & 0.845 & 0.064 & 0.096 & 0.855 & 0.926 & 0.230 & 0.816 & 0.816 & 0.013 & 0.013 & 0.816 & 0.998\\
 & 3.0 & 0.839 & 0.865 & 0.060 & 0.085 & 0.865 & 0.939 & 0.212 & 0.900 & 0.900 & 0.013 & 0.013 & 0.900 & 0.999\\
 & 3.5 & 0.853 & 0.876 & 0.060 & 0.080 & 0.869 & 0.938 & 0.201 & 0.941 & 0.941 & 0.011 & 0.011 & 0.942 & 1.000\\
 & 4.0 & 0.853 & 0.878 & 0.060 & 0.079 & 0.872 & 0.937 & 0.204 & 0.961 & 0.961 & 0.010 & 0.010 & 0.961 & 1.000\\
\addlinespace
\multirow{8}{*}{GRF-IV}
 & 0.5 & 0.328 & 0.385 & 0.292 & 0.313 & 0.797 & 0.034 & 0.027 & 0.060 & 0.141 & 0.003 & 0.006 & 0.141 & 0.487\\
 & 1.0 & 0.420 & 0.558 & 0.186 & 0.239 & 0.775 & 0.274 & 0.121 & 0.153 & 0.173 & 0.005 & 0.005 & 0.174 & 0.931\\
 & 1.5 & 0.626 & 0.726 & 0.104 & 0.161 & 0.789 & 0.675 & 0.227 & 0.342 & 0.344 & 0.008 & 0.008 & 0.345 & 0.978\\
 & 2.0 & 0.760 & 0.808 & 0.072 & 0.118 & 0.827 & 0.888 & 0.249 & 0.613 & 0.613 & 0.010 & 0.010 & 0.613 & 0.993\\
 & 2.5 & 0.827 & 0.857 & 0.059 & 0.091 & 0.851 & 0.949 & 0.229 & 0.836 & 0.836 & 0.011 & 0.011 & 0.836 & 0.998\\
 & 3.0 & 0.849 & 0.875 & 0.056 & 0.081 & 0.864 & 0.955 & 0.216 & 0.934 & 0.934 & 0.009 & 0.009 & 0.934 & 0.999\\
 & 3.5 & 0.867 & 0.888 & 0.056 & 0.074 & 0.871 & 0.954 & 0.199 & 0.966 & 0.966 & 0.008 & 0.008 & 0.966 & 1.000\\
 & 4.0 & 0.873 & 0.892 & 0.057 & 0.072 & 0.872 & 0.952 & 0.195 & 0.977 & 0.977 & 0.007 & 0.007 & 0.977 & 1.000\\
\bottomrule
\end{tabular}}
\caption*{\footnotesize\textit{Notes:} Entries report average subgroup discovery measures across simulation replications for \(N=2000\). Higher values of TPR, TDR, ODR, and SigRate indicate better performance, while lower values of FPR, FDR, and $\text{SigRate}_{0}$ are preferred.}
\end{table}
\FloatBarrier

\FloatBarrier
\begin{table}[!htbp]
\caption{MSE, Bias, and MAE under strong and slight heterogeneity.}
\label{tab:mse_bias_mae_2000}
\centering
\setlength{\tabcolsep}{10pt}
\resizebox{.95\linewidth}{!}{
\begin{tabular}{l r rrr @{\hskip 1.8cm} rrr}
\\[-1.8ex]\hline 
\hline \\[-1.8ex] 
\multicolumn{2}{c}{ } & \multicolumn{6}{c}{heterogeneity scenario} \\
\cmidrule(l{3pt}r{3pt}){3-8}
\multicolumn{2}{c}{ } & \multicolumn{3}{c}{strong} & \multicolumn{3}{c}{slight} \\
\cmidrule(l{3pt}r{3pt}){3-5} \cmidrule(l{3pt}r{3pt}){6-8}
Method & $k$ & MSE & Bias & MAE & MSE & Bias & MAE\\
\midrule
\multirow{8}{*}{BCF-IV}
 & 0.5 & 0.046 & -0.006 & 0.159 & 0.041 & -0.002 & 0.149\\
 & 1.0 & 0.041 & -0.006 & 0.153 & 0.038 & -0.003 & 0.146\\
 & 1.5 & 0.041 & -0.002 & 0.157 & 0.030 & -0.001 & 0.138\\
 & 2.0 & 0.042 &  0.001 & 0.158 & 0.030 & -0.006 & 0.138\\
 & 2.5 & 0.049 & -0.002 & 0.168 & 0.033 &  0.002 & 0.139\\
 & 3.0 & 0.049 & -0.003 & 0.171 & 0.031 & -0.003 & 0.141\\
 & 3.5 & 0.054 & -0.000 & 0.176 & 0.029 & -0.004 & 0.134\\
 & 4.0 & 0.059 & -0.008 & 0.183 & 0.031 & -0.002 & 0.141\\
\addlinespace
\multirow{8}{*}{DRRF-IV}
 & 0.5 & 0.064 & -0.003 & 0.176 & 0.043 & -0.005 & 0.153\\
 & 1.0 & 0.049 &  0.002 & 0.162 & 0.068 & -0.002 & 0.182\\
 & 1.5 & 0.042 & -0.002 & 0.159 & 0.122 & -0.015 & 0.210\\
 & 2.0 & 0.046 & -0.002 & 0.166 & 0.103 & -0.012 & 0.191\\
 & 2.5 & 0.048 & -0.003 & 0.170 & 0.082 & -0.011 & 0.168\\
 & 3.0 & 0.051 & -0.003 & 0.174 & 0.054 & -0.004 & 0.157\\
 & 3.5 & 0.054 &  0.001 & 0.179 & 0.054 & -0.003 & 0.155\\
 & 4.0 & 0.057 & -0.000 & 0.183 & 0.037 & -0.004 & 0.147\\
\addlinespace
\multirow{8}{*}{GRF-IV}
 & 0.5 & 0.037 &  0.000 & 0.148 & 0.045 &  0.002 & 0.155\\
 & 1.0 & 0.045 &  0.002 & 0.159 & 0.084 & -0.005 & 0.193\\
 & 1.5 & 0.040 & -0.000 & 0.156 & 0.137 & -0.012 & 0.223\\
 & 2.0 & 0.043 & -0.002 & 0.161 & 0.129 & -0.005 & 0.207\\
 & 2.5 & 0.047 & -0.002 & 0.168 & 0.082 & -0.003 & 0.172\\
 & 3.0 & 0.050 & -0.002 & 0.173 & 0.047 & -0.004 & 0.152\\
 & 3.5 & 0.053 &  0.001 & 0.178 & 0.035 & -0.004 & 0.144\\
 & 4.0 & 0.058 &  0.000 & 0.183 & 0.033 & -0.001 & 0.143\\
\bottomrule
\end{tabular}}
\caption*{\footnotesize\textit{Notes:} Entries report average estimation accuracy measures across simulation replications for \(N=2000\). For MSE and MAE, lower values indicate better performance; for Bias, values closer to zero are preferred.}
\end{table}
\FloatBarrier

\subsection{Additional Application Results}  
\label{app:Additional_Application_Results}
\begin{landscape}
\begin{table}[!htbp]
\centering
\scriptsize
\caption{Leaf-level DRRF-IV estimates and weak-IV diagnostics}
\label{tab:drrf_leaf_diagnostics}
\begin{tabular}{>{\raggedright\arraybackslash}p{9.2cm}rrrrr}
\\[-1.8ex]\hline 
\hline \\[-1.8ex] Subgroup rule & CCACE & \(p\)-value & Weak-IV \(p\)-value & \(\widehat{\pi}_C\) & SE \\
\midrule

\texttt{icnarc\_score} \(\geq 24.5\), \texttt{v\_cc\_r4} \(\geq 0.5\)
& 0.0418 & 0.8918 & 0.0002 & 0.1898 & 0.3068 \\

\texttt{icnarc\_score} \(\geq 24.5\), \texttt{v\_cc\_r4} \(< 0.5\), \texttt{sofa\_score} \(\geq 6.5\)
& -5.2000 & 0.8311 & 0.8271 & 0.0162 & 24.3376 \\

\texttt{icnarc\_score} \(\geq 24.5\), \texttt{v\_cc\_r4} \(< 0.5\), \texttt{sofa\_score} \(< 6.5\)
& 0.5648 & 0.1310 & 0.0019 & 0.1847 & 0.3729 \\

\texttt{icnarc\_score} \(< 24.5\), \texttt{age} \(\geq 76.5\), \texttt{icnarc\_score} \(\geq 20.5\), \texttt{v\_cc\_r2} \(\geq 0.5\)
& -0.9256 & 0.3548 & 0.1043 & -0.1048 & 0.9966 \\

\texttt{icnarc\_score} \(< 24.5\), \texttt{age} \(\geq 76.5\), \texttt{icnarc\_score} \(\geq 20.5\), \texttt{v\_cc\_r2} \(< 0.5\)
& -57.0000 & 0.9971 & 0.9971 & -0.0003 & 15826.3335 \\

\texttt{icnarc\_score} \(< 24.5\), \texttt{age} \(< 76.5\), \texttt{icnarc\_score} \(\geq 15.5\), \texttt{v\_cc\_r4} \(\geq 0.5\)
& 0.1045 & 0.5480 & 0.0000 & 0.2328 & 0.1738 \\

\texttt{icnarc\_score} \(< 24.5\), \texttt{age} \(\geq 76.5\), \texttt{icnarc\_score} \(< 20.5\), \texttt{news\_score} \(< 3.5\), \texttt{v\_cc\_r2} \(\geq 0.5\)
& 8.6044 & 0.6367 & 0.6499 & -0.0177 & 18.1969 \\

\texttt{icnarc\_score} \(< 24.5\), \texttt{age} \(\geq 76.5\), \texttt{icnarc\_score} \(< 20.5\), \texttt{news\_score} \(< 3.5\), \texttt{v\_cc\_r2} \(< 0.5\)
& -0.1887 & 0.3221 & 0.0000 & 0.4539 & 0.1897 \\

\texttt{icnarc\_score} \(< 24.5\), \texttt{age} \(\geq 76.5\), \texttt{icnarc\_score} \(< 20.5\), \texttt{news\_score} \(\geq 3.5\), \texttt{sepsis\_dx} \(< 0.5\)
& -2.0919 & 0.1714 & 0.1204 & 0.0594 & 1.5272 \\

\texttt{icnarc\_score} \(< 24.5\), \texttt{age} \(\geq 76.5\), \texttt{icnarc\_score} \(< 20.5\), \texttt{news\_score} \(\geq 3.5\), \texttt{sepsis\_dx} \(\geq 0.5\)
& -0.0848 & 0.8833 & 0.0644 & 0.0660 & 0.5774 \\

\texttt{icnarc\_score} \(< 24.5\), \texttt{age} \(< 76.5\), \texttt{icnarc\_score} \(< 15.5\), \texttt{age} \(< 62.5\), \texttt{age} \(\geq 53.5\)
& -1.8443 & 0.5028 & 0.4424 & 0.0323 & 2.7505 \\

\texttt{icnarc\_score} \(< 24.5\), \texttt{age} \(< 76.5\), \texttt{icnarc\_score} \(< 15.5\), \texttt{age} \(< 62.5\), \texttt{age} \(< 53.5\)
& -0.0341 & 0.8655 & 0.0052 & 0.0806 & 0.2013 \\

\texttt{icnarc\_score} \(< 24.5\), \texttt{age} \(< 76.5\), \texttt{icnarc\_score} \(< 15.5\), \texttt{age} \(\geq 62.5\), \texttt{age} \(\geq 70.5\)
& 0.2276 & 0.4439 & 0.0032 & 0.1405 & 0.2969 \\

\texttt{icnarc\_score} \(< 24.5\), \texttt{age} \(< 76.5\), \texttt{icnarc\_score} \(< 15.5\), \texttt{age} \(\geq 62.5\), \texttt{age} \(< 70.5\)
& 0.8526 & 0.5674 & 0.4779 & 0.0272 & 1.4900 \\

\texttt{icnarc\_score} \(< 24.5\), \texttt{age} \(< 76.5\), \texttt{icnarc\_score} \(\geq 15.5\), \texttt{v\_cc\_r4} \(< 0.5\), \texttt{news\_score} \(\geq 7.5\)
& -0.2104 & 0.5746 & 0.0109 & 0.1254 & 0.3746 \\

\texttt{icnarc\_score} \(< 24.5\), \texttt{age} \(< 76.5\), \texttt{icnarc\_score} \(\geq 15.5\), \texttt{v\_cc\_r4} \(< 0.5\), \texttt{news\_score} \(< 7.5\)
& 3.6916 & 0.8754 & 0.8750 & 0.0067 & 23.5262 \\
\bottomrule
\end{tabular}
\caption*{\scriptsize\textit{Notes:} The table reports leaf-level DRRF-IV estimates for the empirical application. The weak-IV \(p\)-value is the first-stage relevance diagnostic from the leaf-level IV regression. Large weak-IV \(p\)-values indicate weak evidence of first-stage relevance. \(\widehat{\pi}_C\) denotes the estimated compliance rate within the leaf. Adjusted \(p\)-values are omitted because they are equal to 1 for all leaves.}
\end{table}
\end{landscape}

\begin{landscape}
\begin{table}[!htbp]
\centering
\scriptsize
\caption{Leaf-level GRF-IV estimates and weak-IV diagnostics}
\label{tab:grf_leaf_diagnostics}
\begin{tabular}{>{\raggedright\arraybackslash}p{9.2cm}rrrrr}
\\[-1.8ex]\hline 
\hline \\[-1.8ex] 
Subgroup rule & CCACE & \(p\)-value & Weak-IV \(p\)-value & \(\widehat{\pi}_C\) & SE \\
\midrule

\(23.5 \leq \texttt{icnarc\_score} < 24.5\), \texttt{age} \(\geq 71.5\)
& -0.2749 & 0.6598 & 0.1040 & 0.6071 & 0.6209 \\

\texttt{icnarc\_score} \(\geq 23.5\), \texttt{age} \(< 71.5\), \texttt{v\_cc\_r4} \(\geq 0.5\)
& 1.2678 & 0.3433 & 0.2134 & 0.5685 & 1.3335 \\

\texttt{icnarc\_score} \(\geq 24.5\), \texttt{age} \(\geq 71.5\), \texttt{news\_score} \(< 10.5\)
& 0.3462 & 0.2210 & 0.0002 & 0.6173 & 0.2823 \\

\texttt{icnarc\_score} \(\geq 24.5\), \texttt{age} \(\geq 71.5\), \texttt{news\_score} \(\geq 10.5\)
& -0.4633 & 0.3665 & 0.0253 & 0.6091 & 0.5109 \\

\texttt{icnarc\_score} \(\geq 23.5\), \texttt{age} \(< 71.5\), \texttt{v\_cc\_r4} \(< 0.5\), \texttt{news\_score} \(\geq 7.5\)
& -0.0577 & 0.9035 & 0.0152 & 0.5915 & 0.4746 \\

\texttt{icnarc\_score} \(\geq 23.5\), \texttt{age} \(< 71.5\), \texttt{v\_cc\_r4} \(< 0.5\), \texttt{news\_score} \(< 7.5\)
& 0.6420 & 0.6968 & 0.5197 & 0.5300 & 1.6427 \\

\texttt{icnarc\_score} \(< 13.5\), \texttt{sepsis\_dx} \(< 0.5\), \texttt{age} \(\geq 77.5\), \texttt{news\_score} \(< 5.5\)
& -1.6189 & 0.1499 & 0.0800 & 0.5451 & 1.1206 \\

\texttt{icnarc\_score} \(< 13.5\), \texttt{sepsis\_dx} \(< 0.5\), \texttt{age} \(\geq 77.5\), \texttt{news\_score} \(\geq 5.5\)
& -1.4668 & 0.2855 & 0.2091 & 0.5294 & 1.3678 \\

\texttt{icnarc\_score} \(< 13.5\), \texttt{sepsis\_dx} \(< 0.5\), \texttt{age} \(< 59.5\)
& -0.1607 & 0.5878 & 0.0477 & 0.5184 & 0.2963 \\

\texttt{icnarc\_score} \(< 13.5\), \texttt{sepsis\_dx} \(< 0.5\), \(59.5 \leq \texttt{age} < 77.5\)
& -1.4095 & 0.2804 & 0.1587 & 0.5320 & 1.3042 \\

\texttt{icnarc\_score} \(< 13.5\), \texttt{sepsis\_dx} \(\geq 0.5\), \texttt{v\_cc1} \(\geq 0.5\), \texttt{v\_cc\_r2} \(\geq 0.5\)
& -1.4384 & 0.4611 & 0.3909 & 0.5581 & 1.9427 \\

\texttt{icnarc\_score} \(< 13.5\), \texttt{sepsis\_dx} \(\geq 0.5\), \texttt{v\_cc1} \(\geq 0.5\), \texttt{v\_cc\_r2} \(< 0.5\)
& 0.1352 & 0.5547 & 0.0004 & 0.6641 & 0.2282 \\

\texttt{icnarc\_score} \(< 13.5\), \texttt{sepsis\_dx} \(\geq 0.5\), \texttt{v\_cc1} \(< 0.5\), \texttt{sofa\_score} \(< 1.5\)
& 1.0277 & 0.5413 & 0.4674 & 0.4710 & 1.6812 \\

\texttt{icnarc\_score} \(< 13.5\), \texttt{sepsis\_dx} \(\geq 0.5\), \texttt{v\_cc1} \(< 0.5\), \texttt{sofa\_score} \(\geq 1.5\)
& 0.2031 & 0.7952 & 0.2826 & 0.5042 & 0.7824 \\

\(13.5 \leq \texttt{icnarc\_score} < 23.5\), \texttt{age} \(\geq 82.5\), \texttt{sofa\_score} \(< 2.5\)
& 6.6055 & 0.5493 & 0.5528 & 0.4214 & 11.0072 \\

\(13.5 \leq \texttt{icnarc\_score} < 23.5\), \(73.5 \leq \texttt{age} < 82.5\), \texttt{sofa\_score} \(< 2.5\)
& -0.0612 & 0.9305 & 0.1511 & 0.5204 & 0.7010 \\

\(13.5 \leq \texttt{icnarc\_score} < 23.5\), \texttt{age} \(\geq 73.5\), \texttt{sofa\_score} \(\geq 2.5\), \texttt{v\_cc1} \(\geq 0.5\)
& -0.5286 & 0.7791 & 0.4941 & 0.4787 & 1.8785 \\

\(13.5 \leq \texttt{icnarc\_score} < 23.5\), \texttt{age} \(\geq 73.5\), \texttt{sofa\_score} \(\geq 2.5\), \texttt{v\_cc1} \(< 0.5\)
& 0.3220 & 0.1779 & 0.0000 & 0.5714 & 0.2388 \\

\(13.5 \leq \texttt{icnarc\_score} < 23.5\), \texttt{age} \(< 73.5\), \texttt{v\_cc\_r2} \(< 0.5\), \texttt{v\_cc\_r5} \(< 0.5\)
& 0.1677 & 0.3031 & 0.0000 & 0.6010 & 0.1627 \\

\(13.5 \leq \texttt{icnarc\_score} < 23.5\), \texttt{age} \(< 73.5\), \texttt{v\_cc\_r2} \(< 0.5\), \texttt{v\_cc\_r5} \(\geq 0.5\)
& -0.9552 & 0.8977 & 0.6476 & 0.4706 & 7.4198 \\

\(13.5 \leq \texttt{icnarc\_score} < 23.5\), \texttt{age} \(< 73.5\), \texttt{v\_cc\_r2} \(\geq 0.5\), \texttt{news\_score} \(\geq 7.5\)
& -0.4871 & 0.2797 & 0.0167 & 0.5532 & 0.4497 \\

\(13.5 \leq \texttt{icnarc\_score} < 23.5\), \texttt{age} \(< 73.5\), \texttt{v\_cc\_r2} \(\geq 0.5\), \texttt{news\_score} \(< 7.5\)
& 0.1922 & 0.8072 & 0.2735 & 0.5054 & 0.7869 \\

\bottomrule
\end{tabular}
\caption*{\scriptsize\textit{Notes:} The table reports leaf-level GRF-IV estimates for the empirical application. Redundant split conditions are omitted from the subgroup rules for readability. Adjusted \(p\)-values are omitted because they are equal to 1 for all leaves.}
\end{table}
\end{landscape}
\begin{table}[!htbp]
\centering
\scriptsize
\caption{Leaf-level BCF-IV estimates and weak-IV diagnostics}
\label{tab:bcf_leaf_diagnostics}
\begin{tabular}{>{\raggedright\arraybackslash}p{6cm}rrrrr}
\\[-1.8ex]\hline 
\hline \\[-1.8ex] 
Subgroup rule & CCACE & \(p\)-value & Weak-IV \(p\)-value & \(\widehat{\pi}_C\) & SE \\
\midrule

\texttt{v\_cc\_r5} \(\geq 0.5\), \texttt{age} \(< 63.5\)
& 0.0453 & 0.9789 & 0.1363 & 0.0325 & 1.7103 \\

\texttt{v\_cc\_r5} \(\geq 0.5\), \texttt{age} \(\geq 63.5\), \texttt{sofa\_score} \(\geq 5.5\)
& 0.9044 & 0.3226 & 0.0088 & 0.1051 & 0.9109 \\

\texttt{v\_cc\_r5} \(\geq 0.5\), \texttt{age} \(\geq 63.5\), \texttt{sofa\_score} \(< 5.5\)
& -0.2830 & 0.9359 & 0.4867 & 0.0199 & 3.5129 \\

\bottomrule
\end{tabular}
\caption*{\scriptsize\textit{Notes:} The table reports leaf-level BCF-IV estimates for the empirical application. Diagnostics were available for three of the four terminal leaves in the displayed BCF-IV tree. Adjusted \(p\)-values are omitted because they are equal to 1 for all evaluated leaves.}
\end{table}
\subsection*{Additional Figures and Statistics}  \label{sec:Figures}

\begin{table}[!htbp] \centering 
  \caption{Summary statistics for continuous variables for the ICU dataset} 
  \label{tab:summary_cont_ICU} 
\begin{tabular}{@{\extracolsep{5pt}}lccccc} 
\\[-1.8ex]\hline 
\hline \\[-1.8ex] 
Statistic & \multicolumn{1}{c}{N} & \multicolumn{1}{c}{Mean} & \multicolumn{1}{c}{St. Dev.} & \multicolumn{1}{c}{Min} & \multicolumn{1}{c}{Max} \\ 
\hline \\[-1.8ex] 
Age & 12,903 & 64.96 & 17.71 & 18 & 93 \\ 
ICNARC physiological score & 12,903 & 15.10 & 7.34 & 0 & 53 \\ 
National Early Warning Score (NEWS) & 12,903 & 6.24 & 3.10 & 0 & 20 \\ 
SOFA score & 12,903 & 3.14 & 2.18 & 0 & 14 \\ 
\hline \\[-1.8ex] 
\end{tabular} 
\caption*{\footnotesize\textit{Notes:} 
The ICNARC physiological score measures acute illness severity using criteria from the Intensive Care National Audit and Research Centre. The National Early Warning Score (NEWS) summarizes early signs of clinical deterioration based on routinely measured physiological variables. The SOFA score, or Sequential Organ Failure Assessment score, measures the extent of organ dysfunction.}
\end{table}
\begin{table}[htbp]
\centering
\caption{Summary statistics for categorical variables fot the ICU dataset.}
\label{tab:summary_cat_ICU}
\begin{tabular}{@{\extracolsep{5pt}}lcc} 
\\[-1.8ex]\hline 
\hline \\[-1.8ex] 
Variable & No & Yes \\
\hline
Less than 4 ICU beds available & 6026 & 6877 \\
Admitted to an ICU bed & 7890 & 5013 \\
Male & 5894 & 7009 \\
Sepsis diagnosis & 5018 & 7885 \\
Peri-arrest diagnosis & 12256 & 647 \\
Level of care at assessment: Level 0 (normal ward care) & 11210 & 1693 \\
Level of care at assessment: Level 1 (ward care with support) & 4061 & 8842 \\
Level of care at assessment: Level 2 (high dependency unit) & 10738 & 2165 \\
Level of care at assessment: Level 3 (ICU care) & 12774 & 129 \\
Recommended level of care: Level 0 (normal ward care) & 12024 & 879 \\
Recommended level of care: Level 1 (ward care with support) & 5909 & 6994 \\
Recommended level of care: Level 2 (high dependency unit) & 9194 & 3709 \\
Recommended level of care: Level 3 (ICU care) & 11689 & 1214 \\
\hline
\end{tabular}
\begin{flushleft}
\footnotesize
\textit{Notes:} Values report counts by category. For binary indicators, “Yes” denotes category 1 and “No” denotes category 0.
\end{flushleft}
\end{table}

\FloatBarrier

\begin{figure}[htbp]
    \centering
    \caption{Variable importance plots for the DRRF-IV, GRF-IV and BCF-IV.}
    \label{fig:vip_comparison}

    \begin{minipage}[t]{0.55\textwidth}
        \centering
        \includegraphics[width=\linewidth]{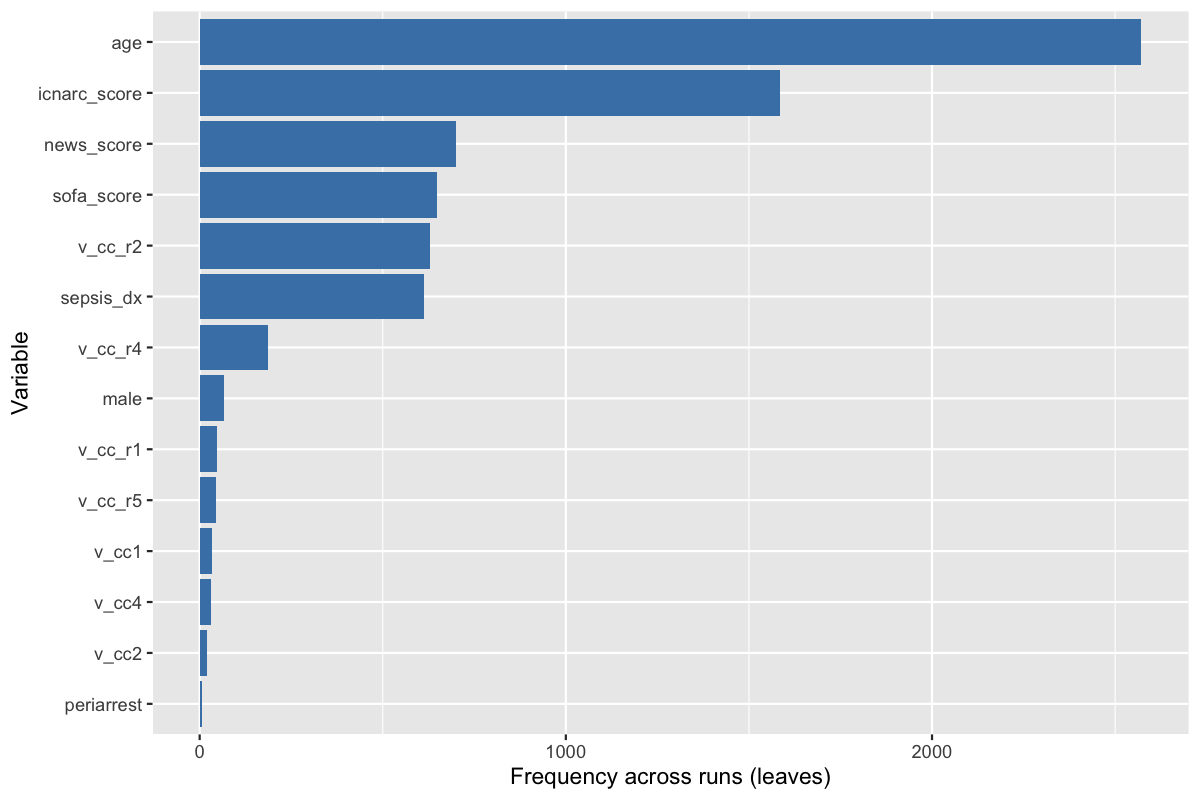}

        \smallskip
        (a) DRRF-IV
    \end{minipage}

    \vspace{0.75em}

    \begin{minipage}[t]{0.55\textwidth}
        \centering
        \includegraphics[width=\linewidth]{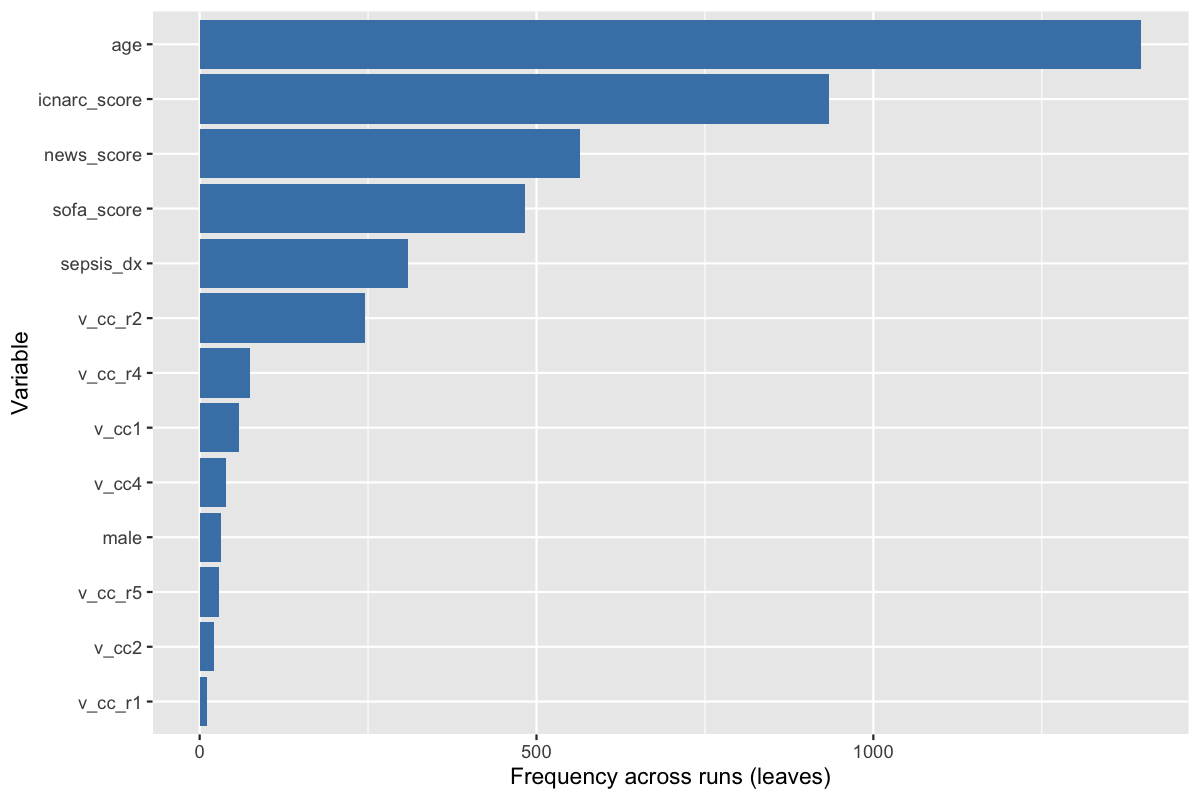}

        \smallskip
        (b) GRF-IV
    \end{minipage}

    \vspace{0.75em}

    \begin{minipage}[t]{0.55\textwidth}
        \centering
        \includegraphics[width=\linewidth]{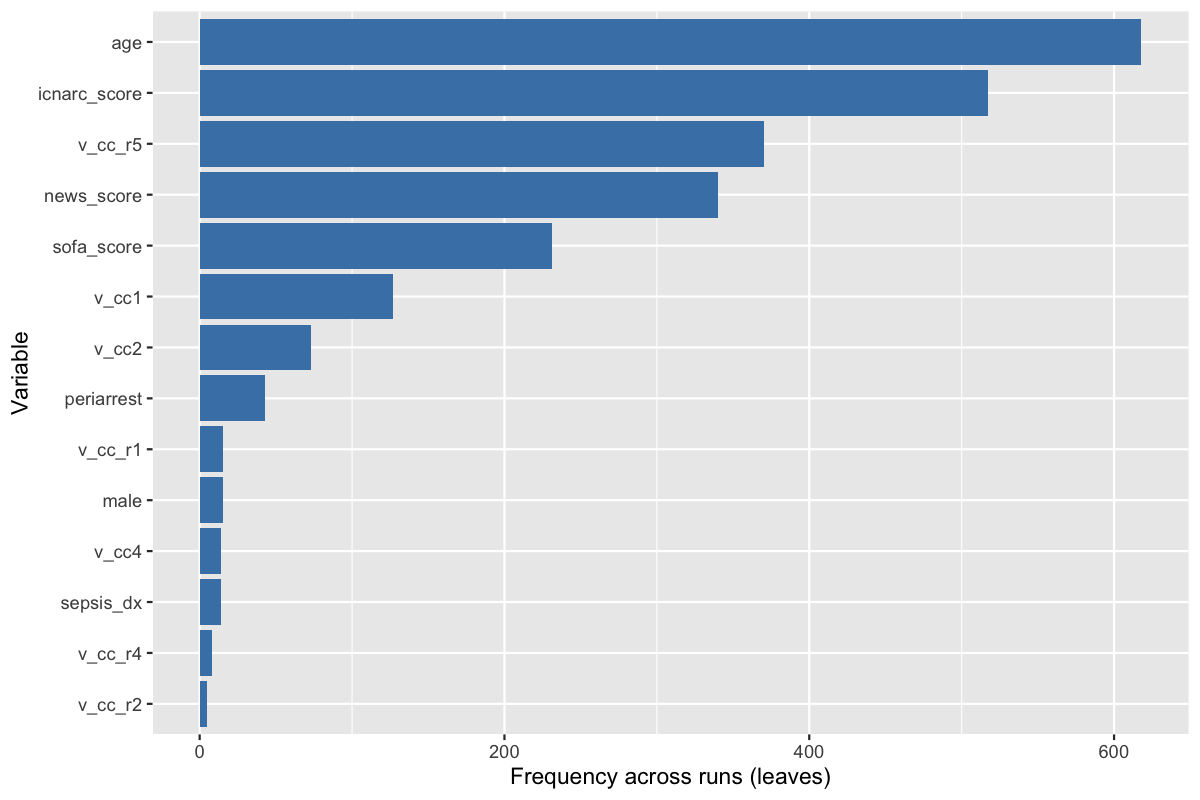}

        \smallskip
        (c) BCF-IV
    \end{minipage}

    \vspace{0.4cm}

    \begin{notes}
    The figures display the variables that appear most frequently in subgroup construction. Because the number of identified subgroups varies across methods, the total number of variable occurrences over 100 simulation runs differs accordingly.
    \end{notes}
\end{figure}

\FloatBarrier
\clearpage
\subsection*{Analysis with \texttt{site} included} 
\label{app:additional_application}
In the dataset of \citet{Keele_IV}, \texttt{site} indicates the hospital the patient attended. The authors addressed hospital-level differences at the design stage by imposing balance on hospital across treatment groups through their matching procedure. Since matching is not the focus of our study, we instead account for site-level heterogeneity by including one-hot-encoded hospital indicators as additional covariates. Thus, in addition to the 18 variables from the main analysis, we include 47 site dummy variables, yielding a total of 65 variables. As in the main analysis, we restrict to observations with $\hat{p}_Z(X_i) \in [0.1, 0.9]$. This avoids observations with very limited overlap in the probability of prompt ICU admission. We conduct the same analysis as in Section \ref{ch:emp_app} using a CART tree with \texttt{cp = 0.001} in the discovery step. Unlike in the Monte Carlo simulation study, however, we set \texttt{minbucket = 65} and \texttt{maxdepth = 5} to avoid subgroups that are too small. Since the discovery sample contains half of the full data, this implies that each terminal node contains at least about 1\% of the observations in the discovery set. 

The overall $\operatorname{CCACE}$ is $-0.1$, which is similar to the result reported by \citet{Keele_IV} for the strong IV setting with refined covariate balance $(-0.189)$. More generally, including \texttt{site} appears to produce smaller trees with fewer terminal nodes. Possible explanations are that the same tuning parameters as in the main analysis may be too restrictive once site indicators are added or that hospital effects may also absorb variation that would otherwise be captured by additional subgroup splits.
All methods use the site indicators as additional covariates for subgroup construction, although their relative importance differs across methods, as illustrated in Figures~\ref{fig:vip_comparison_site_included_1} and~\ref{fig:vip_comparison_site_included_2}.  In DRRF-IV, the first site indicator appears only in sixth position in the variable-frequency ranking, while in BCF-IV it appears in fourth position. In GRF-IV, the first site indicator appears only in tenth position, but site indicators are used most frequently overall. Thus, hospital-level variation appears to contribute to subgroup construction in all three methods, although the extent and form of this contribution differ across approaches.
\begin{figure}[!htbp]
    \centering
    \caption{Decision trees generated by GRF-IV and BCF-IV with hospital indicators included}
    \label{fig:tree_grf_bcf_site_incl}

    \begin{minipage}[t]{0.95\textwidth}
        \centering
        \includegraphics[
            width=\textwidth,
            height=0.36\textheight,
            keepaspectratio
        ]{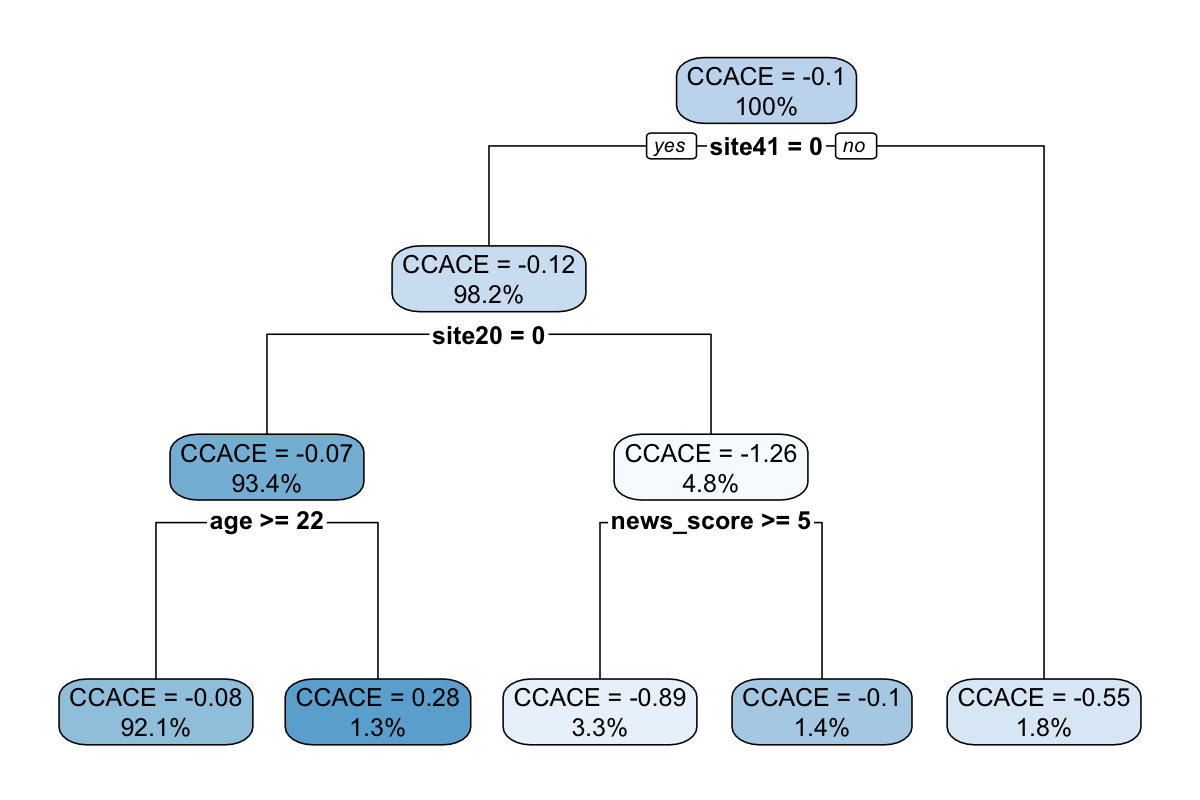}

        \smallskip
        (a) GRF-IV
    \end{minipage}

    \vspace{0.4cm}

    \begin{minipage}[t]{0.95\textwidth}
        \centering
        \includegraphics[
            width=\textwidth,
            height=0.36\textheight,
            keepaspectratio
        ]{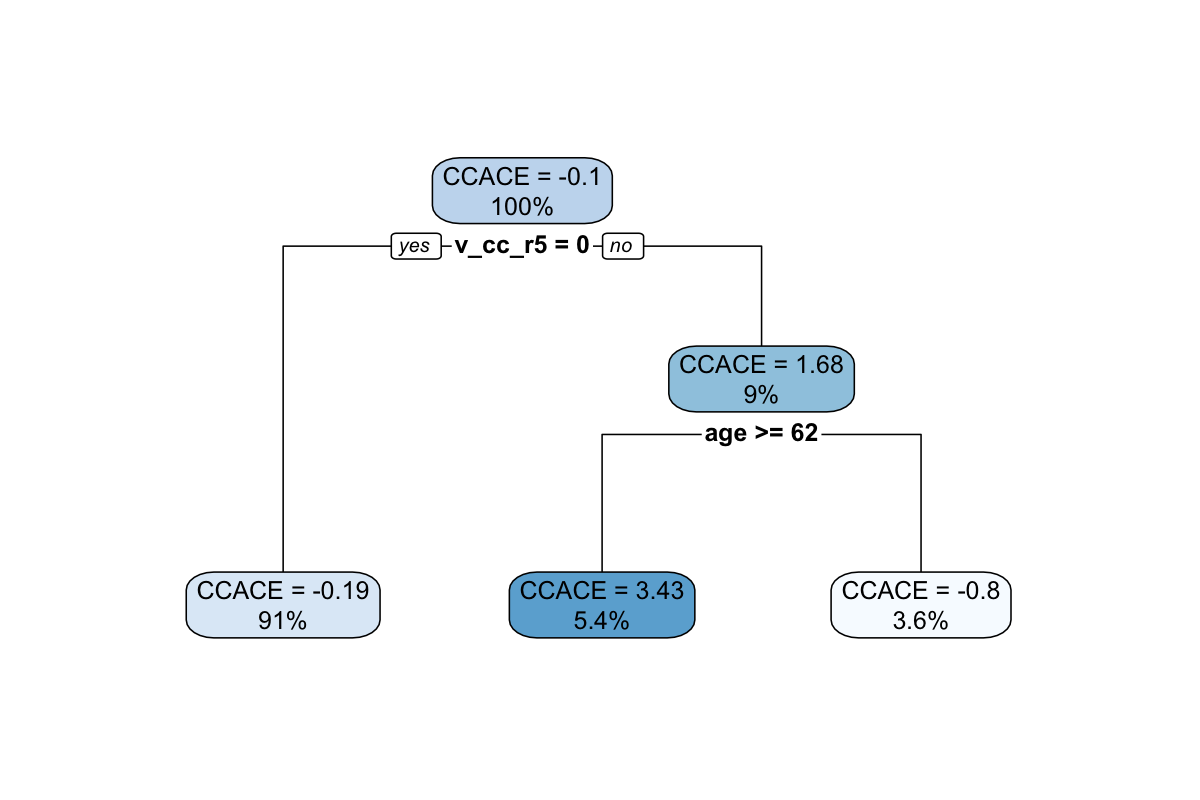}

        \smallskip
        (b) BCF-IV
    \end{minipage}

    \vspace{0.4cm}

    \begin{notes}
    The figure compares the subgroup trees obtained from GRF-IV and BCF-IV when hospital indicators are included in the analysis. GRF-IV identifies 5 terminal subgroups, while BCF-IV identifies 3 terminal subgroups.
    \end{notes}
\end{figure}

\FloatBarrier
\FloatBarrier

\begin{figure}[!htbp]
    \centering
    \caption{Decision tree generated by DRRF-IV with hospital indicators included}
    \label{fig:tree_drrf_site_incl}
    \includegraphics[
        width=\textwidth,
        height=0.85\textheight,
        keepaspectratio
    ]{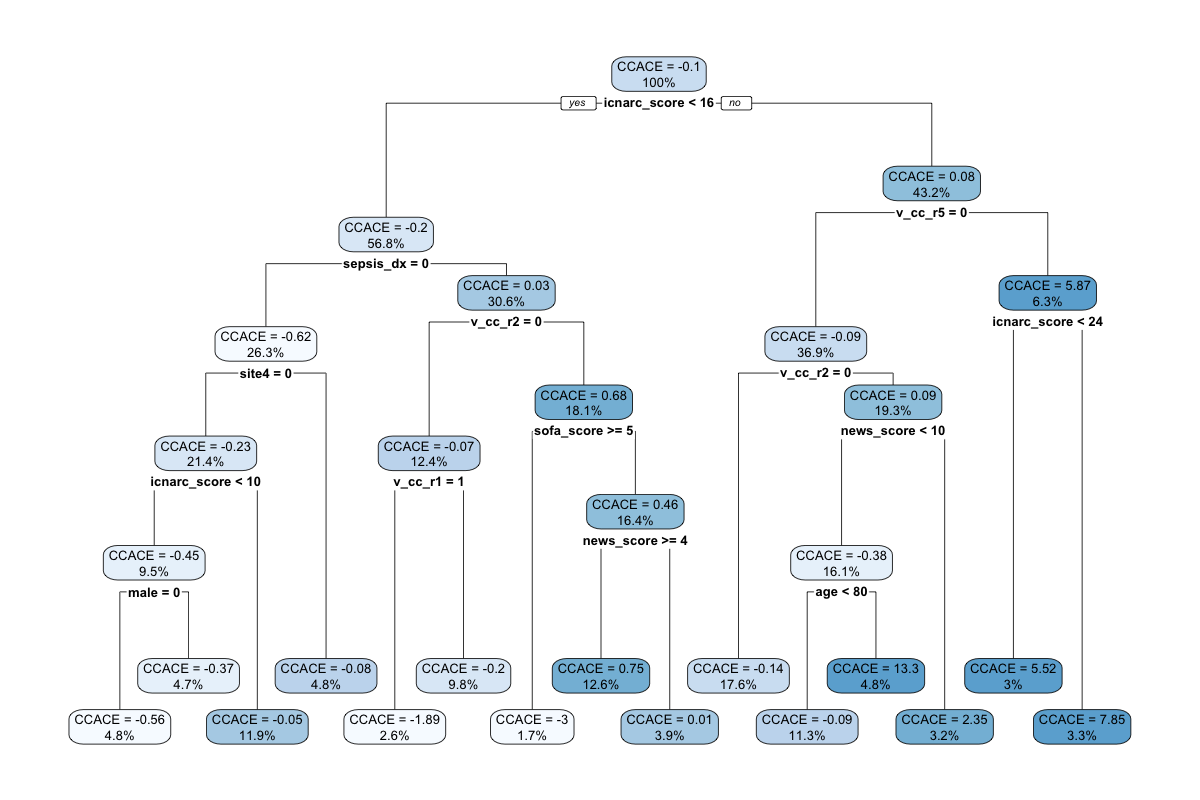}
    \caption*{\scriptsize\textit{Notes:} The figure shows the subgroup tree obtained from DRRF-IV when hospital indicators are included in the analysis. Terminal nodes correspond to discovered patient subgroups, with subgroup-specific CCACE estimates reported in the leaves. DRRF-IV identifies 5 terminal subgroups.}
\end{figure}

\FloatBarrier
\begin{figure}[htbp]
    \centering
    \caption{Variable importance plots for DRRF-IV and GRF-IV with hospital indicators included.}
    \label{fig:vip_comparison_site_included_1}

    \begin{minipage}[t]{0.8\textwidth}
        \centering
        \includegraphics[width=\linewidth]{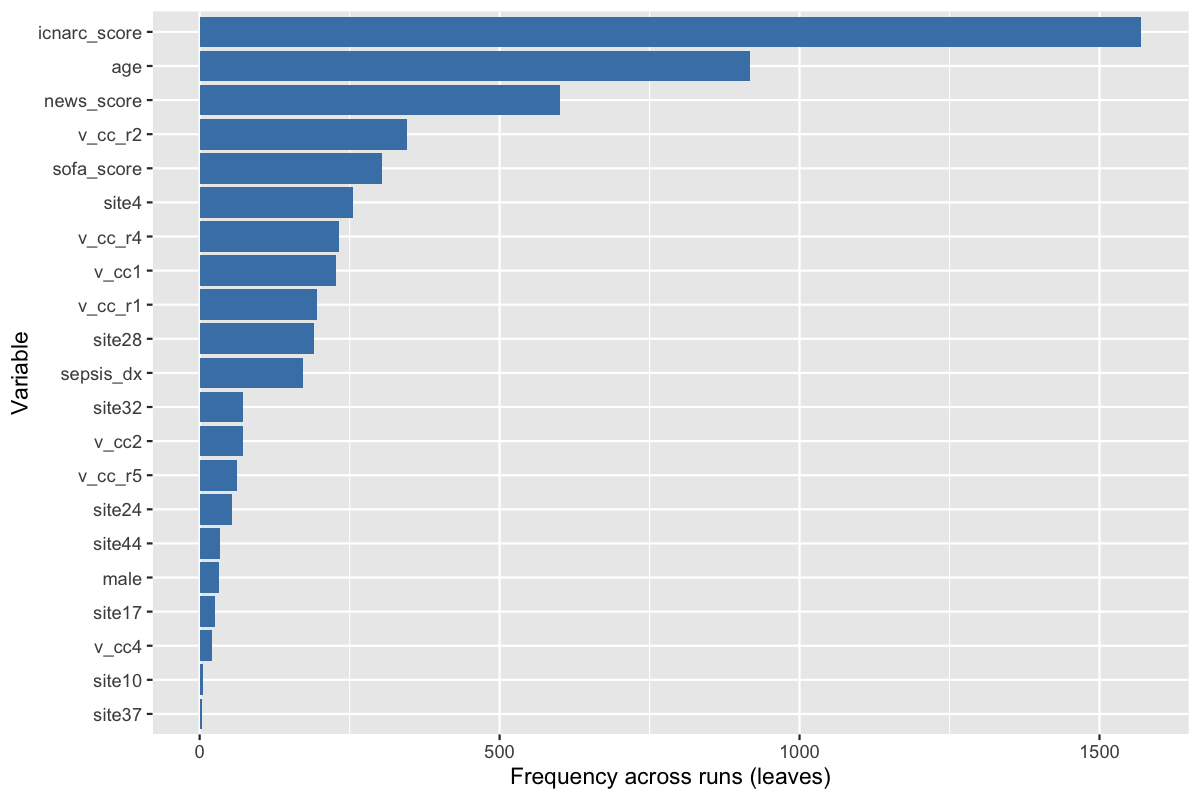}

        \smallskip
        (a) DRRF-IV
    \end{minipage}

    \vspace{0.75em}

    \begin{minipage}[t]{0.8\textwidth}
        \centering
        \includegraphics[width=\linewidth]{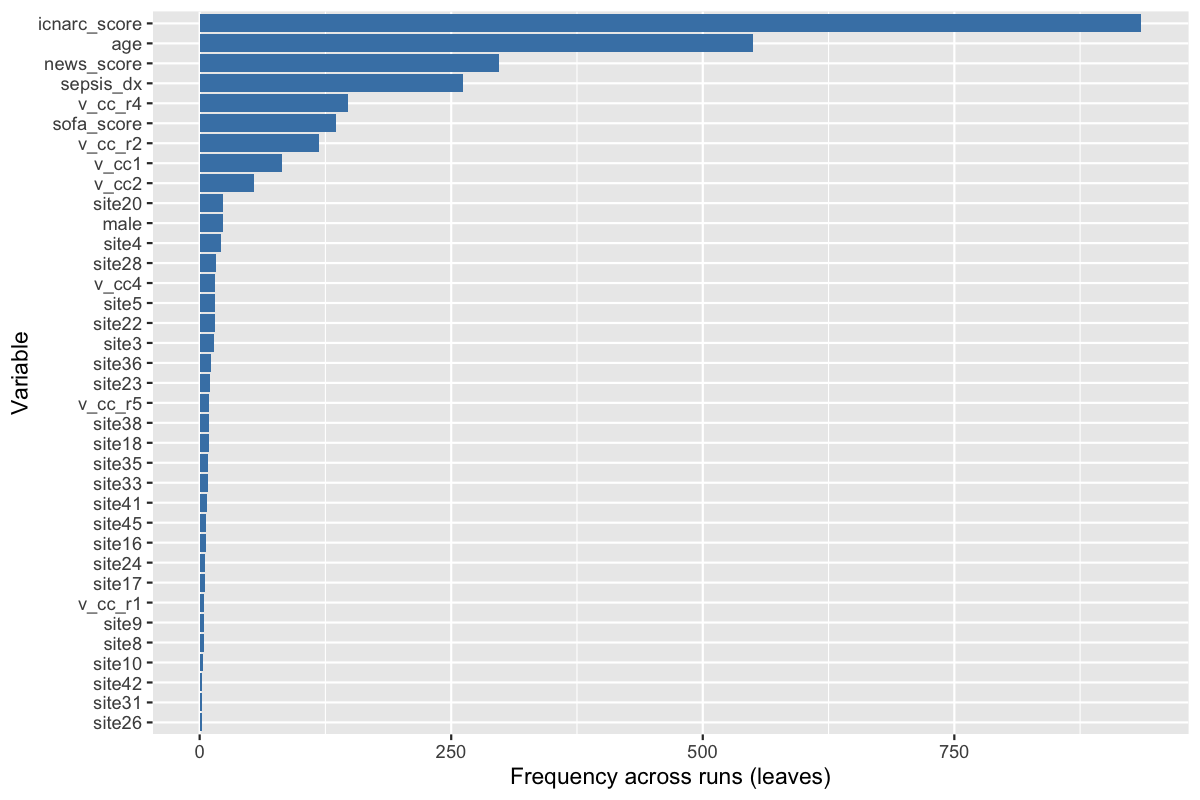}

        \smallskip
        (b) GRF-IV
    \end{minipage}

    \vspace{0.4cm}

    \caption*{\scriptsize\textit{Notes:} 
    The figures display the variables that appear most frequently in subgroup construction.}
\end{figure}

\begin{figure}[htbp]
    \centering
    \caption{Variable importance plot for BCF-IV with hospital indicators included}
    \label{fig:vip_comparison_site_included_2}

    \begin{minipage}[t]{0.8\textwidth}
        \centering
        \includegraphics[width=\linewidth]{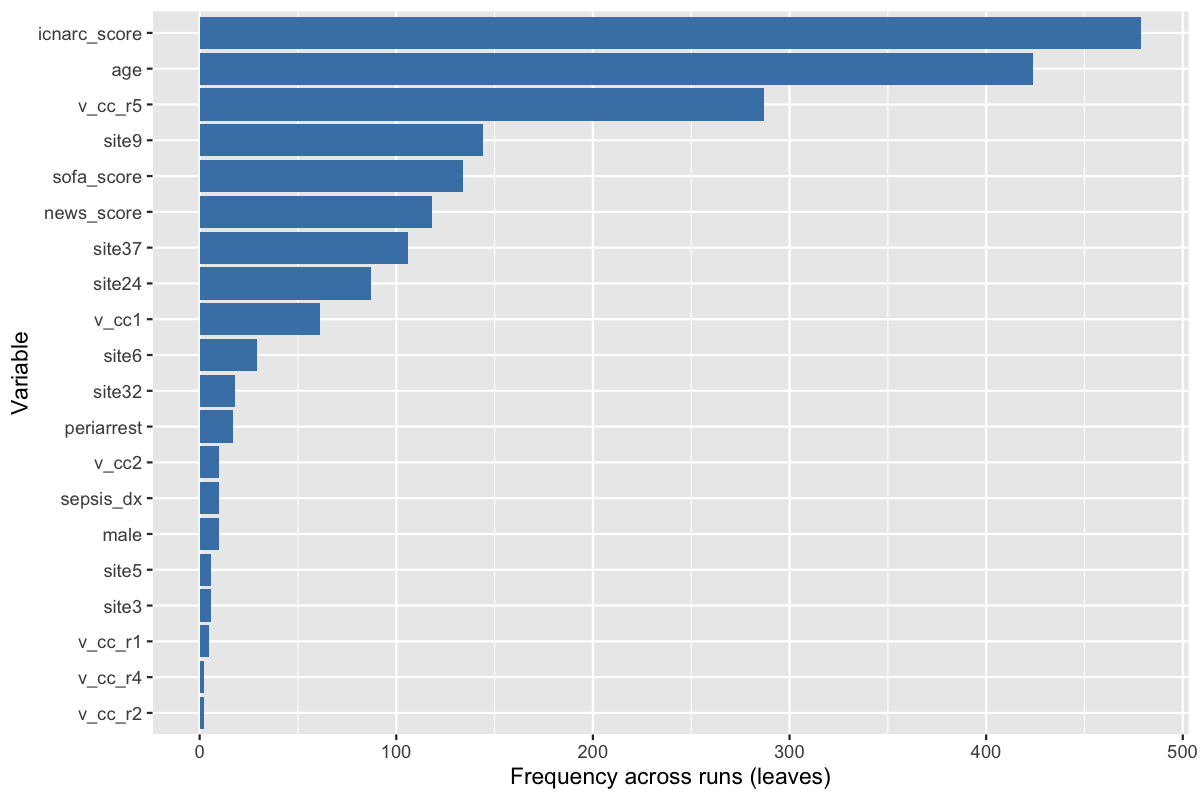}
        \smallskip
        BCF-IV
    \end{minipage}
    \vspace{0.4cm}
    \caption*{\scriptsize\textit{Notes:} 
    The figure displays the variables that appear most frequently in subgroup construction.}
\end{figure}
\FloatBarrier

\FloatBarrier

\subsection{Additional Theory}   \label{sec:Theory}
\subsection*{Doubly Robust Estimation}   \label{sec:Appendix_DR}

The DRRF-IV estimator proposed in Section~\ref{sec:RF with IV} is based on a transformed outcome $Y_i^{\star\text{IV}}$, which satisfies Neyman orthogonality with respect to $m(\cdot)$. It is, however, not doubly robust in the usual augmented inverse-probability weighted (AIPW) sense. Doubly robust estimators \citep{robins1995semiparametric,robins1994estimation,Bang_Robins_2006, funk2011doubly} combine outcome modeling and propensity score weighting \citep{Rosenbaum_Rubin_propensity} and remain consistent if at least one of the two nuisance components is correctly specified. In this section, we quickly outline a doubly robust extension as an alternative to DRRF-IV for estimating $\mathrm{ITT}(x)$. We adapt the notation of \citet{econml} to our IV setting, and define the nuisance functions
\[
m_z(x):=\mathbb E[Y_i\mid Z_i=z,X_i=x], \qquad z\in\{0,1\}. 
\]
With these nuisance functions and $p_Z(X_i)$, we construct the instrumental augmented pseudo-outcome, which is an IV version of the doubly robust estimators from \citep{robins1995semiparametric,robins1994estimation} 
\begin{align*}
Y_i^{\mathrm{AIPW\text{-}IV}}
:=
m_1(X_i)-m_0(X_i)
+
\frac{Z_i}{p_Z(X_i)}\bigl(Y_i-m_1(X_i)\bigr)
-
\frac{1-Z_i}{1-p_Z(X_i)}\bigl(Y_i-m_0(X_i)\bigr).
\end{align*}

\begin{theorem}
\label{thm:aipw_itt}
Under Assumption \ref{assump:overlap_iv}, 
$Y_i^{\mathrm{AIPW\text{-}IV}}$ satisfies
$$\mathbb E\!\left[Y_i^{\mathrm{AIPW\text{-}IV}}\mid X_i=x\right]
=
\mathrm{ITT}(x)$$
\end{theorem}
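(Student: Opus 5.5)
The argument is a direct computation of the conditional expectation, term by term, using the tower property with respect to $(Z_i,X_i)$. Fix $x$ in the support of $X_i$. Assumption~\ref{assump:overlap_iv} guarantees $0<p_Z(x)<1$, so the weights $Z_i/p_Z(X_i)$ and $(1-Z_i)/(1-p_Z(X_i))$ are well defined and integrable. We assume $\mathbb{E}[|Y_i|\mid X_i=x]<\infty$, so that all expectations exist. Since $m_1(X_i)-m_0(X_i)$ is a function of $X_i$ alone, its conditional expectation given $X_i=x$ is $m_1(x)-m_0(x)$. By the definitions of $m_z$ and of $\operatorname{ITT}(x)$ in Equation~\eqref{eq::CCACE}, this already equals $\operatorname{ITT}(x)$. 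The proof therefore reduces to showing that the two augmentation terms have conditional mean zero.

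For the first augmentation term, we condition on $(Z_i,X_i)$:
\[
\mathbb{E}\!\left[\tfrac{Z_i}{p_Z(X_i)}\bigl(Y_i-m_1(X_i)\bigr)\,\middle|\,X_i=x\right]
=\mathbb{E}\!\left[\tfrac{Z_i}{p_Z(x)}\bigl(\mathbb{E}[Y_i\mid Z_i,X_i=x]-m_1(x)\bigr)\,\middle|\,X_i=x\right].
\]
On the event $\{Z_i=1\}$ the inner difference equals $m_1(x)-m_1(x)=0$. On $\{Z_i=0\}$ the factor $Z_i$ vanishes. Hence the whole term is zero. The same argument, applied with $1-Z_i$ and $m_0$, shows that the second augmentation term is also zero. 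An equivalent way to organise the computation is to write the term as $\Pr(Z_i=1\mid X_i=x)\cdot\frac{1}{p_Z(x)}\cdot\bigl(m_1(x)-m_1(x)\bigr)=0$. Combining the three pieces gives $\mathbb{E}[Y_i^{\mathrm{AIPW\text{-}IV}}\mid X_i=x]=m_1(x)-m_0(x)=\operatorname{ITT}(x)$.

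No step here is a real obstacle. The one point that needs care is that the identity holds for the \emph{true} nuisance functions, so that $m_z(x)=\mathbb{E}[Y_i\mid Z_i=z,X_i=x]$ exactly; only positivity, and no IV assumption such as exclusion or monotonicity, is needed to identify the ITT in this sense. If one also wants to record the double-robustness property, the natural follow-up is to redo the computation with arbitrary $\tilde m_z$ and $\tilde p_Z$. The bias then takes the product form
\[
\sum_{z\in\{0,1\}}\pm\bigl(m_z(x)-\tilde m_z(x)\bigr)\bigl(\Pr(Z_i=z\mid X_i=x)/\tilde p_{Z,z}(x)-1\bigr),
\]
where $\tilde p_{Z,1}:=\tilde p_Z$ and $\tilde p_{Z,0}:=1-\tilde p_Z$. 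This bias vanishes whenever either nuisance is correct.
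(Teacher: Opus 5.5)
Your proof is correct and follows the paper's argument. The paper also splits $\mathbb{E}[Y_i^{\mathrm{AIPW\text{-}IV}}\mid X_i=x]$ into $m_1(x)-m_0(x)$ plus the two weighted residual terms, and in its case with correctly specified $m_1,m_0$ shows, as you do, that these terms vanish. It additionally handles the case where only $p_Z$ is correctly specified by a separate computation, and your product-form bias expression (which is correct) covers both cases at once.
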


\begin{proof}
See Appendix \ref{sec:proof}
\end{proof}
As indicated by \citet{econml}, there is a trade-off between double/debiased and doubly robust estimators. In general, doubly robust methods tend to have higher variance, especially when $p_Z(X_i)$ is close to 0 or 1. By contrast, double/debiased estimators rely on orthogonalization with respect to nuisance functions and may be more sensitive when these nuisance components are poorly estimated.

\subsection*{Two-step pragmatic subgroup discovery} 
\label{app:2step_explaination}
The pragmatic two-step subgroup discovery approach \citep{Two_step_pragmatic_subgroup_discovery} separates treatment effect estimation from subgroup identification to improve interpretability. In the first step, any flexible CATE estimator is fitted on the full covariate set to capture potentially complex heterogeneity. In the second step, the estimated unit-level treatment effects are used as the outcome in a CART model based on a reduced set of interpretable covariates, yielding subgroups described by simple decision rules. Optionally, subgroup-specific treatment effects can be summarized within the resulting leaves to facilitate interpretation and comparison across groups. Algorithm \ref{algo:komura_two_step} summarizes the procedure. Note that, the choice $X_i^{\text{interp}}$ is not formally determined, but instead relies on researcher judgment and considerations. In particular, \citet{Two_step_pragmatic_subgroup_discovery} motivates this choice in terms of practical relevance and interpretability, rather than providing a quantified selection criterion.

\FloatBarrier
\vspace{0.5cm}
\begin{algorithm}[h]
\SetAlgoLined
\caption{Pragmatic Two-Step Subgroup Discovery}
\label{algo:komura_two_step}

\KwIn{$N$ training observations $(X_i, D_i, Z_i, Y_i)$}
\KwOut{Interpretable partition $\{\mathcal{L}_j\}_{j=1}^J$ and subgroup summaries}

\textbf{Step 1:} Fit any CATE estimator on the full covariate vector $X_i$ to obtain unit-level treatment effect predictions $\hat{\tau}(X_i)$

\vspace{0.3cm}

\textbf{Step 2:} Using the interpretable covariates $X_i^{\text{interp}}$, fit a CART model, where $\hat{\tau}(X_i)$ are used as the response variable to obtain a partition
\[
\widehat{\Pi}
=
\{\mathcal{L}_1,\dots,\mathcal{L}_J\}
\] of the covariate space.

\vspace{0.3cm}

\textbf{Post-estimation (Optional):}

For each subgroup $\mathcal{L}_j \in \widehat{\Pi}$ compute
\begin{align*}
\hat{\tau}_{\mathcal{L}_j}
&=
\frac{1}{|\{i:D_i=1,X_i\in\mathcal{L}_j\}|}
\sum_{\{i:D_i=1,X_i\in\mathcal{L}_j\}} Y_i^{obs} \\
&-
\frac{1}{|\{i:D_i=0,X_i\in\mathcal{L}_j\}|}
\sum_{\{i:D_i=0,X_i\in\mathcal{L}_j\}} Y_i^{obs}.
\end{align*}

\end{algorithm}
\FloatBarrier
\subsection*{GRF-IV} \label{app:GRF-IV}
Algorithm \ref{algo:Algorithm_GRF_IV} summarizes the implementation of the GRF-IV procedure. 
The method extends the generalized random forest framework to instrumental variable settings by separating tree construction from estimation under an honest sample-splitting scheme. Tree splits are guided by pseudo-outcomes based on local IV effects, and the resulting forest weights are used to solve local IV moment conditions for a target point $x$, yielding an estimate of $\hat{\tau}_{\text{CCACE}}(x)$.
For simplicity, we skip the local-centering step in the algorithmic presentation, but \texttt{grf::instrumental\_forest()} estimates these quantities internally and uses them for local centering. 

\begin{algorithm}[h]
\small
\caption{GRF-IV}
\label{algo:Algorithm_GRF_IV}
\KwIn{$N$ training observations $(X_i, D_i, Z_i,  Y_i)$}
\KwOut{Estimated treatment effect $\hat{\tau}_{\text{CCACE}}(x)$}
\For{$b = 1$ \KwTo $B$}{
    Draw a random subsample of size $s$ without replacement, and partition it into two disjoint sets:  a training set $\mathcal{I}$ of size $\lfloor s/2 \rfloor$ and an estimation set $\mathcal{J}$ of size $\lceil s/2 \rceil$. \\
    
    \textbf{1. Step: Tree Growing (on $\mathcal{I}$)}
    \begin{itemize}
        \item At each parent node $P$, estimate the local plug-in IV effect:
        \[
        \hat{\tau}_P = \frac{\widehat{\operatorname{Cov}}(Y_i, Z_i \mid X_i \in P)}{\widehat{\operatorname{Cov}}(D_i, Z_i \mid X_i \in P)}
        \]
        \item Construct pseudo-outcomes:
       \[
        \rho_i
        =        (Z_i - \bar{Z}_P)
        \left[(Y_i - \bar{Y}_P) - \hat{\tau}_P (D_i - \bar{D}_P)\right]
        \]
        where $\bar{Z}_P$, $\bar{D}_P$, and $\bar{Y}_P$ denote the sample means of $Z_i$, $D_i$, and $Y_i$ within the current node $P$. The splits are then chosen to maximize variation in $\rho_i$.
    \end{itemize}
       \textbf{2. Step: Weighting (on $\mathcal{J}$)}
    \begin{itemize}
        \item For each test point $x$, find the leaf $L_b(x)$ such that $X_i \in L_b(x)$. Then assign weights
        \[
        \alpha_{bi}(x) = \frac{\mathds{1}(X_i \in L_b(x))}{|L_b(x)|}.
        \]
    \end{itemize}
}
\textbf{Step 3: Aggregation and Final Estimation}
\begin{itemize}
    \item Compute aggregated weights: 
    \[
    \alpha_i(x) = \frac{1}{B} \sum_{b=1}^B \alpha_{bi}(x).
    \]
    \item Solve the local moment conditions to obtain $\hat{\tau}_{\text{CCACE}}(x)$ and $\hat{\eta}(x)$:
    \begin{align*}
            \sum_{i=1}^N \alpha_i(x) \cdot Z_i \cdot \left( Y_i - \hat{\eta}(x) - \hat{\tau}_{\text{CCACE}}(x) \cdot D_i \right) = 0 \\
    \sum_{i=1}^N \alpha_i(x) \cdot \left( Y_i - \hat{\eta}(x) - \hat{\tau}_{\text{CCACE}}(x) \cdot D_i \right) = 0
    \end{align*}
\end{itemize}
\end{algorithm}

\clearpage
\section{Proofs} \label{sec:proof}
This section provides the proofs of the theoretical results stated in the main text. We proceed in the order in which the results are introduced. First, we prove Theorem~\ref{thm:dr_itt}, showing that the transformed outcome $Y_i^{\star\mathrm{IV}}$ has conditional expectation equal to the conditional intention-to-treat effect $\operatorname{ITT}(x)$. We then establish the Neyman orthogonality of the corresponding conditional moment with respect to the nuisance function $m(\cdot)$. 

\begin{proof}[\textbf{Proof of Theorem~\ref{thm:dr_itt}}]
First, by definition of \(Y_i^{\star\mathrm{IV}}\) we can rewrite 
\begin{align}
\mathbb{E}[Y_i^{\star\mathrm{IV}} \mid X_i=x]
&=
\mathbb{E}\left[
\frac{(Z_i-p_Z(x))(Y_i-m(x))}{p_Z(x)(1-p_Z(x))}
\,\middle|\, X_i=x
\right]  \notag \\
&=
\frac{1}{p_Z(x)(1-p_Z(x))}
\mathbb{E}\Big[(Z_i-p_Z(x))(Y_i-m(x)) \mid X_i=x\Big]. \label{eq:begin_proof}
\end{align}
Since
\begin{align*}
\mathbb{E}\Big[(Z_i-p_Z(x))(Y_i-m(x)) \mid X_i=x\Big]
&=
\mathbb{E}[Z_i(Y_i-m(x))\mid X_i=x] \\
&\qquad
-\,p_Z(x)\,\underbrace{\mathbb{E}[Y_i-m(x)\mid X_i=x]}_{=\,0},
\end{align*}
it follows that
\begin{align*}
\mathbb{E}\Big[(Z_i-p_Z(x))(Y_i-m(x)) \mid X_i=x\Big]
&=
\mathbb{E}[Z_i(Y_i-m(x))\mid X_i=x] \\
&=
\mathbb{E}[Z_iY_i\mid X_i=x]-m(x)\mathbb{E}[Z_i\mid X_i=x]. \\
\end{align*}
Since $Z_i$ is binary we can rewrite $\mathbb{E}[Z_iY_i\mid X_i=x]$ as follows
\begin{align*}
\mathbb{E}[Z_iY_i\mid X_i=x]&=1\cdot \mathbb{E}[Y_i\mid Z_i=1,X_i=x]\cdot \Pr(Z_i=1\mid X_i=x)\\&+ 0\cdot \mathbb{E}[Y_i\mid Z_i=0,X_i=x]\cdot \Pr(Z_i=0\mid X_i=x) \\
&=\mathbb{E}[Y_i\mid Z_i=1,X_i=x]\cdot p_Z(x).
\end{align*}
And therefore 
\begin{samepage}
\begin{align*}
\mathbb{E}\Big[(Z_i-p_Z(x))(Y_i-m(x)) \mid X_i=x\Big] &=p_Z(x)\,\mathbb{E}[Y_i\mid Z_i=1,X_i=x]-p_Z(x)m(x) \\
&=
p_Z(x)\Big(\mathbb{E}[Y_i\mid Z_i=1,X_i=x]-m(x)\Big).
\end{align*}
\end{samepage}
Substituting this into \eqref{eq:begin_proof}, we obtain
\begin{equation} \label{eq:numerator}
\mathbb{E}[Y_i^{\star\mathrm{IV}}\mid  X_i=x]
=
\frac{\mathbb{E}[Y_i\mid Z_i=1, X_i=x]-m(x)}{1-p_Z(x)}.
\end{equation}

Using
\[
m(x)
=
p_Z(x)\mathbb{E}[Y_i\mid Z_i=1, X_i=x]
+
(1-p_Z(x))\mathbb{E}[Y_i\mid Z_i=0, X_i=x],
\]
we can rewrite the numerator of \eqref{eq:numerator} \[
\mathbb{E}[Y_i\mid Z_i=1, X_i=x]-m(x)
=
(1-p_Z(x))
\Big(
\mathbb{E}[Y_i\mid Z_i=1,X_i=x]-\mathbb{E}[Y_i\mid Z_i=0, X_i=x]
\Big).
\]
Finally,
\[
\mathbb{E}[Y_i^{\star\mathrm{IV}}\mid  X_i=x]
=
\mathbb{E}[Y_i\mid Z_i=1, X_i=x]-\mathbb{E}[Y_i\mid Z_i=0, X_i=x]
=
\operatorname{ITT}(x),
\]
which completes the proof. 
\end{proof}
\begin{proof}[\textbf{Proof of Lemma~\ref{lem:orthogonality}}]
Let \(x\) be a fixed test point and $m_\lambda(\cdot)$ a perturbation given by
\[
m_\lambda(\cdot)=m(\cdot)+\lambda h(\cdot),
\]
where \(h(\cdot)\) is an arbitrary perturbation. Then
\[
\psi_i(\theta(x),m_\lambda,p_Z)
=
\frac{(Z_i-p_Z(X_i))(Y_i-m(X_i)-\lambda h(X_i))}
     {p_Z(X_i)(1-p_Z(X_i))}
-\theta(x).
\]
The corresponding conditional moment is
\[
\Psi(\theta(x),m_\lambda,p_Z)
=
\mathbb{E}\!\left[
\frac{(Z_i-p_Z(X_i))(Y_i-m(X_i)-\lambda h(X_i))}
     {p_Z(X_i)(1-p_Z(X_i))}
\Bigm| X_i=x
\right]
-\theta(x).
\]
Since we condition on \(X_i=x\), the functions \(p_Z(X_i)\), \(m(X_i)\), and \(h(X_i)\) become \(p_Z(x)\), \(m(x)\), and \(h(x)\), respectively. Hence,
\[
\Psi(\theta(x),m_\lambda,p_Z)
=
\mathbb{E}\!\left[
\frac{(Z_i-p_Z(x))(Y_i-m(x)-\lambda h(x))}
     {p_Z(x)(1-p_Z(x))}
\Bigm| X_i=x
\right]
-\theta(x).
\]
Differentiating with respect to \(\lambda\) gives
\begin{equation} \label{eq:differentating}
    \frac{\partial}{\partial\lambda}
\Psi(\theta(x),m_\lambda,p_Z)
=
-
\mathbb{E}\!\left[
\frac{(Z_i-p_Z(x))h(x)}
     {p_Z(x)(1-p_Z(x))}
\Bigm| X_i=x
\right].
\end{equation}
Evaluating at $\lambda=0$ does not change \eqref{eq:differentating}, since it does not depend on \(\lambda\). Because \(h(x)\), \(p_Z(x)\), and \(1-p_Z(x)\) are fixed conditional on \(X_i=x\) and by definition of $p_Z(x)$ it follows that
\[-
\mathbb{E}\!\left[
\frac{(Z_i-p_Z(x))h(x)}
     {p_Z(x)(1-p_Z(x))}
\Bigm| X_i=x
\right]=0.
\]
Therefore,
\[
\left.
\frac{\partial}{\partial\lambda}
\Psi(\theta(x),m+\lambda h,p_Z)
\right|_{\lambda=0}
=
0.
\]
Thus, the conditional moment functional is Neyman--orthogonal with respect to \(m(\cdot)\).

\end{proof}

\begin{proof}[\textbf{Proof of Theorem~\ref{thm:aipw_itt}}]
We obtain
\begin{align} \label{eq:AIPW-IV}
\mathbb E\!\left[Y_i^{\mathrm{AIPW\text{-}IV}}\mid X_i=x\right]
&=
m_1(x)-m_0(x)
+
\mathbb E\!\left[\frac{Z_i}{p_Z(x)}\bigl(Y_i-m_1(x)\bigr)\mid X_i=x\right]  \notag \\
&\qquad
-
\mathbb E\!\left[\frac{1-Z_i}{1-p_Z(x)}\bigl(Y_i-m_0(x)\bigr)\mid X_i=x\right].
\end{align}
If $p_Z(x)$ is correctly specified, then for the third term in Equation \eqref{eq:AIPW-IV} we have 
\begin{align*}
    \mathbb E\!\left[\frac{Z_i}{p_Z(x)}\bigl(Y_i-m_1(x)\bigr)\mid X_i=x\right]&= \mathbb E\!\left[\frac{Z_iY_i}{p_Z(x)}\mid X_i=x\right] 
-
\mathbb E\!\left[\frac{Z_i m_1(x)}{p_Z(x)}\mid X_i=x\right], \\
&= \frac{1}{p_Z(x)}\mathbb E\!\left[ Y_i Z_i\mid  X_i=x\right] -m_1(x)\frac{\mathbb E[Z_i\mid X_i=x]}{p_Z(x)}, \\
&=\mathbb E\!\left[Y_i \mid Z_i=1,X_i=x\right] -m_1(x).
\end{align*}
Analogously, for the last term in \eqref{eq:AIPW-IV} we have 
\begin{align*}
\mathbb E\!\left[\frac{1-Z_i}{1-p_Z(x)}\bigl(Y_i-m_0(x)\bigr)\mid X_i=x\right] =\mathbb E[Y_i\mid Z_i=0,X_i=x]-m_0(x).
\end{align*}
Substituting these into Equation~\eqref{eq:AIPW-IV} yields
 $$\mathbb E\!\left[Y_i^{\mathrm{AIPW\text{-}IV}}\mid X_i=x\right]
= \mathbb E\!\left[Y_i \mid Z_i=1,X_i=x\right] - \mathbb E[Y_i\mid Z_i=0,X_i=x]
=\mathrm{ITT}(x).$$
If instead $m_1(x)$ and $m_0(x)$ are correctly specified, then the residual terms satisfy
\[
\mathbb E[Y_i-m_1(x)\mid Z_i=1,X_i=x]=0,
\qquad
\mathbb E[Y_i-m_0(x)\mid Z_i=0,X_i=x]=0.
\]
Therefore, the third term in \eqref{eq:AIPW-IV} becomes 
\[ \frac{1}{p_Z(x)}
\mathbb E\!\left[Z_i\bigl(Y_i-m_1(x)\bigr)\mid X_i=x\right]
=
\frac{1}{p_Z(x)}p_Z(x)\,
\mathbb E[Y_i-m_1(x)\mid Z_i=1,X_i=x]
=0.\]
Analogously, the last term in \eqref{eq:AIPW-IV} also vanishes. 
Hence, 
$$\mathbb E\!\left[Y_i^{\mathrm{AIPW\text{-}IV}}\mid X_i=x\right]=m_1(x)-m_0(x)=\mathrm{ITT}(x).$$
Therefore, the $Y_i^{\mathrm{AIPW\text{-}IV}}$ is doubly robust in the sense that it identifies $\mathrm{ITT}(x)$ if either the propensity score or the outcome regressions are correctly specified.
\end{proof}
\section{Algorithmic and Implementation Details} 
\label{app:implementation_TSIV}
\subsection{Implementation in \texttt{R}}
All simulations were performed in \texttt{R} using RStudio \citep{r_2025, rstudio_2025}.
The corresponding implementation is available at \url{https://github.com/karo93/TwoStepForestIV}.

For the simulation study we used the packages \texttt{grf} \citep{grf_package}, \texttt{bartCause} \citep{bartCause}, and \texttt{SparseBCF} \citep{caron2022shrinkage}. Across methods, the only difference lies in the estimation of $\hat{\tau}^{\mathrm{dis}}_{\mathrm{CCACE}}(x)$. The benchmark BCF-IV algorithm builds on the implementation of \citet{bargaglibcf}, however, we modified the \texttt{bcf\_iv}-function to improve computational efficiency and add several numerical safeguards, as the original implementation occasionally failed at these steps or returned missing values.

First, observations with propensity scores close to zero and one are excluded to ensure overlap, similar to \citet{stoffi_gnecco_CTIV}. Second, in the discovery step, we exclude observations for which the estimated compliance rate is not finite or is smaller than a threshold \(\varepsilon=0.001\). The discovery-stage CCACE is therefore computed only for observations satisfying
\begin{align*}
   \widehat{\pi}_C(X_i) &> \varepsilon,
 \intertext{for}
 \widehat{\tau}^{\operatorname{dis}}_{\operatorname{CCACE}}(X_i)
&=
\frac{\widehat{\operatorname{ITT}}(X_i)}
{\widehat{\pi}_C(X_i)}.
\end{align*}

This avoids using unstable ratios generated by zero or near-zero estimated compliance rates during subgroup discovery. Third, node-wise IV regressions are conducted only when the corresponding subgroup contains a sufficient number of observations that show variation in both treatment receipt and instrument assignment. In addition, regressions with rank-deficient model matrices or zero residual degrees of freedom are skipped. In each Monte Carlo replication, all methods use a 50/50 split to construct discovery and inference subsamples. The subgroup tree is estimated with \texttt{rpart} at a complexity parameter \texttt{cp = 0.001}, to construct a sufficiently deep tree. For the forest-based procedures, we use \texttt{num.trees = 2000}. In the BCF-IV implementation, Bayesian first-stage estimators are run with 500 burn-in draws and 500 posterior draws for continuous outcomes. All remaining tuning parameters are left at their default values, as a parameter tuning analysis is beyond the scope of this paper. Furthermore, in our BCF-IV implementation, $\text{ITT}(x)$ is estimated using \texttt{SparseBCF::SparseBCF()}. Setting \texttt{sparse = FALSE} produces the same results as \texttt{bartCause::bartc}, but with lower computational cost. A runtime comparison is reported below in Section \ref{app:runtime}.

\subsection{Estimation of Nuisance Functions}
\label{app:estimation_nuisance}
The nuisance functions are estimated on the discovery sample. Specifically, we estimate $$m(x)=\mathbb{E}[Y_i\mid X_i=x]$$
using a regression forest via the $\texttt{randomForest::randomForest()}$ function, and $$p_Z(x)=\operatorname{Pr}(Z_i=1\mid X_i=x)$$
using logistic regression via \texttt{glm()}. Let \(\widehat m(X_i)\) and \(\widehat p_Z(X_i)\) denote the corresponding fitted values. These estimates are then used to construct the transformed outcome
\[
Y_i^{\star\text{IV}}
=
\frac{(Z_i-\widehat p_Z(X_i))(Y_i-\widehat m(X_i))}
{\widehat p_Z(X_i)\bigl(1-\widehat p_Z(X_i)\bigr)},
\qquad i \in \mathcal{I}_{\mathrm{dis}}.
\]
Note that in the implementation, $\widehat m(X_i)$ and $\widehat p_Z(X_i)$ are obtained by in-sample prediction on $\mathcal{I}_{\mathrm{dis}}$. We also considered to estimate $m(x)$ using the $\texttt{grf::regression\_forest()}$ function with honest splitting to mitigate overfitting. This yielded very similar results, suggesting that the effect of this modification is negligible in our setting. Nevertheless, the resulting pseudo-outcome may still be affected by bias from nuisance estimation. Cross-fitting or out-of-bag prediction could further reduce this bias, but incorporating these refinements is beyond the scope of this work.
In addition, we estimate the conditional compliance rate
\[
\pi_C(x)=\mathbb E[D_i\mid Z_i=1,X_i=x]-\mathbb E[D_i\mid Z_i=0,X_i=x],
\]
using an honest causal forest via $\texttt{grf::causal\_forest()}$ with \(D_i\) as the outcome, \(Z_i\) as the treatment indicator, and \(X_i\) as covariates. Let \(\widehat \pi_C(x)\) denote the resulting estimate. The estimated conditional complier average causal effect is then given by
\[
\widehat\tau_{\mathrm{CCACE}}(x)
=
\frac{\widehat{\mathrm{ITT}}(x)}{\widehat\pi_C(x)}.
\]

\subsection{Computational Runtime}
\label{app:runtime}
To assess computational performance, we benchmark runtime in each Monte Carlo simulation by recording elapsed wall-clock time for the full end-to-end procedure using \texttt{proc.time()} in \texttt{R}. All computations were conducted on a Mac mini with 32GB RAM, an Apple M2 Pro chip, and macOS 14.5. For each replicate, we store both the total runtime and the subroutine (sample splitting, propensity score estimation, first-stage estimation, tree construction, and node-wise IV regressions) runtimes. 
Furthermore, runtime is evaluated on a subset of the simulation designs introduced in Section \ref{ch:sim_study}, focusing on sample sizes $N \in \{2000,5000,10000\}$, $MC = 100$ and effect size $k \in \{0.5, 4\}$, for both slight and strong heterogeneity settings.

We first compare the original BCF-IV algorithm with our modified implementation based on \texttt{SparseBCF}. Since runtime optimization is not the main contribution of this paper, this comparison is restricted to a single representative design, the strong heterogeneity setting with $k = 0.5$ and $N = 2000$. Figure~\ref{fig:runtime_bcf_compare} shows that the modified version achieves lower runtime and markedly reduced dispersion, suggesting that it is both computationally faster and more stable across simulation replications. In all subsequent runtime comparisons with the non-Bayesian methods, we therefore use the modified BCF-IV implementation.
\begin{figure}[H]
  \centering
      \caption{Runtime comparison between the original and the modified BCF-IV.}
  \label{fig:runtime_bcf_compare}
    \includegraphics[width=0.6\textwidth]{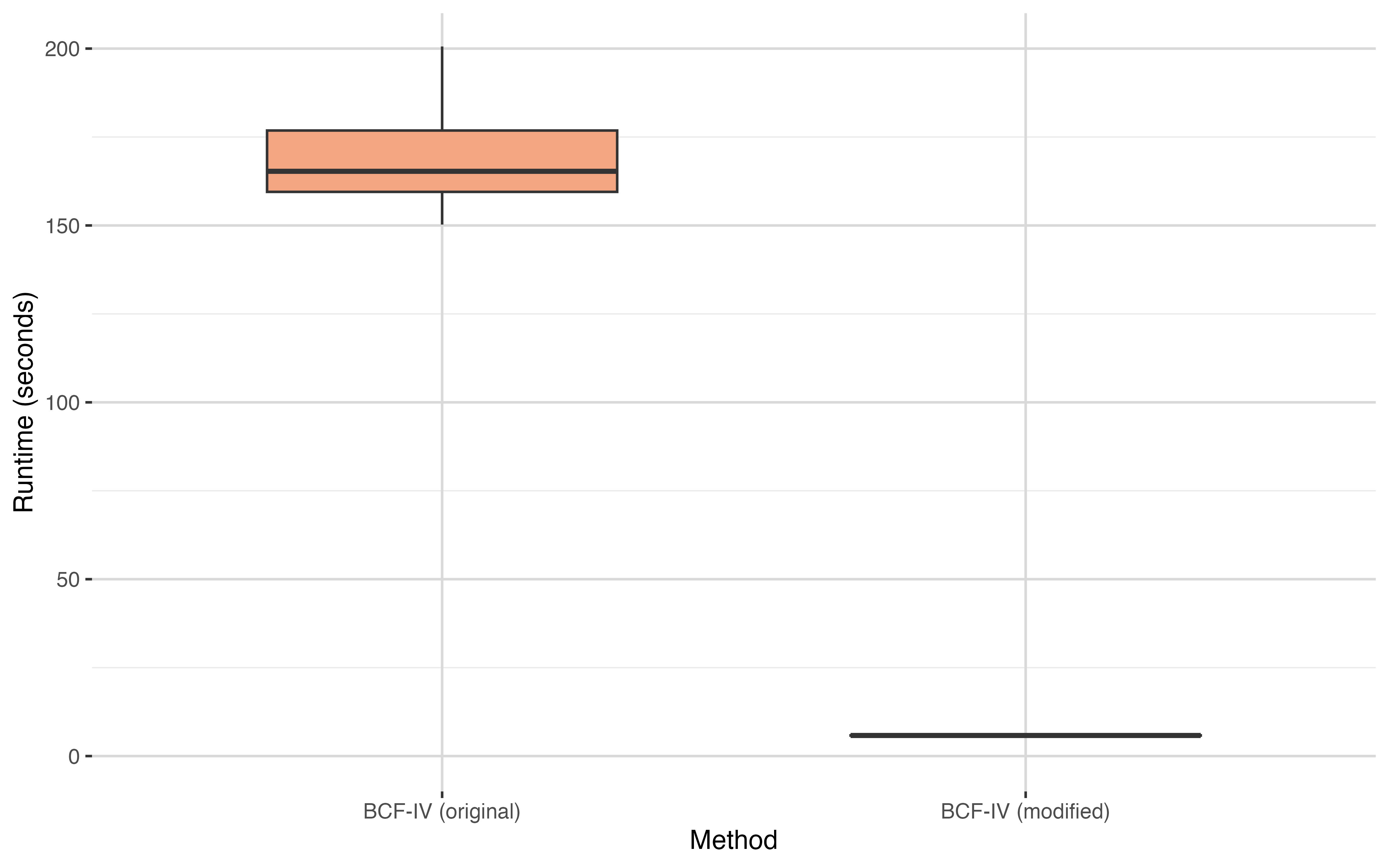}
\caption*{\footnotesize\textit{Notes:} Runtime is measured in seconds. The orange boxplot is the BCF-IV algorithm, whereas the red boxplot shows the modified implementation. Lower values indicate better computational performance.
 }\end{figure}

Figure~\ref{fig:runtime_all_compare} summarizes the runtime comparison results. Across all scenarios, the runtime comparison yields broadly similar results. Despite using the modified BCF-IV, it still remains by far the slowest method, with runtimes approximately ten times higher compared to GRF-IV. In contrast, GRF-IV is not only the fastest approach, but also exhibits the smallest increase in runtime as the sample size grows. For BCF-IV, the main driver of the high computational cost is the Bayesian estimation of  $\text{ITT}(x)$, whereas for DRRF-IV, the higher runtime is mainly driven by the modified splitting rule. 

\begin{figure}[!htbp]
  \centering
      \caption{Runtime comparison between for different dataset sizes.}
  \label{fig:runtime_all_compare}
    \includegraphics[width=0.75\textwidth]{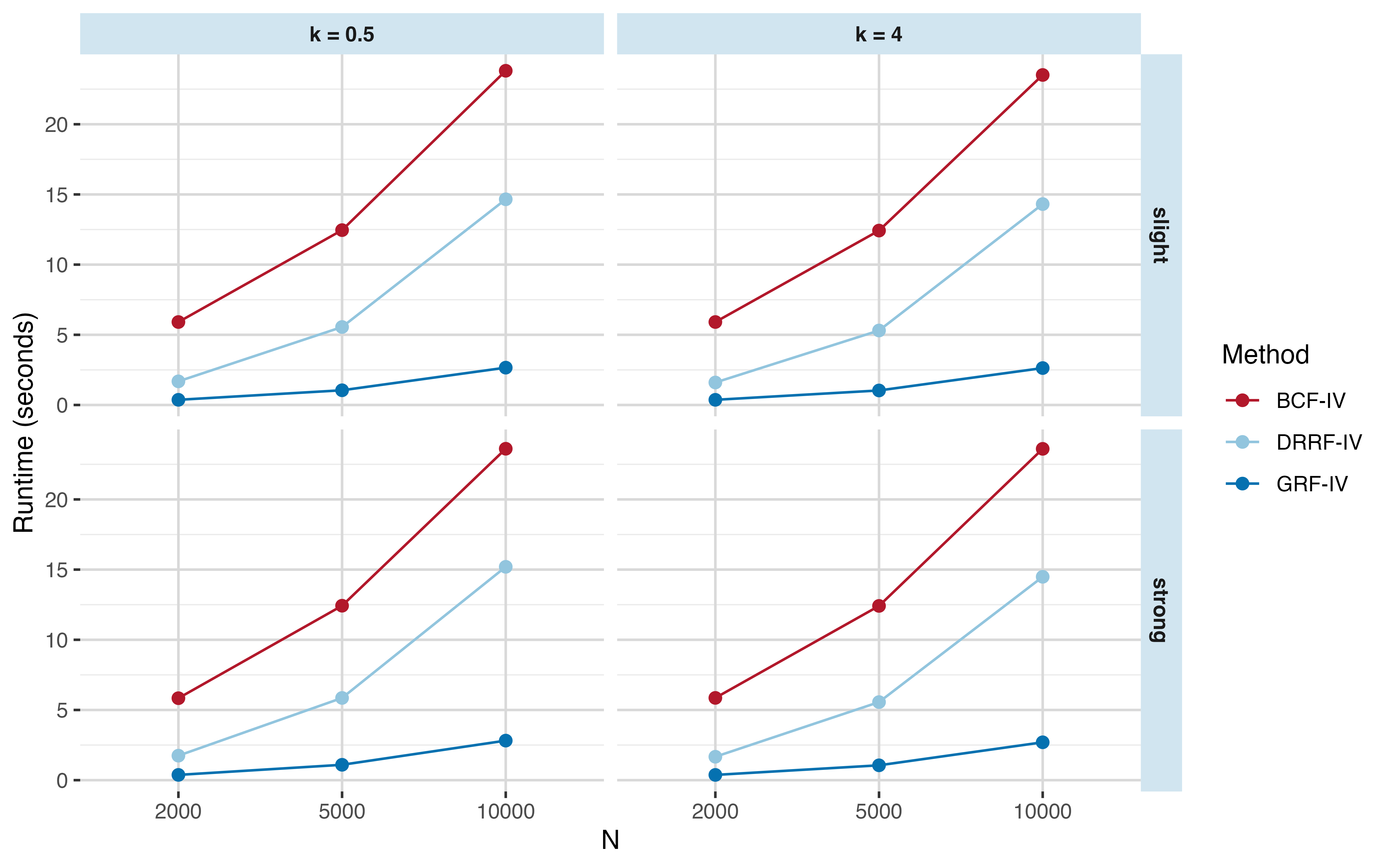}
\caption*{\footnotesize\textit{Notes:} Average runtime is measured in seconds. Lower values indicate better computational performance.
 }\end{figure}

\FloatBarrier
\begin{table}[!htbp]
\caption{Average runtime comparison across methods.}
\label{tab:runtime_comparison}
\centering
\setlength{\tabcolsep}{8pt}
\resizebox{.98\linewidth}{!}{
\begin{tabular}{l r rrr @{\hskip 1.2cm} rrr}
\\[-1.8ex]\hline
\hline \\[-1.8ex]
\multicolumn{2}{c}{ } & \multicolumn{6}{c}{heterogeneity scenario} \\
\cmidrule(l{3pt}r{3pt}){3-8}
\multicolumn{2}{c}{ } & \multicolumn{3}{c}{strong} & \multicolumn{3}{c}{slight} \\
\cmidrule(l{3pt}r{3pt}){3-5} \cmidrule(l{3pt}r{3pt}){6-8}
Method & $k$ & $N=2000$ & $N=5000$ & $N=10000$ & $N=2000$ & $N=5000$ & $N=10000$ \\
\midrule
\multirow{2}{*}{BCF-IV}
 & 0.5 & 5.839 (0.084) & 12.427 (0.097) & 23.610 (0.806) & 5.911 (0.104) & 12.458 (0.091) & 23.810 (1.042) \\
 & 4.0 & 5.861 (0.072) & 12.413 (0.148) & 23.603 (0.275) & 5.911 (0.094) & 12.426 (0.127) & 23.511 (0.572) \\
\addlinespace
\multirow{2}{*}{DRRF-IV}
 & 0.5 & 1.745 (0.085) & 5.857 (0.157) & 15.198 (0.353) & 1.684 (0.059) & 5.556 (0.122) & 14.654 (0.247) \\
 & 4.0 & 1.667 (0.079) & 5.562 (0.147) & 14.487 (0.723) & 1.602 (0.066) & 5.307 (0.099) & 14.312 (0.521) \\
\addlinespace
\multirow{2}{*}{GRF-IV}
 & 0.5 & 0.375 (0.026) & 1.095 (0.046) & 2.817 (0.067) & 0.369 (0.022) & 1.048 (0.031) & 2.660 (0.106) \\
 & 4.0 & 0.374 (0.023) & 1.058 (0.044) & 2.697 (0.029) & 0.368 (0.026) & 1.035 (0.039) & 2.631 (0.074) \\
\bottomrule
\end{tabular}}
\caption*{\footnotesize\textit{Notes:} Entries report mean total runtime in seconds, with standard deviations in parentheses, over \(MC=100\) simulation replications. Lower values indicate better computational performance.}
\end{table}
\FloatBarrier
\end{document}